\documentclass[a4paper,fleqn]{cas-sc}
\usepackage{longtable}
\usepackage{enumitem}
\usepackage{appendix}

\usepackage[authoryear]{natbib}
\usepackage{color}
\usepackage{amsmath}
\usepackage{amsthm}
\usepackage[linesnumbered,ruled]{algorithm2e}
\usepackage{bm}

\def\tsc#1{\csdef{#1}{\textsc{\lowercase{#1}}\xspace}}
\tsc{WGM}
\tsc{QE}
\tsc{EP}
\tsc{PMS}
\tsc{BEC}
\tsc{DE}

\newtheorem{proposition}{Proposition}
\newtheorem{definition}{Definition}

\newtheorem{theorem}{Theorem}
\newtheorem{lemma}{Lemma}
\newtheorem{corollary}{Corollary}
\newtheorem{assumption}{Assumption}

\begin{document}
\let\WriteBookmarks\relax
\def\floatpagepagefraction{1}
\def\textpagefraction{.001}

\shorttitle{Stability-Preserving Com-MP}
\shortauthors{Zhu and Tan}

\title [mode = title]{\textbf{Stability-Preserving Reinforcement Learning for Heterogeneity-Aware Composite Max-Pressure Traffic Signal Control}}

\author[1]{Jichen Zhu}[orcid=0009-0003-1271-4166]
\ead{jichen.zhu@nus.edu.sg}
\credit{Writing - review editing, Writing - original draft, Visualization, Validation, Software, Methodology, Resources}

\author[2]{Chaopeng Tan}[orcid=0000-0003-4737-5304]
\ead{chaopeng.tan@tu-dresden.de}
\cormark[1]
\credit{Writing - review editing, Writing - original draft, Methodology, Investigation, Formal analysis, Resources, Supervision, Conceptualization}

\affiliation[1]{organization={Department of Civil and Environmental Engineering, National University of Singapore}, 
        city={Singapore},
        adressline={1 Engineering Drive 2,},
        postcode={117576}, 
        country={Singapore}}

\affiliation[2]{organization={Chair of Traffic Process Automation, Technische Universit\"at Dresden},
    addressline={Hettnerstra\ss e 3},
    city={Dresden},
    postcode={01069},
    country={Germany}}
\cortext[cor1]{Corresponding author}

\begin{abstract}
Reinforcement learning (RL) provides a flexible mechanism for adapting traffic signal control to heterogeneous traffic and sensing conditions, but directly learned signal actions do not automatically inherit the analytical properties of structured traffic-control methods. This paper proposes Com-MP, a theory-guided RL framework that learns a context-dependent composition of multiple max-pressure (MP) control preferences while retaining decentralized MP phase selection. The learned policy adapts the relative influence of constituent MP formulations according to local traffic and sensing conditions, allowing data-driven adaptation to operate within a structured traffic-control framework. We show that such intersection-specific and time-varying composition preserves strong queue stability for every demand in the strictly stabilizable region under any admissible non-anticipative composition policy, thereby separating the performance-adaptation role of learning from the stability property of the underlying controller. In a representative implementation combining detector-supported queue-based MP and connected-vehicle-supported travel-time MP, experiments on a real-world 45-intersection network show that Com-MP reduces average delay by 15.6\% and 8.6\% relative to the two constituent controllers, respectively, and by about 7.0\% relative to the best fixed composition. The learned composition responds systematically to local traffic and sensing conditions, remains effective under heterogeneous and degraded sensing, and can be transferred without retraining to an unseen network with an average delay only 2.4\% higher than that of a policy trained directly on the target network. These results show how learning-based adaptation can be introduced within a stability-preserving traffic-control structure, providing a theory-guided pathway toward trustworthy learning-enabled traffic signal control.\footnote{The source code supporting the findings of this study is openly available on GitHub at \url{https://github.com/bryan-zjc/Com-MP}.}
\end{abstract}

\begin{keywords}
Theory-guided reinforcement learning \sep Max-pressure control \sep Queue stability \sep Heterogeneous traffic sensing \sep Trustworthy artificial intelligence
\end{keywords}

\maketitle

\section{Introduction}

Real-time traffic signal control at the network level remains challenging because urban traffic demand is stochastic, spatially coupled, and continuously changing. Centralized control can explicitly optimize network-wide objectives, but it generally requires large-scale prediction and high-dimensional optimization, creating substantial computational, communication, and modeling burdens as the network grows \citep{yan2019network,wang2020optimizing,zhu2025asynchronous}. Max-pressure (MP) control offers a fundamentally different approach. Each intersection selects its signal state using traffic information from adjacent movements, while the resulting decentralized policy can strongly stabilize network queues and achieve maximum throughput for demands in the admissible region \citep{varaiya2013max}. This combination of local implementation, scalability, and network-level stability has made MP an important foundation for large-scale adaptive signal control. Its practical behavior, however, depends critically on how traffic pressure is represented and on the sensing information available to construct that representation.

MP encompasses a family of policies that share the pressure-maximization principle but represent traffic congestion in different ways. Classical queue-based MP uses vehicle accumulation or link occupancy \citep{varaiya2013max}, while position-weighted formulations account for the spatial distribution of vehicles and downstream storage conditions \citep{li2019position}. Delay- and travel-time-based formulations introduce temporal traffic experience through head-of-line delay, cumulative delay, or vehicle travel time \citep{wu2018delay,mercader2020max,liu2022novel,liu2023total,tan2026cvmp}. Other extensions incorporate switching losses, pedestrian demand, transit priority, and related operational considerations \citep{wang2022learning,liu2024max,ahmed2024occ,tan2025transit}. These formulations encode different control preferences. Queue-based pressure responds directly to vehicle accumulation and tends to prioritize heavily loaded movements, whereas delay- and travel-time-based states place increasing emphasis on traffic that has accumulated temporal cost. Their relative suitability can therefore change with the local congestion pattern rather than being determined by a single universally preferable pressure definition.

The usefulness of a pressure formulation also depends on the sensing environment supporting its traffic state. Fixed-location detectors provide persistent aggregate observations at instrumented locations, but practical networks may have incomplete coverage and may experience sensing or communication failures \citep{klein2006traffic}. Detector-supported states at unmonitored locations consequently rely on reconstruction from available counts, traffic-flow conservation, turning information, or related estimation procedures \citep{tan2026privatemp}. Connected vehicles (CVs) provide a complementary sensing source by supplying vehicle-level trajectory and travel-time information beyond fixed detector locations \citep{tan2021cumulative,wang2024traffic,tan2025connected_robust,zhang2026integrated,zhu2025joint}. Their effective observation coverage, however, is penetration-dependent and spatially heterogeneous \citep{wong2019estimation,tan2026cvmp,tan2026privatemp}. Consequently, the information supporting a pressure formulation can become more or less representative as detector availability, CV penetration, and local traffic conditions change.

Traffic-state heterogeneity and sensing heterogeneity therefore interact. A queue-oriented formulation may be particularly informative under strong local accumulation when reliable queue observations are available, whereas a trajectory-based formulation can contribute complementary temporal information when sufficient vehicle-level observations are present. Their relative usefulness can also vary across intersections because congestion patterns and sensing availability are spatially heterogeneous. Most existing MP studies improve a particular pressure formulation and then apply that formulation with a fixed control preference. Heterogeneous sensing can also be fused to construct a more complete estimate of a chosen traffic state; for example, detector and CV information can be integrated to support travel-time-based pressure estimation \citep{tan2026privatemp}. Such approaches improve how a particular pressure state is constructed. A different question arises when several valid MP formulations coexist: \emph{how should their relative influence be adapted as the local traffic state and sensing environment change?}

Reinforcement learning (RL) provides a natural mechanism for such adaptation by learning decision rules through repeated interaction with dynamic traffic environments. RL-based traffic signal control has consequently developed into a broad research area spanning direct phase selection, multi-intersection coordination, and network-level control \citep{noaeen2022rlreview,han2023rltraffic,zhu2026schedule}. Representative approaches learn signal actions or coordination mechanisms directly, with traffic pressure often incorporated into the state, reward, or learning architecture \citep{wei2019presslight,wei2019colight}. At the same time, increasing attention has been given to incorporating transportation structure, reference policies, and operational constraints into learning-based control. Examples include adjusting selected components of an MP controller through learning \citep{wang2022learning}, guiding RL using a model-based reference signal policy \citep{lu2025referencerl}, and imposing operational constraints through safe RL formulations \citep{zhou2025saferl}. More broadly, safe and learning-based control research has emphasized combining data-driven adaptation with explicit control structure or constraints so that desirable system properties are not determined solely by the learned policy \citep{garcia2015safe,brunke2022safe}. For traffic signal control, this distinction is particularly relevant to MP: its appeal lies not only in decentralized decision-making but also in its explicit network-level queue-stability properties \citep{varaiya2013max}. When RL directly determines signal actions, however, the learned policy does not automatically inherit these properties unless they are explicitly retained in the control architecture and established analytically.

This gap motivates a theory-guided role for learning in MP control. Rather than replacing the underlying traffic-control mechanism, learning can adapt selected decisions within a structured controller whose analytical properties remain available. For the problem considered here, this leads to a specific question: \emph{can reinforcement learning adapt among heterogeneous MP control preferences while preserving the network-level queue-stability structure of MP independently of the particular learned policy?} 

Such a separation would allow learning to focus on context-dependent performance adaptation, while the lower-level MP mechanism remains responsible for actual traffic regulation and the associated stability properties.
Motivated by this question, we develop \emph{Com-MP}, a theory-guided RL framework for context-dependent composition of multiple MP controllers. At each signal decision epoch, constituent controllers evaluate candidate phases according to their respective pressure formulations, and a shared RL policy adapts their relative influence using local traffic and sensing observations. Learning therefore determines \emph{which MP control preference should be emphasized} under the prevailing context, while decentralized MP remains responsible for signal-phase selection. In the representative implementation, Com-MP combines a detector-supported queue-based MP controller and a CV-supported travel-time-based MP controller, which differ in both control preference and sensing source. The framework is not restricted to these two constituents and can accommodate a finite collection of admissible MP formulations.

A central theoretical question is whether such spatially and temporally varying composition preserves the stability properties that motivate MP control. We show that adaptive composition preserves strong queue stability for every demand in the strictly stabilizable region under any admissible non-anticipative composition policy. The guarantee therefore does not depend on the particular RL policy or training algorithm used to generate the composition within the proposed framework. This establishes a separation between learning and stability: learning determines how the controller adapts its relative control preferences, whereas network-level queue stability follows from the structured MP mechanism. The detector-supported Q-MP and CV-supported travel-time MP states used in the representative implementation are further shown to satisfy the conditions of the general stability result.

The main contributions of this paper are threefold:
\begin{itemize}

    \item We develop a theory-guided RL architecture for trustworthy traffic signal control by adapting heterogeneous MP control preferences within a structured decision-making framework. A shared policy learns intersection-specific and time-varying compositions from local traffic and sensing conditions, while decentralized MP remains responsible for actual signal-phase selection. This design confines learning to an interpretable adaptation layer, allowing data-driven flexibility without replacing the underlying control mechanism.

    \item We establish a general strong-stability result for adaptive composite MP control. For a finite collection of admissible constituent MP states, arbitrary non-anticipative, intersection-specific, and time-varying composition policies preserve strong queue stability for every demand in the strictly stabilizable region under the stated finite-storage network dynamics and operational signal constraints. The stability guarantee is therefore independent of the particular RL policy used to select the composition, providing a formal network-level guarantee for the proposed learning-enabled control architecture.

    \item We evaluate the representative Q--CV Com-MP under heterogeneous traffic and sensing conditions and examine its cross-network transferability. In the real-world 45-intersection Qinzhou network, Com-MP reduces average delay by 15.6\% and 8.6\% relative to Q-MP and CV-MP, respectively, and by about 7.0\% relative to the best fixed composition. The learned composition varies systematically with local traffic and sensing conditions, remains effective under heterogeneous and degraded sensing, and transfers without retraining to an unseen real-world network with an average delay only 2.4\% higher than that of a policy trained directly on the target network. These results show that the learned composition responds meaningfully to local conditions, remains effective when sensing degrades, and can be reused on an unseen network.

\end{itemize}

The remainder of this paper is organized as follows. Section~\ref{sec:problem_setting} introduces the network model and the constituent MP controllers. Section~\ref{sec:com_mp} presents the composite MP formulation and establishes its stability properties. Section~\ref{sec:rl_composition} develops the theory-guided learning mechanism for context-dependent composition. Section~\ref{sec:evaluation} evaluates traffic performance and learned composition behavior, then examines sensing robustness and cross-network transferability. Section~\ref{sec:conclusion} concludes the paper and discusses future research directions.

\section{Problem Setting and Constituent MP Controllers}
\label{sec:problem_setting}

This section introduces the traffic-network model and a common representation for MP controllers with different traffic-state definitions and sensing requirements. We first describe the movement-count dynamics and operational signal constraints used in the stability analysis. We then formulate a generalized MP controller in which different constituent controllers share the same pressure-based phase-selection structure but may use different movement states. Finally, the detector-supported queue-based MP and CV-supported travel-time-based MP controllers used in the representative implementation are introduced.

\subsection{Network model and traffic-flow dynamics}
\label{sec:network_model}

Time is continuous and denoted by $t\ge0$. Signal decisions are made at epochs $t_k$, $k=0,1,\ldots$, with $t_0=0$. Let $I_k=[t_k,t_{k+1})$ denote the $k$th decision interval and let $\Delta=t_{k+1}-t_k>0$ denote its fixed duration.

Consider a signalized road network with a set of real intersections $\mathcal N$ and a set of fictitious source nodes $\mathcal F$. Fictitious nodes represent exogenous demand entering the modeled network and provide point queues in which demand that cannot immediately enter the physical network can accumulate. For each node $n\in\mathcal N\cup\mathcal F$, let $\mathcal M_n$ denote its movement set. A real movement is indexed by $a=(i,j)$, representing traffic traveling from incoming link $i$ to outgoing link $j$. Let $L_i>0$ denote the physical length of real link $i$. Define the real-movement set as $\mathcal M_{\mathcal N}=\bigcup_{n\in\mathcal N}\mathcal M_n$, the fictitious-source movement set as $\mathcal M_{\mathcal F}=\bigcup_{n\in\mathcal F}\mathcal M_n$, and the complete movement set as $\mathcal M=\mathcal M_{\mathcal N}\cup\mathcal M_{\mathcal F}$. Real movements have finite physical storage, whereas fictitious source movements have no physical length and are modeled as point queues with unbounded storage. This distinction allows congestion inside the network and demand waiting outside the network to be represented within one movement-based model \citep{li2019position,tan2026cvmp}.

For each real intersection $n\in\mathcal N$, let $\mathcal P_n$ denote its finite set of compatible signal phases. A phase $\phi\in\mathcal P_n$ corresponds to the movement-level signal vector $\bm s_n^\phi=\{s_{n,a}^{\phi}\}_{a\in\mathcal M_n}$, where $s_{n,a}^{\phi}=1$ if movement $a$ is served by phase $\phi$ and $s_{n,a}^{\phi}=0$ otherwise. Let $\mathcal S_n=\{\bm s_n^\phi:\phi\in\mathcal P_n\}$ denote the local signal-state set and $\mathcal S=\prod_{n\in\mathcal N}\mathcal S_n$ the network signal-state space. A network signal state is denoted by $\bm s=(\bm s_n)_{n\in\mathcal N}\in\mathcal S$, where $\bm s_n$ is its local component at intersection $n$ and $s_a$ is the activation component associated with movement $a$. The only movement at each fictitious source node is treated as continuously available.

To prevent a compatible phase from remaining unserved indefinitely, let $T_{n,\phi}^{red}(t)$ denote the elapsed red time of phase $\phi$ at intersection $n$, and let $T_{n,\phi}^{red,\max}<\infty$ denote its prescribed maximum red time. An admissible signal policy satisfies
\begin{equation}
    T_{n,\phi}^{red}(t)
    \le
    T_{n,\phi}^{red,\max},
    \qquad
    n\in\mathcal N,\quad
    \phi\in\mathcal P_n.
    \label{eq:maximum_red_requirement}
\end{equation}
Let $\Sigma_{\mathrm{adm}}$ denote the class of nonanticipative signal policies satisfying the compatible-phase and maximum-red constraints.

For each movement $a\in\mathcal M$, let $z_a(t)$ denote its vehicle count. For a real movement, $z_a(t)$ is the number of vehicles currently occupying that finite movement; for $a\in\mathcal M_{\mathcal F}$, it denotes the fictitious source queue. Let $\bm z(t)=[z_a(t)]_{a\in\mathcal M}$ collect all movement counts.

For a real movement $a=(i,j)$, let $\mathcal D(a)\subseteq\mathcal M$ denote the set of immediately downstream movements $b=(j,\ell)$, where $\ell$ denotes a link downstream of $j$. For each $b\in\mathcal D(a)$, let $r_{b,k}$ denote the proportion of traffic on link $j$ assigned to downstream movement $b$ during $I_k$, with $r_{b,k}\ge0$ and $\sum_{b\in\mathcal D(a)}r_{b,k}\le1$; any remaining fraction represents traffic leaving the modeled network after link $j$. Define the movement-transition matrix $\bm R_k$ by $(\bm R_k)_{ab}=r_{b,k}$ for $b\in\mathcal D(a)$ and $(\bm R_k)_{ab}=0$ otherwise. The turning proportions are available at $t_k$ and are held fixed over $I_k$; equivalently, write $r_b(t)=r_{b,k}$ and $\bm R(t)=\bm R_k$ for $t\in I_k$.

Let $d_a(t)$ denote the realized discharge rate of movement $a$ and let $\bm d(t)=[d_a(t)]_{a\in\mathcal M}$ collect these rates. Let $\bm\lambda(t)$ denote the exogenous arrival-rate vector, whose entries are zero for real movements and may be positive for fictitious source movements. At the aggregate rate level, movement conservation is represented by
\begin{equation}
    \dot{\bm z}(t)
    =
    \bm\lambda(t)
    -
    [\bm I-\bm R^\top(t)]\bm d(t),
    \qquad
    t\in I_k,
    \label{eq:network_count_dynamics}
\end{equation}
where $\bm I$ is the identity matrix of dimension $|\mathcal M|$. Under an absolutely continuous flow representation, the overdot denotes the ordinary time derivative. Under an event-based vehicle-count representation, individual arrivals and movement transitions produce bounded unit jumps; their conditional contribution is included through the extended drift operator introduced in Section~\ref{sec:stability_conditions}. Both representations follow the same first-order conservation balance in~\eqref{eq:network_count_dynamics}. With the above orientation of $\bm R(t)$, $\bm R^\top(t)\bm d(t)$ represents the endogenous inflow generated by upstream movement discharges. Let $\bar{\bm\lambda}$ denote the stationary mean exogenous demand vector used in the stability analysis; the corresponding conditional-mean condition is stated in Assumption~\ref{ass:network_regularity}.

The discharge that can actually be delivered by a movement depends not only on the signal state but also on the current sending and receiving conditions. Let $d_a(\bm s,t)$ denote the counterfactual discharge rate that would result if candidate network signal state $\bm s\in\mathcal S$ were applied at time $t$. Its conditional expected value is
\begin{equation}
    \bar d_a(\bm s,t)
    \triangleq
    \mathbb E
    \left[
        d_a(\bm s,t)
        \mid
        \mathcal I_{t^-}
    \right],
    \qquad
    0\le\bar d_a(\bm s,t)\le c_a,
    \label{eq:candidate_expected_discharge}
\end{equation}
where $\mathcal I_{t^-}$ denotes the traffic information available immediately before time $t$, and $c_a>0$ is a finite upper bound on the service rate of movement $a$. The effective expected discharge incorporates the current downstream receiving condition and is therefore reduced when downstream spillback restricts service. For a real movement $a\in\mathcal M_n$, candidate discharge depends on $\bm s$ only through its local component $\bm s_n$. For $a\in\mathcal M_{\mathcal F}$, fictitious-source discharge is independent of the candidate real-intersection phases, and we write $\bar d_a(t)$ for this candidate-independent quantity. For a real movement $a\in\mathcal M_n$, $\bar d_a(\bm s_n^\phi,t)$ denotes the same conditional expected discharge evaluated under a network signal state whose local component at intersection $n$ is $\bm s_n^\phi$. Let $\bar{\bm d}(\bm s,t)$ collect the candidate expected discharge rates of all movements. Under the conventional saturation-flow abstraction for a signal-controlled movement, $\bar d_a(\bm s,t)=c_as_a$.

The formulation in~\eqref{eq:network_count_dynamics}--\eqref{eq:candidate_expected_discharge} separates physical traffic conservation from signal-dependent service. Different MP controllers can therefore use different states to represent congestion while acting on the same underlying movement-count dynamics.

\subsection{Generalized MP formulation}
\label{sec:generalized_mp}

Let $\mathcal H=\{1,\ldots,H\}$ index a finite collection of constituent MP controllers. Constituent controller $h\in\mathcal H$ assigns a nonnegative traffic state $w_a^h(t)$ to each real movement $a\in\mathcal M_{\mathcal N}$. Depending on the constituent formulation, this state may represent vehicle accumulation, a spatially weighted congestion measure, or a temporal measure such as delay or travel time. For fictitious source movements, all constituents use the same physical source queue, i.e., $w_a^h(t)=z_a(t)$ for $a\in\mathcal M_{\mathcal F}$ and $h\in\mathcal H$. Let $\bm w^h(t)=[w_a^h(t)]_{a\in\mathcal M}$ collect the movement states of constituent $h$.

For movement $a$, define the downstream differential pressure of constituent $h$ as
\begin{equation}
    p_a^h(t)
    =
    w_a^h(t)
    -
    \sum_{b\in\mathcal D(a)}
    r_b(t)w_b^h(t).
    \label{eq:movement_pressure}
\end{equation}
The first term represents the congestion state associated with movement $a$, whereas the downstream term accounts for the congestion states of the movements that receive its discharge.

The network pressure-service objective of constituent $h$ is
\begin{equation}
    \Pi^h(\bm s,t)
    =
    \sum_{a\in\mathcal M}
    p_a^h(t)\,
    \bar d_a(\bm s,t)
    =
    (\bm w^h(t))^\top
    [\bm I-\bm R^\top(t)]
    \bar{\bm d}(\bm s,t).
    \label{eq:constituent_pressure_matrix}
\end{equation}
The constituent MP rule selects a feasible signal state satisfying
\begin{equation}
    \bm s^{h,*}(t)
    \in
    \arg\max_{\bm s\in\mathcal S}
    \Pi^h(\bm s,t).
    \label{eq:constituent_mp}
\end{equation}
Because $\mathcal S$ is the Cartesian product of the local signal-state sets and the candidate discharge of each real movement depends only on its own intersection state, the maximization in~\eqref{eq:constituent_mp} decomposes across intersections. Thus, the network expression in~\eqref{eq:constituent_pressure_matrix} is an analytical representation of decentralized MP phase selection rather than a centralized signal optimization.

This service-weighted formulation follows the finite-storage backpressure construction in which candidate signal states are evaluated using the discharge that can be realized under the prevailing traffic condition \citep{li2019position}. When $\bar d_a(\bm s,t)=c_as_a$, it reduces to the conventional saturation-flow MP objective \citep{varaiya2013max}. Importantly, the constituent-specific information enters through $w_a^h(t)$, whereas the phase-selection structure is shared. This separation allows Com-MP to combine heterogeneous MP preferences without requiring the constituents to use the same traffic-state definition.

\subsection{Heterogeneous sensing and representative controllers}
\label{sec:representative_controllers}

The constituent state $w_a^h(t)$ may be obtained from direct measurements or constructed from the sensing information available to the corresponding controller. The representative implementation uses fixed-detector information for a queue-based constituent and connected-vehicle (CV) observations for a travel-time-based constituent. These two sensing channels support complementary traffic-state representations while retaining the common MP structure in Section~\ref{sec:generalized_mp}.

\paragraph{Detector-supported queue-based MP (Q-MP) \citep{varaiya2013max}.}
For a real movement $a=(i,j)\in\mathcal M_{\mathcal N}$, Q-MP uses the length-normalized vehicle count
\begin{equation}
    w_a^Q(t)
    =
    \frac{z_a(t)}{L_i}.
    \label{eq:qmp_state}
\end{equation}
The normalization places vehicle accumulation on a density-like scale and avoids assigning larger pressure merely because an approach has greater physical storage. In implementation, the same state is constructed from the detector-supported information available for the movement; sensing degradation changes the available count information but does not change the Q-MP state definition in~\eqref{eq:qmp_state}. When usable detector-supported information is unavailable, the composition mechanism follows the sensing-availability rule specified in Section~\ref{sec:action}. Because real movements have finite storage, $w_a^Q(t)$ is uniformly bounded. Section~\ref{sec:qcv_stability} further verifies that this same Q-MP state admits the matched Lyapunov representation required by Assumption~\ref{ass:constituent_admissibility}.

\paragraph{CV-supported travel-time-based MP (CV-MP) \citep{tan2026cvmp}.}
Let $\mathcal J_a^{CV}(t)$ denote the set of connected vehicles observed for movement $a$ at time $t$. For $v\in\mathcal J_a^{CV}(t)$, let $\tau_v(t)$ denote the elapsed travel time of vehicle $v$ on its current link, and let $\bar\tau_a>0$ denote the corresponding free-flow travel time used to normalize this elapsed time. The CV-MP state is
\begin{equation}
    w_a^{CV}(t)
    =
    \sum_{v\in\mathcal J_a^{CV}(t)}
    \frac{\tau_v(t)}{\bar\tau_a}.
    \label{eq:cv_travel_time_state}
\end{equation}
Because each vehicle's contribution increases while it remains on the link, the resulting pressure gives increasing importance to movements that have accumulated temporal cost. Its representativeness also depends on the availability of CV observations. Section~\ref{sec:qcv_stability} establishes the boundedness and matched Lyapunov property required to place this state within the general stability framework.

The two representative constituents therefore differ in both operational emphasis and sensing dependence. Q-MP emphasizes vehicle accumulation using detector-supported information, whereas CV-MP emphasizes accumulated travel time using vehicle-level observations. Section~\ref{sec:com_mp} develops their adaptive composition while preserving the common MP phase-selection mechanism.

\section{Composite Max-Pressure Control and Stability Analysis}
\label{sec:com_mp}

Building on the constituent MP controllers introduced in Section~\ref{sec:problem_setting}, Com-MP adapts their relative influence according to the prevailing traffic and sensing conditions while retaining MP as the lower-level phase-selection mechanism. As illustrated in Fig.~\ref{FIG:framework}, each constituent controller first evaluates the candidate phases using its own pressure formulation. These phase pressures are normalized to represent comparable phase preferences and are then combined using intersection-specific composition weights. The composition-selection policy is developed in Section~\ref{sec:rl_composition}; this section defines the resulting controller and establishes the structural properties that preserve its network-level queue stability.

\begin{figure}
    \centering
    \includegraphics[width=\textwidth]{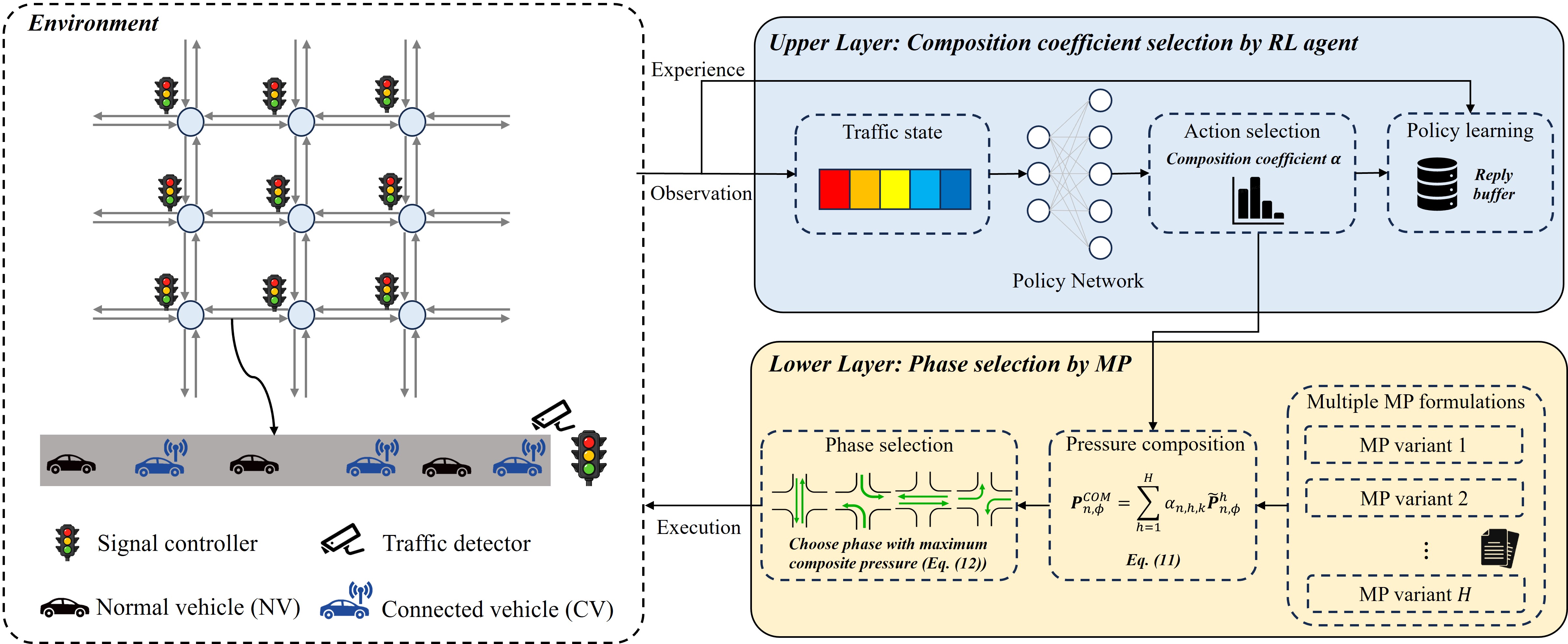}
    \caption{Framework of the proposed Com-MP controller. The upper layer adapts the composition of normalized constituent phase pressures, while the lower layer retains decentralized MP phase selection.}
    \label{FIG:framework}
\end{figure}

\subsection{Adaptive composite MP}
\label{sec:composite_formulation}

At each signal decision epoch $t_k$ defined in Section~\ref{sec:network_model}, Com-MP adapts only the relative influence of the constituent MP controllers, while the MP principle remains responsible for the actual phase decision. For intersection $n\in\mathcal N$, let $\bm o_{n,k}$ denote the local traffic and sensing observation available at $t_k$. The composition mechanism selects
\[
    \bm\alpha_{n,k}
    =
    [\alpha_{n,1,k},\ldots,\alpha_{n,H,k}]^\top
    \in\Delta_H,
    \qquad
    \Delta_H
    \triangleq
    \{\bm\alpha\in\mathbb R_+^H:\bm 1^\top\bm\alpha=1\}.
\]
Here, $\Delta_H$ is the $H$-dimensional probability simplex, $\mathbb R_+^H$ is the nonnegative orthant, $\bm 1$ is the all-ones vector, $h\in\mathcal H=\{1,\ldots,H\}$ indexes the constituent MP controllers, and $\alpha_{n,h,k}$ specifies their relative influence at intersection $n$. The selected composition is held fixed over $I_k$.

The constituent controllers may use traffic states with different physical meanings and numerical scales. Directly combining their raw pressures would therefore make the composition coefficients depend not only on control preference but also on the magnitude of the underlying pressure representation. Com-MP instead combines their relative preferences among the candidate phases. For constituent $h$ and candidate phase $\phi\in\mathcal P_n$, define the local phase pressure
\begin{equation}
    P_{n,\phi}^{h}(t_k)
    \triangleq
    \sum_{a\in\mathcal M_n}
    p_a^h(t_k)\,
    \bar d_a(\bm s_n^{\phi},t_k).
    \label{eq:constituent_phase_pressure}
\end{equation}
Let $P_{n,\min}^{h}(t_k)=\min_{\phi\in\mathcal P_n}P_{n,\phi}^{h}(t_k)$ and $P_{n,\max}^{h}(t_k)=\max_{\phi\in\mathcal P_n}P_{n,\phi}^{h}(t_k)$, and define their range as $\Gamma_{n,h,k}=P_{n,\max}^{h}(t_k)-P_{n,\min}^{h}(t_k)$. The normalized phase preference is
\begin{equation}
    \widetilde P_{n,\phi}^{h}(t_k)
    =
    \begin{cases}
    \dfrac{
        P_{n,\phi}^{h}(t_k)-P_{n,\min}^{h}(t_k)
    }{
        \Gamma_{n,h,k}
    },
    & \Gamma_{n,h,k}>0,\\[3mm]
    0,
    & \Gamma_{n,h,k}=0.
    \end{cases}
    \label{eq:normalized_phase_pressure}
\end{equation}
For $\Gamma_{n,h,k}>0$, this transformation maps the constituent phase pressures to a common unit interval without changing their phase ordering. When $\Gamma_{n,h,k}=0$, constituent $h$ is indifferent among all candidate phases and therefore contributes no phase preference at that decision epoch.

Com-MP combines the normalized preferences according to
\begin{equation}
    P_{n,\phi}^{COM}(t_k;\bm\alpha_{n,k})
    =
    \sum_{h=1}^{H}
    \alpha_{n,h,k}\,
    \widetilde P_{n,\phi}^{h}(t_k).
    \label{eq:composite_phase_pressure}
\end{equation}
Thus, the composition weights determine which constituent control preferences are emphasized under the current local traffic and sensing conditions, while the lower-level decision remains a pressure maximization.

The phase decision must also satisfy the maximum-red requirement in~\eqref{eq:maximum_red_requirement}. Let $\mathcal P_n^{\mathrm{adm}}(t_k)\subseteq\mathcal P_n$ denote the nonempty set of phases that are operationally admissible at $t_k$. Intersection $n$ selects
\begin{equation}
    \phi_{n,k}^{*}
    \in
    \arg\max_{\phi\in\mathcal P_n^{\mathrm{adm}}(t_k)}
    P_{n,\phi}^{COM}(t_k;\bm\alpha_{n,k}),
    \label{eq:com_mp_control}
\end{equation}
and holds the selected phase over $I_k$, apart from the finite transition interval required when the phase changes. Hence, the upper layer modifies the composition of MP preferences rather than replacing the decentralized MP phase-selection rule.

\paragraph{Representative Q--CV composition.}
The experimental implementation combines the detector-supported Q-MP and CV-supported travel-time MP introduced in Section~\ref{sec:representative_controllers}. Setting $H=2$, let $\alpha_{n,k}\in[0,1]$ denote the relative contribution of CV-MP. The composite phase pressure becomes
\begin{equation}
    P_{n,\phi}^{COM}(t_k;\alpha_{n,k})
    =
    (1-\alpha_{n,k})\widetilde P_{n,\phi}^{Q}(t_k)
    +
    \alpha_{n,k}\widetilde P_{n,\phi}^{CV}(t_k),
    \label{eq:qcv_composite_pressure}
\end{equation}
and the corresponding phase decision is
\begin{equation}
    \phi_{n,k}^{*}
    \in
    \arg\max_{\phi\in\mathcal P_n^{\mathrm{adm}}(t_k)}
    \left[
        (1-\alpha_{n,k})\widetilde P_{n,\phi}^{Q}(t_k)
        +
        \alpha_{n,k}\widetilde P_{n,\phi}^{CV}(t_k)
    \right].
    \label{eq:qcv_com_mp_control}
\end{equation}
Thus, $\alpha_{n,k}=0$ and $\alpha_{n,k}=1$ recover the normalized Q-MP and CV-MP phase preferences, respectively, whenever the corresponding constituent has a nonzero phase-pressure range. If no usable CV state is available, we set $\alpha_{n,k}=0$ so that the controller relies on Q-MP.

\subsection{Stability analysis}
\label{sec:stability_analysis}

The main theoretical question is whether the MP stability property is preserved when the pressure representation is normalized, composed differently across intersections, and allowed to change over time. The analysis follows the finite-storage backpressure framework of \citet{li2019position} and the generalized MP formulation used by \citet{tan2026cvmp}. The argument proceeds in three steps. First, the implemented normalized composition is shown to be phase-equivalent to a convex combination of the raw constituent pressures. Second, this composition is related to a network-consistent analytical MP objective, with only a uniformly bounded decision loss. Finally, the resulting negative drift is combined with the bounded effect of composition updates to establish strong stability.

\subsubsection{Stability setting}
\label{sec:stability_conditions}

Because real movements have finite physical storage, their vehicle counts are uniformly bounded. The potentially unbounded components are the fictitious source queues representing demand waiting to enter the network. We nevertheless state stability for the total movement count.

\begin{definition}[Strong queue stability]
\label{def:strong_stability}
For an averaging horizon $T>0$, the traffic network is strongly stable if
\begin{equation}
    \limsup_{T\rightarrow\infty}
    \frac{1}{T}
    \int_0^T
    \mathbb E
    \left[
        \sum_{a\in\mathcal M}z_a(t)
    \right]dt
    <
    \infty.
    \label{eq:strong_stability}
\end{equation}
\end{definition}

For a Lyapunov process $V(t)$, let $\mathcal L_{\bm s}V(t)$ denote its conditional extended infinitesimal drift under candidate signal state $\bm s$. When the signal state is generated by the applied policy rather than fixed as a candidate, we write $\mathcal L V(t)$ for the corresponding conditional drift. The operator includes the continuous evolution between traffic-transition events together with jumps caused by arrivals and movement transitions.

\begin{lemma}[Sufficient drift condition for strong stability]
\label{lem:strong_stability_criterion}
Let $V(t)\ge0$ be a Lyapunov process whose extended drift is well defined and integrable. If there exist finite constants $K<\infty$ and $\epsilon>0$ such that
\begin{equation}
    \mathcal L V(t)
    \le
    K
    -
    \epsilon
    \sum_{a\in\mathcal M_{\mathcal F}}z_a(t),
    \label{eq:strong_stability_criterion}
\end{equation}
then the finite-storage traffic network is strongly stable in the sense of Definition~\ref{def:strong_stability}.
\end{lemma}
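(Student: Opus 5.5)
The plan is the standard Foster--Lyapunov integration argument, specialized to the finite-storage structure. We use Dynkin's formula for the extended generator: since the extended drift of $V$ is well defined and integrable, for every $T>0$
\begin{equation*}
    \mathbb E[V(T)]-\mathbb E[V(0)]
    =
    \mathbb E\left[\int_0^T \mathcal L V(t)\,dt\right].
\end{equation*}
In the jump case we apply it to stopped times $T\wedge\tau_m$ with a localizing sequence $\tau_m\uparrow\infty$ (e.g.\ first exit of the source queues from $[0,m]$) and pass to the limit by monotone convergence. Substituting~\eqref{eq:strong_stability_criterion} and using $V\ge0$ gives
\begin{equation*}
    \epsilon\int_0^T \mathbb E\Big[\sum_{a\in\mathcal M_{\mathcal F}} z_a(t)\Big]dt
    \le
    KT+\mathbb E[V(0)],
\end{equation*}
where Fubini (justified by nonnegativity of $z_a$) exchanges expectation and time integral. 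Dividing by $\epsilon T$ and taking $\limsup_{T\to\infty}$ bounds the time-averaged expected fictitious-source total by $K/\epsilon$, provided $\mathbb E[V(0)]<\infty$. This holds since the initial state is deterministic or has finite mean; we state this as part of ``well defined''.

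Next we handle the real movements. Each real movement $a=(i,j)$ has finite physical storage, so $z_a(t)\le \bar z_a$ for a constant $\bar z_a$, e.g.\ proportional to $L_i$ times jam density. Hence
\begin{equation*}
    \frac1T\int_0^T\mathbb E\Big[\sum_{a\in\mathcal M_{\mathcal N}}z_a(t)\Big]dt\le\sum_{a\in\mathcal M_{\mathcal N}}\bar z_a<\infty
\end{equation*}
uniformly in $T$. Adding this to the source-queue bound yields~\eqref{eq:strong_stability}, because $\mathcal M=\mathcal M_{\mathcal N}\cup\mathcal M_{\mathcal F}$ is a finite union and the $\limsup$ of a sum is at most the sum of the $\limsup$s.

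The main obstacle is rigor in the Dynkin step for the event-based representation. $V$ is typically quadratic in the unbounded source queues, so the local-martingale part need not be a true martingale. We handle this through the localization above. On $[0,\tau_m]$ the jumps are bounded and the rates are finite ($c_a$ and the arrival rates), so the compensated process is a genuine martingale. The inequality then holds with $T\wedge\tau_m$, and we let $m\to\infty$ using $V\ge0$ on the left and monotone convergence of the nonnegative integrand $\sum_{a\in\mathcal M_{\mathcal F}} z_a$ on the right. The constant term $K(T\wedge\tau_m)\le KT$ causes no difficulty.
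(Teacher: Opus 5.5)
Your proposal is correct and follows the paper's argument: you integrate the drift inequality through the extended-generator (Dynkin) identity over $[0,T]$, drop $\mathbb E[V(T)]\ge 0$, divide by $\epsilon T$ to bound the time-averaged source queues by $K/\epsilon+\mathbb E[V(0)]/(\epsilon T)$, and add the finite-storage bound on the real movements. Your localization step and the explicit requirement $\mathbb E[V(0)]<\infty$ make rigorous what the paper's one-line appeal to the ``extended drift identity'' takes for granted.
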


The result follows by integrating the drift inequality over time; real-movement counts are already bounded by finite storage. The complete argument is given in Appendix~\ref{app:proof_strong_stability_criterion}.

We use the following regularity conditions for the network dynamics.

\begin{assumption}[Traffic-process regularity]
\label{ass:network_regularity}
The phase sets $\{\mathcal P_n\}_{n\in\mathcal N}$ and downstream-movement sets $\{\mathcal D(a)\}_{a\in\mathcal M}$ are fixed. The turning-ratio process is piecewise constant on the decision intervals, with $\bm R(t)=\bm R_k$ for $t\in I_k$, where $\bm R_k$ is $\mathcal I_{t_k^-}$-measurable, nonnegative, row-substochastic, and has the structure defined in Section~\ref{sec:network_model}. For each real movement $a$, let $z_a^{\max}<\infty$ denote its finite physical storage bound, so that $0\le z_a(t)\le z_a^{\max}$. Traffic densities, movement flows, vehicle-transition rates, and exogenous arrival rates are bounded by their physical finite-link limits, and the exogenous arrivals satisfy $\mathbb E[\bm\lambda(t)\mid\mathcal I_{t^-}]=\bar{\bm\lambda}$. For every feasible candidate signal state, the conditional expected movement discharge $\bar d_a(\bm s,t)$ is bounded and has uniformly bounded variation over each decision interval of duration $\Delta$. For $a\in\mathcal M_n$, candidate discharge depends on $\bm s$ only through the local signal state $\bm s_n$; fictitious-source discharge is independent of the candidate real-intersection phases.
\end{assumption}

Assumption~\ref{ass:network_regularity} collects standard regularity conditions used in finite-storage backpressure and MP stability analyses \citep{li2019position,tan2026cvmp}. In particular, finite physical storage implies bounded real-movement counts and service rates, while the piecewise-constant turning matrix reflects the decision-epoch update used by the controller. The conditional-mean condition on exogenous arrivals does not require independent or deterministic arrivals; it provides the stationary mean demand with respect to which the long-run stability region is defined. Combining this condition with the conservation relation in~\eqref{eq:network_count_dynamics} and the candidate expected discharge in~\eqref{eq:candidate_expected_discharge}, the conditional count drift under candidate signal state $\bm s$ is
\begin{equation}
    \mathcal L_{\bm s}\bm z(t)
    =
    \bar{\bm\lambda}
    -
    [\bm I-\bm R^\top(t)]\bar{\bm d}(\bm s,t).
    \label{eq:candidate_count_drift}
\end{equation}
For an absolutely continuous representation,~\eqref{eq:candidate_count_drift} is the conditional rate form of~\eqref{eq:network_count_dynamics}. Under an event-based representation, it gives the corresponding first-order generator identity. Bounded jump corrections enter the Lyapunov drift separately.

The constituent controllers may represent congestion differently. Rather than requiring a common state definition, the stability analysis only requires each constituent state to admit a compatible Lyapunov representation.

\begin{assumption}[Admissible constituent MP states]
\label{ass:constituent_admissibility}
For each constituent controller $h\in\mathcal H$ and real movement $a$, let $w_a^h(t)$ be the state entering its MP pressure. Associated with this state is a nonnegative movement-level Lyapunov term $G_a^h(t)$. There exist finite constants $W_a^h$, $G_a^{h,\max}$, and $K_a^h$, uniform over time, such that
\begin{equation}
    0\le w_a^h(t)\le W_a^h,
    \qquad
    0\le G_a^h(t)\le G_a^{h,\max},
    \label{eq:constituent_admissibility}
\end{equation}
and, for every feasible candidate signal state $\bm s$,
\begin{equation}
    \mathcal L_{\bm s}G_a^h(t)
    \le
    w_a^h(t)\,
    \mathcal L_{\bm s}z_a(t)
    +
    K_a^h.
    \label{eq:constituent_potential_drift}
\end{equation}
For fictitious source movements, all constituent controllers use the physical source queue $w_a^h(t)=z_a(t)$ and the corresponding Lyapunov term $z_a^2(t)/2$. Here, $W_a^h$ and $G_a^{h,\max}$ are uniform bounds on the constituent state and its Lyapunov term, respectively, while $K_a^h$ bounds the residual term in the matched drift relation.
\end{assumption}

Assumption~\ref{ass:constituent_admissibility} defines the class of constituent state representations that can be incorporated into the composite stability analysis. It does not assume that an individual constituent controller is itself stabilizing, nor does it require the constituents to use the same traffic state. Instead, it requires the real-movement state contribution to remain bounded under finite storage and to admit a Lyapunov potential whose drift is matched to the state-weighted vehicle-count dynamics up to a bounded remainder. This is the same type of state--drift compatibility underlying generalized backpressure constructions \citep{li2019position,tan2026cvmp}. This condition therefore provides the interface between a constituent MP formulation and the composite stability analysis. Section~\ref{sec:qcv_stability} verifies these properties explicitly for the Q-MP and CV-MP constituents used in the experiments from finite storage and explicit service conditions.

The stability guarantee is stated relative to demands that admit a strict stability margin under the same network dynamics and operational signal constraints.

\begin{definition}[Strictly stabilizable demand region]
\label{def:strict_stabilizable_region}
A mean demand vector $\bar{\bm\lambda}$ is called \emph{strictly stabilizable} if there exist an admissible reference policy $\sigma^{\mathrm{ref}}\in\Sigma_{\mathrm{adm}}$, a reference movement state $\bm w^{\mathrm{ref}}(t)$, and a matched Lyapunov functional $V^{\mathrm{ref}}$. The reference state uses the physical source queues on $\mathcal M_{\mathcal F}$ and is uniformly bounded on $\mathcal M_{\mathcal N}$. Let $\Psi^{\mathrm{ref}}(\bm s,t)$ denote the pressure-service objective induced by $\bm w^{\mathrm{ref}}(t)$ through the same differential-pressure construction as in~\eqref{eq:movement_pressure}--\eqref{eq:constituent_pressure_matrix}, and let $\bm s^{\mathrm{ref}}(t)$ denote the signal state selected by $\sigma^{\mathrm{ref}}$. The Lyapunov functional is matched to the reference state if there exists a residual $\xi^{\mathrm{ref}}(t)$ and a finite constant $K_{\mathrm{ref}}<\infty$ such that
\begin{equation}
    \mathcal L_{\sigma^{\mathrm{ref}}}V^{\mathrm{ref}}(t)
    =
    (\bm w^{\mathrm{ref}}(t))^\top\bar{\bm\lambda}
    -
    \Psi^{\mathrm{ref}}(\bm s^{\mathrm{ref}}(t),t)
    +
    \xi^{\mathrm{ref}}(t),
    \qquad
    |\xi^{\mathrm{ref}}(t)|\le K_{\mathrm{ref}}.
    \label{eq:reference_matched_drift}
\end{equation}
In addition, there exist finite $K_0<\infty$ and $\epsilon>0$ such that
\[
    \mathcal L_{\sigma^{\mathrm{ref}}}V^{\mathrm{ref}}(t)
    \le
    K_0
    -
    \epsilon
    \sum_{a\in\mathcal M_{\mathcal F}}z_a(t)
\]
uniformly over admissible traffic states. The set of all such mean demand vectors is
\begin{equation}
    \bm\Lambda^\circ
    \triangleq
    \left\{
        \bar{\bm\lambda}:
        \bar{\bm\lambda}
        \text{ is strictly stabilizable under }
        \Sigma_{\mathrm{adm}}
    \right\}.
    \label{eq:strict_stabilizable_region}
\end{equation}
Its closure $\bm\Lambda\triangleq\operatorname{cl}(\bm\Lambda^\circ)$, where $\operatorname{cl}(\cdot)$ denotes set closure, is referred to as the admissible demand region.
\end{definition}

The bounded residual in~\eqref{eq:reference_matched_drift} makes explicit that the reference Lyapunov drift has the same arrival-minus-pressure-service structure as the generalized MP construction, up to a uniformly bounded term. The reference policy is therefore a stability witness establishing a strict service margin under the admissible finite-storage network dynamics. It need not be one of the constituent MP controllers or the proposed Com-MP policy. The analysis below shows that adaptive composition preserves this strict margin up to bounded terms.

\subsubsection{Structural properties of adaptive composition}

The implemented controller combines normalized phase pressures, whereas the generalized MP stability framework is expressed in terms of the raw constituent states. The first result connects these two representations.

\begin{proposition}[Phase-equivalent representation of normalized composition]
\label{prop:normalized_composition_equivalence}
Fix intersection $n$ and decision epoch $t_k$. There exists an effective raw-pressure composition vector $\bm\theta_{n,k}\in\Delta_H$ such that, for every nonempty candidate subset $\mathcal P_n'\subseteq\mathcal P_n$,
\begin{equation}
    \arg\max_{\phi\in\mathcal P_n'}
    \sum_{h=1}^{H}
    \alpha_{n,h,k}\widetilde P_{n,\phi}^{h}(t_k)
    =
    \arg\max_{\phi\in\mathcal P_n'}
    \sum_{h=1}^{H}
    \theta_{n,h,k}P_{n,\phi}^{h}(t_k).
    \label{eq:normalized_raw_equivalence}
\end{equation}
\end{proposition}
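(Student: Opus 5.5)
The plan is to write down the effective weights directly and check that the two objectives differ by a positive factor plus a phase-independent constant. Let $\mathcal H_{n,k}^{+}=\{h:\Gamma_{n,h,k}>0\}$. For $h\in\mathcal H_{n,k}^{+}$, definition~\eqref{eq:normalized_phase_pressure} is an affine map of the raw phase pressure with positive slope $1/\Gamma_{n,h,k}$. For $h\notin\mathcal H_{n,k}^{+}$, both $\widetilde P^h_{n,\phi}(t_k)=0$ and $P^h_{n,\phi}(t_k)=P^h_{n,\min}(t_k)$ are constant in $\phi$. We therefore have
\[
\sum_{h}\alpha_{n,h,k}\widetilde P^h_{n,\phi}(t_k)=\sum_{h\in\mathcal H_{n,k}^{+}}\frac{\alpha_{n,h,k}}{\Gamma_{n,h,k}}P^h_{n,\phi}(t_k)-C_{n,k},
\]
where $C_{n,k}=\sum_{h\in\mathcal H^+_{n,k}}\alpha_{n,h,k}P^h_{n,\min}(t_k)/\Gamma_{n,h,k}$ does not depend on $\phi$.

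Set $\beta_{n,h,k}=\alpha_{n,h,k}/\Gamma_{n,h,k}$ for $h\in\mathcal H^+_{n,k}$ and $\beta_{n,h,k}=0$ otherwise, and let $B=\sum_h\beta_{n,h,k}$. If $B>0$, define $\bm\theta_{n,k}=\bm\beta_{n,k}/B\in\Delta_H$. The composite objective then equals $B\sum_h\theta_{n,h,k}P^h_{n,\phi}(t_k)-C_{n,k}$. A strictly increasing affine transformation does not change the maximizer set over any subset $\mathcal P_n'$, so \eqref{eq:normalized_raw_equivalence} holds simultaneously for all nonempty $\mathcal P'_n$. This uniformity is what later allows restriction to $\mathcal P_n^{\mathrm{adm}}(t_k)$.

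The main obstacle is the degenerate case $B=0$. This occurs when every constituent with positive weight has zero range. The left-hand objective is then identically zero, so its argmax is all of $\mathcal P_n'$. The right-hand side must also be constant in $\phi$, which requires choosing $\bm\theta_{n,k}$ supported on constituents with $\Gamma_{n,h,k}=0$. In this case some $h$ with $\alpha_{n,h,k}>0$ exists because $\bm\alpha\in\Delta_H$, and that $h$ has zero range. I would take $\bm\theta_{n,k}=\bm\alpha_{n,k}$, or a vertex $\bm e_h$ for such an $h$. Every constituent in its support is indifferent across phases, so the raw combination is constant and its argmax is also all of $\mathcal P_n'$.

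Finally, $\bm\theta_{n,k}$ depends only on $\bm\alpha_{n,k}$ and the epoch-$t_k$ pressures, so it is $\mathcal I_{t_k^-}$-measurable. This keeps it non-anticipative for the subsequent stability argument.
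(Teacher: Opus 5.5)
Your proposal is correct and follows the same route as the paper. You expand the normalized composite as $\sum_{h:\Gamma_{n,h,k}>0}(\alpha_{n,h,k}/\Gamma_{n,h,k})P_{n,\phi}^{h}(t_k)$ minus a phase-independent constant, divide by the positive normalizer (the paper's $S_{n,k}$, your $B$), and take $\bm\theta_{n,k}=\bm\alpha_{n,k}$ in the degenerate case where every positively weighted constituent is phase-indifferent. Your closing measurability remark is beyond what the proposition asks and is slightly too strong when $\bm\alpha_{n,k}$ is sampled from a stochastic policy: $\bm\theta_{n,k}$ then inherits nonanticipativity from $\bm\alpha_{n,k}$ rather than being measurable with respect to $\mathcal I_{t_k^-}$, which does not affect the result.
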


Define $S_{n,k}\triangleq\sum_{g:\Gamma_{n,g,k}>0}\alpha_{n,g,k}/\Gamma_{n,g,k}$, where $g$ is a dummy constituent index. If $S_{n,k}>0$, the equivalent coefficients are
\begin{equation}
    \theta_{n,h,k}
    =
    \begin{cases}
    \dfrac{\alpha_{n,h,k}/\Gamma_{n,h,k}}{S_{n,k}},
    & \Gamma_{n,h,k}>0,\\[3mm]
    0,
    & \Gamma_{n,h,k}=0.
    \end{cases}
    \label{eq:effective_composition}
\end{equation}
If $S_{n,k}=0$, all positively weighted constituents are phase indifferent and we set $\bm\theta_{n,k}=\bm\alpha_{n,k}$.

Proposition~\ref{prop:normalized_composition_equivalence} shows that normalization changes the relative scaling of the constituent pressures but does not introduce a different phase-selection principle. At every intersection and decision epoch, the implemented normalized composition is phase-equivalent to a convex combination of the raw constituent MP pressures. Importantly, the equivalence holds for any nonempty candidate subset and therefore remains valid when the maximum-red rule restricts the admissible phases. The proof is given in Appendix~\ref{app:proof_normalized_composition}.

Because the effective composition may differ across intersections, a single global composition vector generally does not exist. For the stability analysis, assign each real movement the composition of its owning intersection. Let $\nu(a)$ denote the owner of real movement $a$ and define, for $t\in I_k$,
\begin{equation}
    \bar w_a^{k}(t)
    \triangleq
    \sum_{h=1}^{H}
    \theta_{\nu(a),h,k}w_a^h(t),
    \qquad
    \bar G_a^{k}(t)
    \triangleq
    \sum_{h=1}^{H}
    \theta_{\nu(a),h,k}G_a^h(t).
    \label{eq:analytical_composite_state}
\end{equation}
For $a\in\mathcal M_{\mathcal F}$, set $\bar w_a^{k}(t)=z_a(t)$. Let $\bar{\bm w}^{k}(t)$ collect the analytical movement states $\{\bar w_a^k(t)\}_{a\in\mathcal M}$. The corresponding network-consistent analytical movement pressure is
\begin{equation}
    \bar p_a^{k}(t)
    =
    \bar w_a^{k}(t)
    -
    \sum_{b\in\mathcal D(a)}
    r_b(t)\bar w_b^{k}(t).
    \label{eq:analytical_pressure}
\end{equation}

\begin{lemma}[Closure of the constituent-state class under composition]
\label{lem:composite_admissibility}
Under Assumption~\ref{ass:constituent_admissibility}, every owner-based convex composition in~\eqref{eq:analytical_composite_state} remains in the same admissible state class. In particular, for every real movement $a$,
\begin{equation}
    0\le\bar w_a^{k}(t)\le\bar W_a,
    \qquad
    0\le\bar G_a^{k}(t)\le\bar G_a,
    \label{eq:composite_admissibility}
\end{equation}
for finite constants $\bar W_a=\max_h W_a^h$ and $\bar G_a=\max_hG_a^{h,\max}$, and
\begin{equation}
    \mathcal L_{\bm s}\bar G_a^{k}(t)
    \le
    \bar w_a^{k}(t)\,
    \mathcal L_{\bm s}z_a(t)
    +
    \bar K_a,
    \qquad
    \bar K_a=\max_hK_a^h.
    \label{eq:composite_potential_drift}
\end{equation}
\end{lemma}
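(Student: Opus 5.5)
The argument rests on the convexity of the owner-based weights. Fix a decision interval $I_k$ and a real movement $a$ with owner $n=\nu(a)$. By Proposition~\ref{prop:normalized_composition_equivalence}, the coefficients $\theta_{n,h,k}$ lie in $\Delta_H$: they are nonnegative and sum to one. They are determined at $t_k$, since the ranges $\Gamma_{n,h,k}$ and the weights $\bm\alpha_{n,k}$ are $\mathcal I_{t_k}$-measurable. They are then held constant over $I_k$. The plan is to treat the $\theta$'s as constants inside the drift operator and push every inequality of Assumption~\ref{ass:constituent_admissibility} through the convex combination.

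\emph{Step 1 (bounds).} Each $w_a^h(t)$ lies in $[0,W_a^h]$, and convex weights preserve nonnegativity. This gives
\[
0\le\bar w_a^k(t)=\sum_h\theta_{n,h,k}w_a^h(t)\le\sum_h\theta_{n,h,k}W_a^h\le\max_hW_a^h=\bar W_a .
\]
The same argument applied to $G_a^h$ yields $0\le\bar G_a^k(t)\le\bar G_a$. All of these constants are uniform in $k$, $t$, and the composition policy, because they do not depend on $\bm\theta$.

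\emph{Step 2 (matched drift).} The extended drift is computed conditionally on $\mathcal I_{t^-}$ for $t\in I_k$. On that interval the $\theta_{n,h,k}$ are known constants, so the operator $\mathcal L_{\bm s}$ is linear with respect to them. That is,
\[
\mathcal L_{\bm s}\bar G_a^k(t)=\sum_h\theta_{n,h,k}\mathcal L_{\bm s}G_a^h(t).
\]
This holds for both the continuous part and the jump part, since jumps of $\bar G_a^k$ are the same convex combination of the jumps of the $G_a^h$. Applying~\eqref{eq:constituent_potential_drift} termwise, with nonnegative weights preserving the inequality, gives
\[
\mathcal L_{\bm s}\bar G_a^k(t)\le\sum_h\theta_{n,h,k}w_a^h(t)\,\mathcal L_{\bm s}z_a(t)+\sum_h\theta_{n,h,k}K_a^h .
\]
The first sum is exactly $\bar w_a^k(t)\mathcal L_{\bm s}z_a(t)$, because $\mathcal L_{\bm s}z_a(t)$ does not depend on $h$. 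The second sum is at most $\max_hK_a^h=\bar K_a$.

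The main point needing care is the linearity step in Step 2, which requires that the weights do not themselves contribute drift within the interval. I would justify it through non-anticipativity: $\bm\theta_{n,k}$ is a function of information available at $t_k$ and is frozen on $[t_k,t_{k+1})$. It therefore acts as a measurable constant under the conditional generator on the open interval.

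Any jumps of $\bar G_a^k$ at the update epochs $t_{k+1}$, where $\theta$ changes, are deliberately left outside this lemma. They are bounded by $\bar G_a$ thanks to Step 1, and the statement concerns only the drift within a fixed interval. Those epoch jumps are accounted for separately in the later composition-update argument. For fictitious movements, $\bar w_a^k=z_a$ and the Lyapunov term is $z_a^2/2$ for every $h$, so the composition is trivial there.
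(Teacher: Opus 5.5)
Your proposal is correct and follows the paper's proof in Appendix~\ref{app:proof_composite_admissibility} essentially step for step: you derive the bounds on $\bar w_a^k$ and $\bar G_a^k$ from the simplex property, use linearity of $\mathcal L_{\bm s}$ with the coefficients frozen on $I_k$, apply~\eqref{eq:constituent_potential_drift} term by term, and bound the convex combination of the $K_a^h$ by their maximum. Your added remarks on non-anticipativity and on leaving the epoch jumps to Lemma~\ref{lem:bounded_composition_jump} make explicit what the paper's proof leaves implicit.
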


Thus, convex composition does not take the analytical state outside the class for which the constituent drift relation is available. The proof follows directly from the simplex property of $\bm\theta_{n,k}$ and is given in Appendix~\ref{app:proof_composite_admissibility}.

To connect this composite state to network stability, define, for a fixed analytical composition over $I_k$,
\begin{equation}
    V_k(t)
    =
    \sum_{a\in\mathcal M_{\mathcal N}}
    \bar G_a^{k}(t)
    +
    \frac{\kappa}{2}
    \sum_{a\in\mathcal M_{\mathcal N}}z_a^2(t)
    +
    \frac{1}{2}
    \sum_{a\in\mathcal M_{\mathcal F}}z_a^2(t),
    \qquad
    \kappa>0,
    \label{eq:composite_lyapunov}
\end{equation}
where $\kappa$ is an arbitrary fixed positive coefficient on the bounded real-movement quadratic term, and define the corresponding analytical MP objective
\begin{equation}
    \Psi_k(\bm s,t)
    \triangleq
    \sum_{a\in\mathcal M}
    \bar p_a^{k}(t)\,
    \bar d_a(\bm s,t).
    \label{eq:analytical_service_objective}
\end{equation}
The quadratic real-movement term in~\eqref{eq:composite_lyapunov} remains bounded under finite storage and therefore contributes only a finite drift remainder.

\begin{lemma}[Composite Lyapunov drift]
\label{lem:service_weighted_drift}
Under Assumptions~\ref{ass:network_regularity}--\ref{ass:constituent_admissibility}, there exists a finite constant $K_V$, uniform over the induced analytical compositions, such that
\begin{equation}
    \mathcal L_{\bm s}V_k(t)
    \le
    (\bar{\bm w}^{k}(t))^\top\bar{\bm\lambda}
    -
    \Psi_k(\bm s,t)
    +
    K_V
    \label{eq:service_weighted_drift}
\end{equation}
for every feasible candidate signal state $\bm s$.
\end{lemma}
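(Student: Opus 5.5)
The plan is to split $V_k$ into its three components, bound the drift of each separately, and then collect terms using the identity relating the state-weighted count drift to the pressure-service objective. Throughout, the analytical composition $\bm\theta_{\nu(a),k}$ is held fixed on $I_k$, so $\mathcal L_{\bm s}$ acts only on the traffic processes and not on the coefficients. The first term $\sum_{a\in\mathcal M_{\mathcal N}}\bar G_a^k$ is handled by Lemma~\ref{lem:composite_admissibility}. Summing \eqref{eq:composite_potential_drift} over real movements gives a bound by $\sum_{a\in\mathcal M_{\mathcal N}}\bar w_a^k\,\mathcal L_{\bm s}z_a+\sum_a\bar K_a$.

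For the fictitious sources, I use that $z_a^2/2$ is the prescribed Lyapunov term with $w_a=z_a=\bar w_a^k$. The first-order part of its drift is $z_a\mathcal L_{\bm s}z_a$. The second-order (jump) correction is bounded by a constant, because arrival and discharge rates are bounded and jumps are unit-sized (Assumption~\ref{ass:network_regularity}). The quadratic real-movement term $\frac{\kappa}{2}\sum z_a^2$ is simpler. Since $0\le z_a\le z_a^{\max}$ and, by \eqref{eq:candidate_count_drift}, $|\mathcal L_{\bm s}z_a|$ is bounded by arrival bounds, $c_a$ and the substochastic $\bm R$, its drift (including jump corrections) is bounded by a constant $\kappa K_z$. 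This bound holds uniformly in $\bm s$ and in $\bm\theta$.

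Collecting these pieces gives
\[
\mathcal L_{\bm s}V_k\le(\bar{\bm w}^k)^\top\mathcal L_{\bm s}\bm z+\textstyle\sum_a\bar K_a+\kappa K_z+K_F .
\]
Substituting \eqref{eq:candidate_count_drift} then turns the first term into $(\bar{\bm w}^k)^\top\bar{\bm\lambda}-(\bar{\bm w}^k)^\top[\bm I-\bm R^\top]\bar{\bm d}(\bm s,t)$. The remaining step is the summation-by-parts identity $(\bar{\bm w}^k)^\top[\bm I-\bm R^\top]\bar{\bm d}=\sum_a\bar p_a^k\bar d_a=\Psi_k(\bm s,t)$, exactly as in \eqref{eq:constituent_pressure_matrix}. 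It follows from $(\bm R)_{ab}=r_b$ for $b\in\mathcal D(a)$ together with the definition \eqref{eq:analytical_pressure}.

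The main obstacle is this identity under owner-based composition. A downstream movement $b$ is weighted by $\bar w_b^k$, which uses the coefficients of its own owner, not those of $a$'s intersection. The identity must therefore hold with one consistent state vector $\bar{\bm w}^k$ across all movements, and that is precisely why the pressure in \eqref{eq:analytical_pressure} is built from $\bar w_b^k$ rather than from $a$'s composition. I would verify the index bookkeeping carefully, including boundary movements exiting the network, whose missing row mass is harmless since $\bar w\ge0$ does not enter. Uniformity of $K_V=\sum_a\bar K_a+\kappa K_z+K_F$ over all $\bm\theta$ follows because each constant is a maximum over $h$ or depends only on physical bounds.
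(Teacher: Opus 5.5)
Your proposal is correct and follows the paper's derivation essentially step for step. You sum the closure bound of Lemma~\ref{lem:composite_admissibility} over real movements, treat the source terms $z_a^2/2$ as first-order drift plus a bounded unit-jump remainder, bound the $\kappa$-quadratic term via finite storage and bounded flows, and convert the collected state-weighted count drift to $(\bar{\bm w}^{k})^\top\bar{\bm\lambda}-\Psi_k(\bm s,t)$ using \eqref{eq:candidate_count_drift} and the summation-by-parts identity. The owner-based bookkeeping you flag is not actually an obstacle, since that identity holds for any single vector $\bar{\bm w}^k$ on all of $\mathcal M$, exactly as the paper uses it; the paper also folds bounded signal-transition effects into $K_V$.
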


Lemma~\ref{lem:service_weighted_drift} gives the usual arrival-minus-pressure drift structure for the composite state. Its derivation is provided in Appendix~\ref{app:proof_service_drift}. What remains is to relate the decentralized phase decisions actually implemented by Com-MP to the network-consistent objective $\Psi_k$.

\begin{proposition}[Bounded-error network MP property]
\label{prop:bounded_error_mp}
Let $\bm s_k^*=(\bm s_n^{\phi_{n,k}^*})_{n\in\mathcal N}\in\mathcal S$ denote the network signal state induced by the local phases selected through~\eqref{eq:com_mp_control} at $t_k$. There exists a finite constant $B$, independent of the fictitious-source queue magnitudes, the selected composition profile, and the current admissible movement-transition matrix $\bm R_k$, such that
\begin{equation}
    \Psi_k(\bm s_k^*,t)
    \ge
    \max_{\bm s\in\mathcal S}
    \Psi_k(\bm s,t)
    -
    B,
    \qquad
    t\in I_k.
    \label{eq:bounded_error_mp}
\end{equation}
\end{proposition}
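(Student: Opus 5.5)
The plan is to exploit the fact that, in $\Psi_k$, every term that depends on the candidate signal state involves only \emph{real} movements, whose analytical states are uniformly bounded by Lemma~\ref{lem:composite_admissibility}. If that holds, the decision loss of any locally selected phase profile, including the one produced by~\eqref{eq:com_mp_control}, is bounded by a constant that does not involve the fictitious-source queues.

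\textbf{Step 1: decomposition.} I would first decompose $\Psi_k(\bm s,t)$ into a candidate-independent part and a sum of local parts:
\[
\Psi_k(\bm s,t)=\sum_{a\in\mathcal M_{\mathcal F}}\bar p_a^k(t)\,\bar d_a(t)+\sum_{n\in\mathcal N}\Psi_{k,n}(\bm s_n,t),
\qquad
\Psi_{k,n}(\bm s_n,t)\triangleq\sum_{a\in\mathcal M_n}\bar p_a^k(t)\,\bar d_a(\bm s_n,t).
\]
This uses Assumption~\ref{ass:network_regularity}: fictitious-source discharge is candidate-independent, and real-movement discharge depends only on the local component $\bm s_n$. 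The first sum is the only place where the unbounded source queues $z_a$, $a\in\mathcal M_{\mathcal F}$, enter. It is identical for $\bm s_k^*$ and for any maximizer, so it cancels in the difference $\max_{\bm s}\Psi_k(\bm s,t)-\Psi_k(\bm s_k^*,t)$.

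\textbf{Step 2: uniform bound on each local objective.} For $a\in\mathcal M_n$ and each downstream movement $b\in\mathcal D(a)$, the movement $b=(j,\ell)$ originates on a real link, so $b$ is a real movement. By Lemma~\ref{lem:composite_admissibility}, $0\le\bar w_b^k\le\bar W_b$. Combined with $0\le r_b\le1$, this gives
\[
|\bar p_a^k(t)|\le \bar W_a+\sum_{b\in\mathcal D(a)}\bar W_b .
\]
This bound holds irrespective of which owner's composition $\theta_{\nu(b),\cdot,k}$ enters $\bar w_b^k$, and for every row-substochastic $\bm R_k$. Together with $0\le\bar d_a\le c_a$, it yields
\[
|\Psi_{k,n}(\bm s_n,t)|\le C_n\triangleq\sum_{a\in\mathcal M_n}c_a\Big(\bar W_a+\sum_{b\in\mathcal D(a)}\bar W_b\Big)
\]
uniformly in $t\in I_k$, in $\bm s_n\in\mathcal S_n$, and in the composition profile.

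\textbf{Step 3: conclusion.} For $\bm s_k^*$ and any $\bm s\in\mathcal S$, summing over intersections gives
\[
\Psi_k(\bm s,t)-\Psi_k(\bm s_k^*,t)=\sum_n\big[\Psi_{k,n}(\bm s_n,t)-\Psi_{k,n}(\bm s_n^{\phi_{n,k}^*},t)\big]\le 2\sum_{n}C_n\triangleq B .
\]
The constant $B$ has exactly the claimed independence properties.

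The step I expect to require the most care is not the inequality itself but making sure that nothing unbounded sneaks into the candidate-dependent part. Two points need checking. First, that no real movement has a fictitious-source movement in $\mathcal D(a)$; this holds because sources have no upstream links. Second, that the time drift of $\bar d_a$ and $\bm w$ within $I_k$, relative to the epoch values used in~\eqref{eq:com_mp_control}, does not matter; it does not, since the bound is crude and holds pointwise for every $t\in I_k$.

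A sharper constant could be obtained via Proposition~\ref{prop:normalized_composition_equivalence}. It shows that $\phi_{n,k}^*$ maximizes $\sum_h\theta_{n,h,k}P_{n,\phi}^h$ over $\mathcal P_n^{\mathrm{adm}}(t_k)$. The remaining losses would then be decomposed into three parts: (i) the mismatch between the local composition $\theta_n$ and the owner compositions $\theta_{\nu(b)}$ on downstream states, bounded by $\sum_a c_a\sum_b r_b\bar W_b$; (ii) the maximum-red restriction; and (iii) within-interval variation, controlled by the bounded-variation part of Assumption~\ref{ass:network_regularity}. This refinement is not needed for~\eqref{eq:bounded_error_mp}, and I would present the crude bound as the proof.
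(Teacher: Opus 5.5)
Your proof is correct, and it is more direct than the paper's. The paper goes through Proposition~\ref{prop:normalized_composition_equivalence}. It introduces local proof-equivalent pressures $p_{n,a}^{\theta}=\sum_h\theta_{n,h,k}p_a^h$ and bounds three losses separately: the mismatch between these and the owner-based pressures $\bar p_a^k$ (constant $B_\theta$), the loss from restricting to $\mathcal P_n^{\mathrm{adm}}(t_k)$ (constant $B_{\mathrm{red}}$), and within-interval variation (constant $B_\Delta$). This gives $B=2B_\theta+B_{\mathrm{red}}+2B_\Delta$. You instead cancel the candidate-independent source term and bound the remaining local objectives uniformly. That proves the inequality for \emph{every} $\bm s\in\mathcal S$, pointwise in $t\in I_k$, without using the selection rule~\eqref{eq:com_mp_control} at all. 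Nothing is lost relative to the paper. Its maximum-red step sets $B_{\mathrm{red},n}=2B_n^{\mathrm{loc}}$, which is the same ``any two phases differ by at most twice a uniform bound'' argument, so the argmax property is not really exploited there either. Moreover, the paper's $\bar P_a$ sums $W_b$ over all of $\mathcal M_{\mathcal N}$ rather than over $\mathcal D(a)$, so its $B_{\mathrm{red}}$ alone already dominates your $2\sum_nC_n$. The paper's decomposition buys interpretation: it names the three mechanisms (heterogeneous owner compositions, forced phases, holding over $I_k$) that separate the decentralized decision from the network objective. Yours makes transparent that the proposition rests only on finite storage and on source discharge being candidate-independent. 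Your closing claim that routing through Proposition~\ref{prop:normalized_composition_equivalence} would give a sharper constant is therefore doubtful. A binding maximum-red constraint can force a single phase, so that loss cannot in general be bounded better than crudely, and the resulting constant is larger than yours. Keep your explicit check that $\mathcal D(a)$ contains no fictitious-source movements. The paper relies on this silently when it restricts downstream sums to $\mathcal D(a)\cap\mathcal M_{\mathcal N}$ in its pressure bound.
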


The bounded gap has three sources. First, neighboring intersections may use different composition vectors, whereas $\Psi_k$ uses one network-consistent analytical state for each movement. Second, the maximum-red requirement can restrict the local phase choice away from the unconstrained MP maximizer. Third, the selected phase is held over the finite interval $I_k$, during which the real-movement part of the analytical objective can vary. Under Assumptions~\ref{ass:network_regularity}--\ref{ass:constituent_admissibility}, all three effects are uniformly bounded; their explicit bounds are derived in Appendix~\ref{app:proof_bounded_error_mp}.

The key property is that $B$ is independent of the unbounded source-queue magnitudes. Adaptive and intersection-specific pressure composition therefore perturbs the network MP objective only by a bounded amount, preserving the strict queue-dependent drift margin.

\subsubsection{Strong-stability guarantee}
\label{sec:stability_proof_main}

The preceding results establish the two ingredients needed for the network-level guarantee: negative drift within each composition interval and a bounded effect when the analytical composition changes.

\begin{lemma}[Negative drift within a composition interval]
\label{lem:fixed_action_drift}
For every $\bar{\bm\lambda}\in\bm\Lambda^\circ$, there exist finite constants $K<\infty$ and $\epsilon>0$, uniform over the selected composition profile, such that
\begin{equation}
    \mathcal L_{\bm s_k^*}V_k(t)
    \le
    K
    -
    \epsilon
    \sum_{a\in\mathcal M_{\mathcal F}}z_a(t),
    \qquad
    t\in I_k.
    \label{eq:fixed_action_drift}
\end{equation}
\end{lemma}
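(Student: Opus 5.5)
The plan is to chain Lemma~\ref{lem:service_weighted_drift}, Proposition~\ref{prop:bounded_error_mp}, and the strict-stabilizability witness of Definition~\ref{def:strict_stabilizable_region}, comparing the analytical composite objective $\Psi_k$ with the reference objective $\Psi^{\mathrm{ref}}$.

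First, apply Lemma~\ref{lem:service_weighted_drift} with the applied state $\bm s=\bm s_k^*$:
\[
\mathcal L_{\bm s_k^*}V_k(t)\le(\bar{\bm w}^k(t))^\top\bar{\bm\lambda}-\Psi_k(\bm s_k^*,t)+K_V .
\]
Then use Proposition~\ref{prop:bounded_error_mp} to replace $\Psi_k(\bm s_k^*,t)$ by $\max_{\bm s}\Psi_k(\bm s,t)-B$. This maximum is at least $\Psi_k(\bm s^{\mathrm{ref}}(t),t)$, where $\bm s^{\mathrm{ref}}(t)$ is the signal state the reference policy would select at time $t$. Hence
\[
\mathcal L_{\bm s_k^*}V_k\le(\bar{\bm w}^k)^\top\bar{\bm\lambda}-\Psi_k(\bm s^{\mathrm{ref}},t)+K_V+B .
\]

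Second, split each analytical quantity into its source part and its real-movement part. The exogenous demand $\bar{\bm\lambda}$ vanishes on real movements, and on source movements $\bar w_a^k=z_a=w_a^{\mathrm{ref}}$. Therefore $(\bar{\bm w}^k)^\top\bar{\bm\lambda}=(\bm w^{\mathrm{ref}})^\top\bar{\bm\lambda}$ exactly.

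For the service terms, write
\[
\Psi_k(\bm s,t)-\Psi^{\mathrm{ref}}(\bm s,t)=(\bar{\bm w}^k-\bm w^{\mathrm{ref}})^\top[\bm I-\bm R^\top]\bar{\bm d}(\bm s,t).
\]
The difference vector vanishes on $\mathcal M_{\mathcal F}$. On $\mathcal M_{\mathcal N}$ it is bounded by $\bar W_a+\sup|w_a^{\mathrm{ref}}|$, using Lemma~\ref{lem:composite_admissibility} and the uniform boundedness of the reference state. The matrix $\bm R$ is row-substochastic with fixed sparsity, and $\bar d_a\le c_a$. So the difference is bounded by a constant $C$ that is independent of queues, compositions and $\bm R_k$.

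The one delicate point is the source-row contribution. A source movement $a$ has downstream real movements $b\in\mathcal D(a)$, and its pressure contains $-\sum_b r_b\bar w^k_b$. These entries are real-movement states, so they are bounded and are already covered by the bound above. Its own term $z_a\bar d_a$ multiplies $z_a$, but it is identical in $\Psi_k$ and $\Psi^{\mathrm{ref}}$, because source discharge is candidate-independent and the source state coincides. The two therefore cancel, and this confirms that $C$ does not grow with the source queues.

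Third, combine these pieces with the matched-drift identity~\eqref{eq:reference_matched_drift}:
\[
(\bm w^{\mathrm{ref}})^\top\bar{\bm\lambda}-\Psi^{\mathrm{ref}}(\bm s^{\mathrm{ref}},t)=\mathcal L_{\sigma^{\mathrm{ref}}}V^{\mathrm{ref}}-\xi^{\mathrm{ref}}\le K_0-\epsilon\sum_{a\in\mathcal M_{\mathcal F}}z_a+K_{\mathrm{ref}} .
\]
This gives~\eqref{eq:fixed_action_drift} with $K=K_V+B+C+K_0+K_{\mathrm{ref}}$ and the same $\epsilon$ as the reference witness. All of these constants are independent of the composition profile.

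The main obstacle I expect is a subtlety in using the reference witness. The identity~\eqref{eq:reference_matched_drift} is stated along the reference policy's own dynamics, whereas here we need it at the current traffic state generated by Com-MP. I would interpret the witness pointwise, as the definition's phrase "uniformly over admissible traffic states" permits. That is, both the identity and the drift bound are evaluated at the current state with the candidate $\bm s^{\mathrm{ref}}(t)$, so that the right-hand side depends only on current quantities. If this reading needs justification, I would note that the reference state is $\mathcal I_{t^-}$-measurable and that the drift is a generator applied to the present state.

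A second, minor check concerns the maximum-red restriction. Even if $\bm s^{\mathrm{ref}}$ is not in Com-MP's admissible set, the comparison above goes through the unconstrained maximum over all of $\mathcal S$, which Proposition~\ref{prop:bounded_error_mp} already handles.
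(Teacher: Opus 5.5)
Your proposal is correct and follows the paper's proof essentially step for step. The paper chains the same ingredients: the reference witness's matched-drift identity and strict margin, the exact equality of the arrival terms via the shared source queues, a bounded comparison constant $K_{\mathrm{cmp}}$ between $\Psi^{\mathrm{ref}}$ and $\Psi_k$ at the reference action, Proposition~\ref{prop:bounded_error_mp}, and Lemma~\ref{lem:service_weighted_drift}, arriving at the same constant $K=K_0+K_{\mathrm{ref}}+K_{\mathrm{cmp}}+B+K_V$. Your explicit source-row cancellation and pointwise reading of the reference witness spell out what the paper leaves implicit.
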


The strict margin available to the reference policy in Definition~\ref{def:strict_stabilizable_region} is preserved because the reference and composite constructions use the same physical source queues, whereas their real-movement contributions differ only by bounded terms. Proposition~\ref{prop:bounded_error_mp} then transfers this margin to the implemented Com-MP action. The complete argument is given in Appendix~\ref{app:proof_fixed_action_drift}.

Composition coefficients may change at every decision epoch. The next lemma establishes a uniform bound on the Lyapunov-function change induced by each composition update.

\begin{lemma}[Bounded analytical-composition update]
\label{lem:bounded_composition_jump}
There exists a finite constant $K_{\mathrm{sw}}$ such that
\begin{equation}
    \left|
        V_k(t_k^+)-V_{k-1}(t_k^-)
    \right|
    \le
    K_{\mathrm{sw}}
    \label{eq:bounded_composition_jump}
\end{equation}
at every decision epoch, where $t_k^-$ and $t_k^+$ denote the instants immediately before and after the analytical composition update at $t_k$, respectively.
\end{lemma}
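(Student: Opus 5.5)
The plan is to compare $V_k$ and $V_{k-1}$ term by term at the common instant $t_k$. The composition update only changes which convex weights enter the analytical potential. The physical state $\bm z$ does not jump at the update itself: the update is a change of bookkeeping, not a traffic event. Hence $z_a(t_k^+)=z_a(t_k^-)$ for every $a\in\mathcal M$, and the constituent Lyapunov terms $G_a^h(t_k)$ are likewise unchanged by the update. Any physical jump coinciding with $t_k$ is already accounted for in the extended drift operator, so it is excluded from this comparison.

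Using~\eqref{eq:composite_lyapunov}, the two quadratic terms $\frac{\kappa}{2}\sum_{a\in\mathcal M_{\mathcal N}}z_a^2$ and $\frac12\sum_{a\in\mathcal M_{\mathcal F}}z_a^2$ are identical in $V_k(t_k^+)$ and $V_{k-1}(t_k^-)$ and cancel. This cancellation is the crucial point: the potentially unbounded source-queue part of the Lyapunov function does not depend on the composition at all. That independence follows from the convention in Assumption~\ref{ass:constituent_admissibility} and in~\eqref{eq:analytical_composite_state} that fictitious movements always use $z_a$ and $z_a^2/2$, regardless of $\bm\theta$.

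The remaining difference is
\[
V_k(t_k^+)-V_{k-1}(t_k^-)=\sum_{a\in\mathcal M_{\mathcal N}}\sum_{h=1}^H\bigl(\theta_{\nu(a),h,k}-\theta_{\nu(a),h,k-1}\bigr)G_a^h(t_k).
\]
Both $\bm\theta_{\nu(a),k}$ and $\bm\theta_{\nu(a),k-1}$ lie in $\Delta_H$ by Proposition~\ref{prop:normalized_composition_equivalence}. Therefore $\sum_h|\theta_{\nu(a),h,k}-\theta_{\nu(a),h,k-1}|\le 2$. Combining this with $0\le G_a^h\le G_a^{h,\max}$ gives a per-movement bound of $2\bar G_a$. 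Alternatively, since each $\bar G_a$ lies in $[0,\bar G_a]$ by Lemma~\ref{lem:composite_admissibility}, the per-movement bound is even $\bar G_a$. We may then take $K_{\mathrm{sw}}=\sum_{a\in\mathcal M_{\mathcal N}}\bar G_a$, which is finite because the real-movement set is finite.

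The only delicate step is justifying that no physical jump is counted at the update instant. If one insists on the full jump over $t_k$, I would add the bounded one-event increment. The real-movement terms are bounded by finite storage. The source-queue increment $z_a\mapsto z_a\pm1$ changes $z_a^2/2$ by $z_a+\tfrac12$, which is not uniformly bounded, so that increment must be assigned to the drift operator as the paper does. Accordingly, I would state explicitly that the lemma concerns only the composition-induced change. The resulting constant is independent of the policy, of $\bm R_k$, and of the queue magnitudes.
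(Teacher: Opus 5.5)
Your proof is correct and is essentially the paper's argument: the quadratic real-movement and source terms of $V_k$ do not depend on the analytical coefficients and cancel, leaving $\sum_{a\in\mathcal M_{\mathcal N}}\sum_h(\theta_{\nu(a),h,k}-\theta_{\nu(a),h,k-1})G_a^h(t_k)$, which the simplex $\ell_1$ bound controls by $2\sum_{a\in\mathcal M_{\mathcal N}}\max_h G_a^{h,\max}$, exactly the paper's $K_{\mathrm{sw}}$. Your alternative remark is also valid and only halves the constant: both $\bar G_a^{k-1}(t_k^-)$ and $\bar G_a^{k}(t_k^+)$ lie in $[0,\bar G_a]$ (you wrote $\bar G_a$ where you meant these convex combinations).
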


At an update instant, the physical traffic state is unchanged and only the analytical coefficients are replaced. Since both the old and new coefficient vectors lie in the simplex and all real-movement constituent Lyapunov terms are bounded, the resulting jump is uniformly bounded. The proof is given in Appendix~\ref{app:proof_composition_jump}.

\begin{theorem}[Strong stability of adaptive Com-MP]
\label{thm:com_mp_stability}
Suppose Assumptions~\ref{ass:network_regularity}--\ref{ass:constituent_admissibility} hold. For every strictly stabilizable mean demand vector $\bar{\bm\lambda}\in\bm\Lambda^\circ$, let each intersection select an arbitrary nonanticipative composition vector $\bm\alpha_{n,k}\in\Delta_H$ at every decision epoch, form the normalized phase preferences in~\eqref{eq:normalized_phase_pressure}, and apply the phase rule in~\eqref{eq:com_mp_control} subject to the maximum-red requirement~\eqref{eq:maximum_red_requirement}. Then adaptive intersection-specific Com-MP strongly stabilizes the network in the sense of Definition~\ref{def:strong_stability}.
\end{theorem}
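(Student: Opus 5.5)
The plan is to glue the interval-wise drift bounds of Lemma~\ref{lem:fixed_action_drift} across decision epochs, using Lemma~\ref{lem:bounded_composition_jump} to control the jumps at composition updates. From the resulting time-integrated inequality we then read off strong stability, in the same way as Lemma~\ref{lem:strong_stability_criterion}.

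\emph{Step 1: a single Lyapunov process.} Define the piecewise process $V(t)\triangleq V_k(t)$ for $t\in I_k$. By Proposition~\ref{prop:normalized_composition_equivalence}, the implemented Com-MP phase choice~\eqref{eq:com_mp_control} at each intersection coincides with maximizing a raw convex composition $\bm\theta_{n,k}$ over $\mathcal P_n^{\mathrm{adm}}(t_k)$. This holds for any nonanticipative $\bm\alpha_{n,k}$, and $\bm\theta_{n,k}$ is $\mathcal I_{t_k^-}$-measurable. Hence $V_k$ and the analytical objective $\Psi_k$ are well defined, and Proposition~\ref{prop:bounded_error_mp} and Lemma~\ref{lem:fixed_action_drift} apply on every interval. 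The constants $K,\epsilon$ there do not depend on the composition profile, so the same pair works for every $k$.

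\emph{Step 2: integrate within intervals.} Fix $T$ and let $N_T=\lceil T/\Delta\rceil$. On each $I_k$ we apply Dynkin's formula for the extended generator; it is legitimate because the jumps are bounded, the rates are bounded, and the source queues grow at most linearly over a finite horizon, so everything is integrable. This gives
\[
\mathbb E[V_k(t_{k+1}^-)]-\mathbb E[V_k(t_k^+)]\le K\Delta-\epsilon\int_{I_k}\mathbb E\Big[\sum_{a\in\mathcal M_{\mathcal F}}z_a(t)\Big]dt .
\]

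\emph{Step 3: telescope across updates.} At each epoch, Lemma~\ref{lem:bounded_composition_jump} bounds the update jump by $V_k(t_k^+)\le V_{k-1}(t_k^-)+K_{\mathrm{sw}}$. Summing over $k=0,\dots,N_T-1$ and using $V\ge0$ yields
\[
\epsilon\int_0^T\mathbb E\Big[\sum_{a\in\mathcal M_{\mathcal F}}z_a\Big]dt\le \mathbb E[V_0(0)]+(K\Delta+K_{\mathrm{sw}})N_T .
\]
We divide by $T$ and use $N_T/T\to1/\Delta$. This shows the source-queue time average is bounded by $(K+K_{\mathrm{sw}}/\Delta)/\epsilon$ in the limit. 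Effectively, the switching cost is absorbed into the constant $K+K_{\mathrm{sw}}/\Delta$ of Lemma~\ref{lem:strong_stability_criterion}. Real-movement counts satisfy $z_a\le z_a^{\max}$ by Assumption~\ref{ass:network_regularity}, so their contribution to~\eqref{eq:strong_stability} is bounded. Adding the two contributions gives~\eqref{eq:strong_stability}.

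\emph{Main obstacle.} The delicate part is the interface between Steps 2 and 3. First, the positive jumps at updates must not accumulate faster than linearly in $T$. This holds because $K_{\mathrm{sw}}$ is uniform and the epochs are spaced exactly $\Delta$ apart. Second, $V$ is not a single fixed functional, so Lemma~\ref{lem:strong_stability_criterion} cannot be cited verbatim; its integration argument must be repeated with the extra jump term. I would also check two points: that phase-transition intervals within $I_k$ are already covered by the bounded loss in Proposition~\ref{prop:bounded_error_mp}, and that the initial condition $\mathbb E[V_0(0)]$ is finite, which I would assume for deterministic or integrable initial queues.
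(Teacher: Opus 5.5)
Your proposal is correct and takes essentially the same route as the paper. The paper's completion (Appendix~\ref{app:completion_com_mp_stability}) likewise integrates the composition-uniform drift bound of Lemma~\ref{lem:fixed_action_drift} over each $I_k$ and adds the update jumps bounded by $K_{\mathrm{sw}}$ from Lemma~\ref{lem:bounded_composition_jump}. It then telescopes using $V\ge 0$ to obtain the limit bound $K/\epsilon+K_{\mathrm{sw}}/(\epsilon\Delta)$ on the time-averaged source queues, and adds the finite-storage bound on real movements. Your use of $N_T=\lceil T/\Delta\rceil$ handles the final partial interval that the paper covers in a closing remark. Your checks on Dynkin integrability and on finiteness of $\mathbb E[V_0(0)]$ state explicitly what the paper leaves implicit.
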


Theorem~\ref{thm:com_mp_stability} shows that pressure composition may vary both spatially and temporally without reducing the strictly stabilizable region characterized in Definition~\ref{def:strict_stabilizable_region}. In particular, the guarantee does not depend on the specific algorithm used to generate the composition coefficients: they may be produced by a learned policy, a heuristic rule, or another nonanticipative mechanism. Learning can therefore be used to improve how constituent MP preferences are combined without becoming part of the underlying stability condition.

\begin{proof}[Proof sketch]
Proposition~\ref{prop:normalized_composition_equivalence} converts the implemented normalized phase preference into an equivalent raw-pressure composition, and Lemma~\ref{lem:composite_admissibility} shows that the resulting analytical state remains in the admissible constituent class. Proposition~\ref{prop:bounded_error_mp} bounds the loss of the implemented decentralized decision relative to the corresponding network MP objective by a constant independent of the source queues. Lemma~\ref{lem:fixed_action_drift} therefore gives negative drift throughout every decision interval.

The composition may change between intervals, but Lemma~\ref{lem:bounded_composition_jump} bounds each resulting Lyapunov jump by $K_{\mathrm{sw}}$. For any integer $m\ge1$, let $t_m=m\Delta$ denote the end of the first $m$ complete decision intervals. Summing over these intervals gives
\[
    \mathbb E[V_m(t_m^+)]
    -
    \mathbb E[V_0(t_0^+)]
    \le
    m(K\Delta+K_{\mathrm{sw}})
    -
    \epsilon
    \int_0^{t_m}
    \mathbb E
    \left[
        \sum_{a\in\mathcal M_{\mathcal F}}z_a(t)
    \right]dt.
\]
Since $V_m(t_m^+)\ge0$ and $t_m=m\Delta$, the long-run average fictitious-source queue is finite. Real-movement counts are uniformly bounded by finite storage, which yields Definition~\ref{def:strong_stability}. The complete telescoping argument is given in Appendix~\ref{app:completion_com_mp_stability}.
\end{proof}

A finite yellow or intergreen interval contributes only a uniformly bounded amount to the drift over one decision interval because real-movement states and movement discharges are bounded. This contribution is absorbed into the finite drift constants above.

\subsection{Q--CV Com-MP specialization}
\label{sec:qcv_stability}

Theorem~\ref{thm:com_mp_stability} applies to any collection of constituent MP states satisfying Assumption~\ref{ass:constituent_admissibility}. We now verify that the queue-based and CV-supported travel-time states used in the representative Q--CV formulation satisfy these conditions under an explicit service assumption.

For Q-MP, the state in~\eqref{eq:qmp_state} and its movement-level Lyapunov term are
\[
    w_a^Q(t)=\frac{z_a(t)}{L_i},
    \qquad
    G_a^Q(t)=\frac{z_a^2(t)}{2L_i},
    \qquad
    a=(i,j)\in\mathcal M_{\mathcal N}.
\]
Finite storage bounds both $w_a^Q(t)$ and $G_a^Q(t)$, and their continuous and event-level dynamics satisfy the matched drift relation in~\eqref{eq:constituent_potential_drift} up to a finite remainder.

For CV-MP, the state in~\eqref{eq:cv_travel_time_state} and the associated movement-level Lyapunov term are
\[
    w_a^{CV}(t)
    =
    \sum_{v\in\mathcal J_a^{CV}(t)}
    \frac{\tau_v(t)}{\bar\tau_a},
    \qquad
    G_a^{CV}(t)
    =
    z_a(t)w_a^{CV}(t).
\]
The event-level evolution of this potential also satisfies~\eqref{eq:constituent_potential_drift}. Its boundedness additionally requires that vehicles occupying a finite movement receive recurrent effective service.

\begin{assumption}[Recoverable movement service]
\label{ass:recoverable_cv}
Traffic on each real movement follows first-in-first-out (FIFO) order, and the non-control running component of travel time on a finite real link is uniformly bounded. For each real movement $a$, let $\phi(a)$ denote a phase that serves it, and call a decision interval in which $\phi(a)$ is selected a serving activation. If $a$ remains continuously occupied, there exist an integer $m_a<\infty$ and a constant $\underline d_a>0$ such that the cumulative realized discharge of $a$ over any $m_a$ consecutive serving activations is at least $\underline d_a$.
\end{assumption}

The condition is expressed in terms of realized discharge and therefore incorporates downstream receiving restrictions and spillback. Together with the maximum-red requirement, it ensures recurrent positive service within a bounded time horizon.

\begin{proposition}[Admissibility of the Q--CV constituents]
\label{prop:qcv_admissibility}
Suppose Assumptions~\ref{ass:network_regularity} and~\ref{ass:recoverable_cv} hold. Under the maximum-red requirement~\eqref{eq:maximum_red_requirement}, every real movement $a$ admits a finite constant $\tau_a^{\max}$ such that the elapsed link travel time of every CV on that movement satisfies
\[
    \tau_v(t)\le\tau_a^{\max}.
\]
Consequently, the Q-MP and CV-MP states and their matched Lyapunov terms are uniformly bounded and satisfy Assumption~\ref{ass:constituent_admissibility}.
\end{proposition}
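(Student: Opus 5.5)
The argument has three parts: first a uniform bound $\tau_a^{\max}$ on elapsed CV travel time, then the two admissibility properties for Q-MP and CV-MP in turn. The travel-time bound is the core, and I obtain it from FIFO, the maximum-red requirement~\eqref{eq:maximum_red_requirement}, and Assumption~\ref{ass:recoverable_cv}.

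\textbf{Travel-time bound.} Fix a real movement $a$ and a CV $v$ entering it at time $t_v$. Under FIFO, $v$ can only leave after every vehicle ahead of it has discharged. At entry there are at most $z_a^{\max}$ such vehicles.

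By the maximum-red requirement, the serving phase $\phi(a)$ is activated at least once in every window of length $T^{red,\max}_{\nu(a),\phi(a)}+\Delta$. Hence any $m_a$ consecutive serving activations occur within time $T_a \triangleq m_a(T^{red,\max}_{\nu(a),\phi(a)}+2\Delta)$.

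While $a$ stays occupied, each block of $m_a$ serving activations discharges at least $\underline d_a$ (in vehicle units). So the queue ahead of $v$, together with $v$ itself, is cleared after at most $\lceil (z_a^{\max}+1)/\underline d_a\rceil$ blocks. If $a$ becomes empty earlier, $v$ has already left and there is nothing to bound.

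There is one subtlety: the vehicles ahead of $v$ must first reach the stop line. This is covered by the bounded non-control running time $\tau_a^{run}$. Adding it gives
\[
\tau_a^{\max} = \tau_a^{run} + T_a\left\lceil \frac{z_a^{\max}+1}{\underline d_a}\right\rceil < \infty .
\]

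\textbf{Boundedness of the states.} From $\tau_a^{\max}$, and since $|\mathcal J_a^{CV}(t)|\le z_a(t)\le z_a^{\max}$, we get
\[
w_a^{CV}\le z_a^{\max}\tau_a^{\max}/\bar\tau_a ,\qquad G_a^{CV}\le (z_a^{\max})^2\tau_a^{\max}/\bar\tau_a .
\]
The Q-MP quantities are bounded directly by finite storage: $w_a^Q\le z_a^{\max}/L_i$ and $G_a^Q\le (z_a^{\max})^2/(2L_i)$. This gives~\eqref{eq:constituent_admissibility} for both constituents.

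\textbf{Matched drift for Q-MP.} For $G_a^Q=z_a^2/(2L_i)$, the continuous part of the drift is $(z_a/L_i)\,\mathcal L_{\bm s}z_a$. Each unit jump of $z_a$ adds a second-order correction of $1/(2L_i)$ per event. Event rates are bounded by Assumption~\ref{ass:network_regularity}, so this correction is bounded and becomes $K_a^Q$.

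\textbf{Matched drift for CV-MP.} Write $G_a^{CV}=z_a w_a^{CV}$ and split its drift into two pieces.

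\emph{Between events.} $w_a^{CV}$ grows at rate $|\mathcal J_a^{CV}|/\bar\tau_a$, so the drift contribution is $z_a|\mathcal J_a^{CV}|/\bar\tau_a\le (z_a^{\max})^2/\bar\tau_a$. This is bounded and can be absorbed into the remainder.

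\emph{At events.} An arrival raises $z_a$ by one and adds $w_a^{CV}$, plus zero if the arriving vehicle is a CV, since its elapsed time starts at zero. A departure lowers $z_a$ by one and removes $w_a^{CV}$ together with $z_a\tau_v/\bar\tau_a$ if $v$ is a CV. Both changes are of the form $w_a^{CV}\cdot\Delta z_a$ plus a term bounded by $z_a^{\max}\tau_a^{\max}/\bar\tau_a$. Taking conditional expectations against bounded event rates yields
\[
\mathcal L_{\bm s}G_a^{CV}\le w_a^{CV}\,\mathcal L_{\bm s}z_a + K_a^{CV}
\]
with finite $K_a^{CV}$.

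\textbf{Fictitious sources.} On fictitious source movements both constituents use $w_a^h=z_a$ and $G_a^h=z_a^2/2$, as stipulated in Assumption~\ref{ass:constituent_admissibility}. Nothing further needs to be checked there.

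The main obstacle is the travel-time bound. Converting ``$m_a$ serving activations'' into wall-clock time requires the maximum-red guarantee. One must also handle intervals in which $a$ temporarily empties; these only help $v$ leave, which is why the bound is stated for continuous occupancy.
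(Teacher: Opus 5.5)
Your proof is correct and follows the paper's route essentially step for step: FIFO and finite storage bound the traffic ahead of $v$, the maximum-red rule turns blocks of $m_a$ serving activations into bounded wall-clock time, and Assumption~\ref{ass:recoverable_cv} supplies at least $\underline d_a$ discharge per block. The matched drifts then come from the same event-level jump computations for $G_a^Q=z_a^2/(2L_i)$ and $G_a^{CV}=z_aw_a^{CV}$, with one small refinement: a CV departure changes $G_a^{CV}$ by exactly $-w_a^{CV}-(z_a-1)\tau_v/\bar\tau_a$ (pre-event $z_a$), which is already at most $-w_a^{CV}$, so, as in the paper, $\tau_a^{\max}$ is needed only for the uniform bounds on $w_a^{CV}$ and $G_a^{CV}$, not for the drift inequality.
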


For Q-MP, the required bounds follow directly from finite movement storage. For CV-MP, finite storage and FIFO bound the amount of traffic that can remain ahead of a vehicle, while the maximum-red requirement and Assumption~\ref{ass:recoverable_cv} guarantee a finite amount of positive service over successive serving activations. A CV therefore cannot remain on a real movement indefinitely, which bounds its travel-time contribution and hence the CV-MP state. The event-level matched-potential calculations and the complete bound are given in Appendix~\ref{app:bounded_cv_travel_time}.

\begin{corollary}[Stability of Q--CV Com-MP]
\label{cor:q_cv_com_mp}
Suppose Assumptions~\ref{ass:network_regularity} and~\ref{ass:recoverable_cv} hold. Under the maximum-red requirement~\eqref{eq:maximum_red_requirement}, every nonanticipative intersection-wise Q--CV composition sequence generated through~\eqref{eq:qcv_composite_pressure} and~\eqref{eq:qcv_com_mp_control} strongly stabilizes the network for every $\bar{\bm\lambda}\in\bm\Lambda^\circ$.
\end{corollary}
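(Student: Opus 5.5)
The corollary is a direct specialization of Theorem~\ref{thm:com_mp_stability}. The plan is to check the theorem's hypotheses for the Q--CV setting with $H=2$ and then invoke it. First, Assumption~\ref{ass:network_regularity} is assumed directly, so only Assumption~\ref{ass:constituent_admissibility} needs verification for the two constituents. This is exactly the content of Proposition~\ref{prop:qcv_admissibility}. Under Assumptions~\ref{ass:network_regularity} and~\ref{ass:recoverable_cv} and the maximum-red requirement~\eqref{eq:maximum_red_requirement}, it gives uniform bounds on $w_a^Q$, $G_a^Q$, $w_a^{CV}$ and $G_a^{CV}$ together with the matched drift relation~\eqref{eq:constituent_potential_drift}. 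On fictitious sources both constituents use $z_a$ with Lyapunov term $z_a^2/2$ by construction.

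Second, I would show that the Q--CV rule is an instance of the general rule. Setting $\bm\alpha_{n,k}=[1-\alpha_{n,k},\alpha_{n,k}]^\top$ with $\alpha_{n,k}\in[0,1]$ gives an element of $\Delta_2$. Then~\eqref{eq:qcv_composite_pressure} coincides with~\eqref{eq:composite_phase_pressure}, and~\eqref{eq:qcv_com_mp_control} coincides with~\eqref{eq:com_mp_control} over $\mathcal P_n^{\mathrm{adm}}(t_k)$. The sensing-availability override, which sets $\alpha_{n,k}=0$ when no usable CV state is available, must preserve nonanticipativity. It does, because the availability indicator is observable at $t_k$ and hence $\mathcal I_{t_k^-}$-measurable. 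The overridden composition is therefore still a nonanticipative point of the simplex. Any nonanticipative Q--CV sequence, including one produced by the RL policy, thus belongs to the class of composition policies allowed in Theorem~\ref{thm:com_mp_stability}.

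Third, applying Theorem~\ref{thm:com_mp_stability} yields strong stability for every $\bar{\bm\lambda}\in\bm\Lambda^\circ$.

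The only delicate point is a possible circularity. Proposition~\ref{prop:qcv_admissibility} uses recurrent service under the applied phase sequence, namely the maximum-red constraint combined with Assumption~\ref{ass:recoverable_cv}. The theorem, in turn, uses admissibility uniformly over all candidate states. I would argue that the CV travel-time bound holds for every admissible signal policy satisfying~\eqref{eq:maximum_red_requirement}. It therefore holds in particular along the Com-MP trajectory, whatever composition is chosen, so the constants $W_a^{CV}$, $G_a^{CV,\max}$ and $K_a^{CV}$ are uniform over composition profiles, as Lemma~\ref{lem:service_weighted_drift} requires.
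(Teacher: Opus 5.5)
Your proposal is correct and follows the paper's route. The corollary comes from combining Proposition~\ref{prop:qcv_admissibility}, which verifies Assumption~\ref{ass:constituent_admissibility} for the Q-MP and CV-MP states, with Theorem~\ref{thm:com_mp_stability}, after identifying the Q--CV rule as the $H=2$ case with $\bm\alpha_{n,k}=[1-\alpha_{n,k},\alpha_{n,k}]^\top\in\Delta_2$. Your remarks on the nonanticipativity of the sensing override and on the composition-independence of $\tau_a^{\max}$ are sound elaborations of points the paper leaves implicit.
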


\subsection{Trustworthy Learning through Stability Preservation}
\label{sec:learning_implications}

The preceding results provide a mathematical basis for trustworthy learning-enabled control. The stability analysis is uniform over the admissible composition profile and does not require the composition coefficients to be generated by a particular learning algorithm. Learning can therefore adapt the relative influence of the constituent MP preferences for performance, while network-level queue stability remains anchored in the structured Com-MP mechanism.

\begin{proposition}[Stability under learned composition]
\label{prop:policy_independent_learning}
Suppose the conditions of Theorem~\ref{thm:com_mp_stability} hold. Any nonanticipative learned policy that selects $\bm\alpha_{n,k}\in\Delta_H$ at each decision epoch preserves strong stability of the resulting Com-MP system for every $\bar{\bm\lambda}\in\bm\Lambda^\circ$.
\end{proposition}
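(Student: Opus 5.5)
The plan is to reduce Proposition~\ref{prop:policy_independent_learning} directly to Theorem~\ref{thm:com_mp_stability}. The argument rests on one observation: the theorem's hypotheses constrain the composition vectors only through two properties. The first is membership in the simplex $\Delta_H$ at every intersection and epoch. The second is nonanticipativity, meaning each $\bm\alpha_{n,k}$ is measurable with respect to the information $\mathcal I_{t_k^-}$ available at $t_k$, possibly enlarged by independent randomization. A learned policy is a map $\bm\alpha_{n,k}=\pi(\bm o_{n,k})$, and it may be stochastic or shared across intersections. Because the local observation $\bm o_{n,k}$ is formed from information available at $t_k$, the induced sequence $\{\bm\alpha_{n,k}\}$ is admissible in the sense of the theorem.

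The first step is to make this admissibility explicit. For a deterministic policy, measurability of $\bm o_{n,k}$ gives measurability of $\bm\alpha_{n,k}$. For a stochastic policy, such as sampling from a distribution parameterized by a neural network, I will enlarge the filtration with exogenous randomization that is independent of future arrivals and turning decisions. This leaves Assumption~\ref{ass:network_regularity} intact: the identity $\mathbb E[\bm\lambda(t)\mid\mathcal I_{t^-}]=\bar{\bm\lambda}$ and the candidate drift~\eqref{eq:candidate_count_drift} still hold under the enlarged filtration. Policy parameters may also be updated online during deployment. The resulting sequence remains nonanticipative as long as the updates use only past data.

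The second step is to check that no constant in the chain of results depends on the policy. The phase equivalence in Proposition~\ref{prop:normalized_composition_equivalence} holds for every $\bm\alpha\in\Delta_H$. The bounds $\bar W_a$, $\bar G_a$, and $\bar K_a$ in Lemma~\ref{lem:composite_admissibility} are maxima over $h$. The constants $K_V$, $B$, $K$, $\epsilon$, and $K_{\mathrm{sw}}$ in Lemma~\ref{lem:service_weighted_drift}, Proposition~\ref{prop:bounded_error_mp}, Lemma~\ref{lem:fixed_action_drift}, and Lemma~\ref{lem:bounded_composition_jump} are each stated as uniform over the composition profile. It follows that the telescoping inequality in the proof of Theorem~\ref{thm:com_mp_stability} holds pathwise in expectation with the same constants for any learned policy. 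Strong stability therefore follows, and the sensing-availability rule that forces $\alpha_{n,k}=0$ is just one more admissible choice.

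The main obstacle is the nonanticipativity and measurability issue for stochastic or adaptively trained policies, since the conditional drift operator is defined relative to $\mathcal I_{t^-}$. I would first handle it by defining the augmented information $\mathcal I'_{t}=\mathcal I_t\vee\sigma(U_0,\ldots,U_k)$, where $U_k$ is the policy randomness up to $t_k$. I would then verify that the conditional expectations used in Lemma~\ref{lem:fixed_action_drift} are unchanged, by independence of $U$ from future traffic events. With that in place, the proposition is a direct corollary of Theorem~\ref{thm:com_mp_stability}.
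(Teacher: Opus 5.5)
Your proposal is correct and follows the paper's proof, which is exactly your reduction. Every realized learned policy induces an admissible nonanticipative sequence $\bm\alpha_{n,k}\in\Delta_H$, so Theorem~\ref{thm:com_mp_stability} applies as stated. Your checks that $K_V$, $B$, $K$, $\epsilon$, $K_{\mathrm{sw}}$ are uniform over the composition profile, and your filtration enlargement for sampled or online-updated policies, are extra rigor the paper leaves implicit. One wording fix: the independence you need is of the policy randomness from future exogenous primitives (arrivals, routing), as you say in your first step. It is not independence from ``future traffic events'' as in your last paragraph, because phases selected through $U_k$ do influence subsequent traffic.
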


\begin{proof}
Every realized learned policy induces an admissible nonanticipative composition sequence. The result therefore follows directly from Theorem~\ref{thm:com_mp_stability}.
\end{proof}

Proposition~\ref{prop:policy_independent_learning} makes the stability guarantee independent of how the composition policy is parameterized or trained. Retraining or replacing the learned policy may change how Com-MP adapts to local traffic and sensing conditions, but it does not alter the stability conclusion as long as the realized compositions remain admissible and nonanticipative. This separation gives one concrete dimension of trustworthiness: learning is used to improve context-dependent performance within a controller family whose network-level queue stability is certified by the underlying control structure. The result concerns closed-loop stability rather than convergence or optimality of the RL algorithm itself.

For the representative Q--CV implementation, Corollary~\ref{cor:q_cv_com_mp} further verifies that the two sensing-dependent constituent states satisfy the required conditions. The remaining task is therefore to learn how the admissible composition should change with local traffic and sensing conditions, which is addressed in Section~\ref{sec:rl_composition}.

\section{Theory-Guided RL for Adaptive Composition}
\label{sec:rl_composition}

Section~\ref{sec:com_mp} establishes that the lower-level Com-MP controller remains strongly stable under arbitrary nonanticipative, intersection-specific composition sequences satisfying the stated conditions, while Proposition~\ref{prop:policy_independent_learning} makes this guarantee explicit for learned composition policies. The remaining problem is therefore one of performance adaptation: how should each intersection select a useful composition under the current traffic and sensing context? We address this problem using multi-agent reinforcement learning (MARL), while keeping the actual signal-phase decision within the Com-MP rule. In the representative Q--CV implementation, the learned policy selects only the scalar coefficient $\alpha_{n,k}$; the lower-level controller then combines the normalized Q-MP and CV-MP phase pressures through~\eqref{eq:qcv_composite_pressure} and determines the signal phase through~\eqref{eq:qcv_com_mp_control}. Thus, learning adapts the relative influence of the constituent MP preferences within the stability-preserving control structure rather than replacing the MP phase-selection mechanism.

\subsection{Shared decentralized composition policy}
\label{sec:rl_architecture}

Each real signalized intersection $n\in\mathcal N$ is treated as an agent that observes local traffic and sensing information and independently selects its composition coefficient. All intersections share a stochastic actor policy $\pi_\psi$, parameterized by $\psi$, so that
\begin{equation}
    \alpha_{n,k}
    \sim
    \pi_{\psi}
    \left(
        \cdot
        \mid
        \bm o_{n,k}
    \right),
    \qquad
    n\in\mathcal N,
    \label{eq:local_actor}
\end{equation}
where $\bm o_{n,k}$ is the local observation at decision epoch $t_k$. Parameter sharing allows traffic experience from different intersections to train a common composition rule, while conditioning on local observations preserves intersection-specific decisions. The shared policy therefore provides a common local composition rule that can be applied across heterogeneous intersection structures. The overall learning architecture is illustrated in Fig.~\ref{FIG:ctde}.

\begin{figure}
    \centering
    \includegraphics[width=\textwidth]{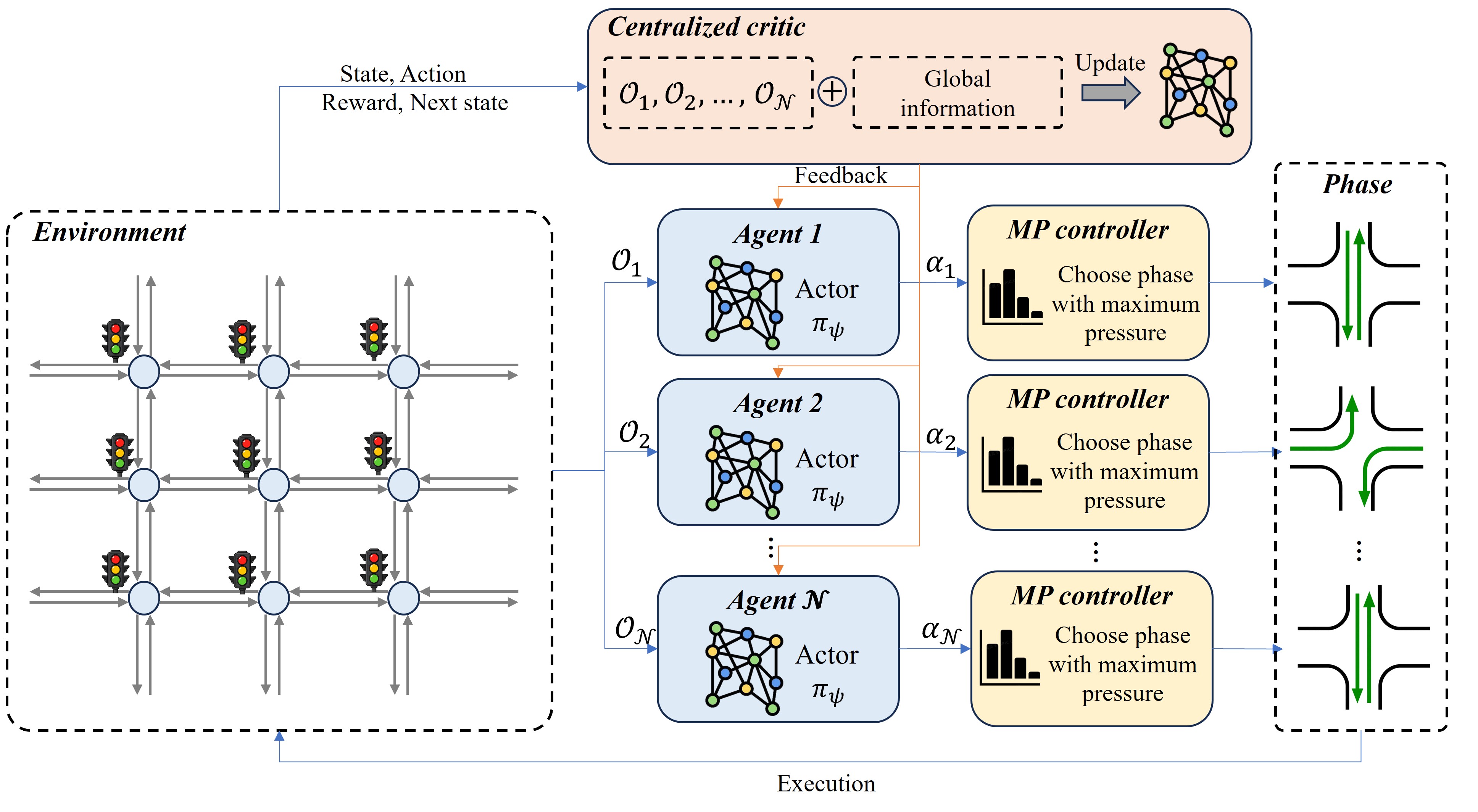}
    \caption{Centralized-training-and-decentralized-execution architecture for learning the intersection-wise Com-MP composition policy.}
    \label{FIG:ctde}
\end{figure}

Because each agent observes only local information while traffic interactions remain network-coupled, the learning problem is partially observable and multi-agent. We adopt multi-agent proximal policy optimization (MAPPO) under a centralized-training-and-decentralized-execution (CTDE) architecture. During training, a centralized critic augments each local observation with compact network-level context formed from aggregate intersection observations, which improves value estimation for locally selected compositions \citep{yu2022surprising}. During execution, the critic and network-level context are removed; each intersection uses only the shared actor in~\eqref{eq:local_actor} and its own local observation. This preserves the decentralized execution structure of Com-MP.

\subsection{Local composition decision}
\label{sec:composition_decision_rl}

Given the shared decentralized policy above, each intersection determines its composition from information describing the current traffic condition, constituent pressure preferences, and their sensing support. We therefore specify the local observation supplied to the actor, the composition action through which the actor adjusts the lower-level Com-MP controller, and the reward used to evaluate the resulting traffic effect. Together, these elements define the local performance-adaptation problem while leaving signal-phase selection to the MP mechanism.

\subsubsection{Local observation}
\label{sec:local_observation}

The observation is designed to describe three factors relevant to the relative usefulness of the constituent controllers: their current pressure preferences, the local traffic condition, and the sensing support available for constructing their states. For intersection $n$ at decision epoch $t_k$, we use
\begin{equation}
    \bm o_{n,k}
    =
    \left[
        \left\{
            \{\widetilde{\varpi}_{n,a,k}^{\,h}\}_{a\in\mathcal M_n}
        \right\}_{h=1}^{H},
        \left\{\xi_{n,a,k}\right\}_{a\in\mathcal M_n},
        \bm s_{n,k}^{\mathrm{cur}},
        \zeta_{n,k},
        \eta_{n,k}^{CV},
        \bm c_n
    \right]^{\top},
    \label{eq:local_observation}
\end{equation}
where $\widetilde{\varpi}_{n,a,k}^{\,h}$ denotes a normalized movement-level pressure feature derived from the local pressure evaluation of constituent $h$; this actor-input normalization is distinct from the phase-pressure normalization $\widetilde P_{n,\phi}^{h}$ used by the lower-level Com-MP controller. The variable $\xi_{n,a,k}$ is the queued vehicle count associated with movement $a$. The current signal operation is represented by $\bm s_{n,k}^{\mathrm{cur}}$ and its elapsed green time $\zeta_{n,k}$. The local CV ratio $\eta_{n,k}^{CV}$ describes sensing support, while $\bm c_n$ encodes the compatible phase configuration of intersection $n$. Together, these features describe the local control context available to the actor. A common movement--phase representation with padding and validity masks is used to accommodate heterogeneous intersection structures \citep{zhu2026schedule}.

\subsubsection{Composition action}
\label{sec:action}

For the Q--CV implementation, the actor selects the scalar coefficient $\alpha_{n,k}$ in~\eqref{eq:qcv_composite_pressure}. Nearby coefficient values may induce the same lower-level signal phase because a phase changes only when the ordering of the weighted constituent preferences changes. We therefore use the discrete composition set
\[
    \mathcal A=\{0,0.25,0.50,0.75,1\}.
\]
The endpoints recover the two constituent preferences: $\alpha_{n,k}=0$ gives Q-MP and $\alpha_{n,k}=1$ gives CV-MP, while intermediate values represent mixed pressure preferences. Importantly, the actor does not select a signal phase directly. Its action modifies only the relative contribution of the two normalized constituent pressures, and the phase is subsequently determined by the lower-level Com-MP maximization in~\eqref{eq:qcv_com_mp_control}. This restricted action interface is deliberate: the learning layer adapts only the relative emphasis of established MP preferences, so a learned action cannot directly prescribe a signal phase outside the structured Com-MP rule.

Sensing availability is handled outside the learned action. When only one constituent state is usable, the effective composition is deterministically assigned to the corresponding available constituent, so the controller reverts to Q-MP or CV-MP rather than extrapolating a learned mixture from unavailable information. When neither constituent state is usable, a predefined operational fallback is applied. These rules preserve a fixed composition-action space and provide a structured response to unavailable sensing information.

\subsubsection{Local reward}
\label{sec:reward}

The reward evaluates the traffic effect of the selected composition rather than the magnitude of the pressure objective itself. Let $Q_{n,k}$ denote the aggregate queued vehicle count associated with intersection $n$ at decision epoch $t_k$. The local reward is
\begin{equation}
    \mathcal U_{n,k}
    =
    \omega_1
    \left(Q_{n,k-1}-Q_{n,k}\right)
    -
    \omega_2 Q_{n,k},
    \label{eq:local_reward}
\end{equation}
where $\omega_1,\omega_2>0$ are the weights on queue dissipation and residual congestion, respectively. The first term rewards reductions in local queues, whereas the second penalizes persistent accumulation. Because the reward is applied after the lower-level MP decision is executed, the actor learns whether a composition improves traffic performance rather than merely increasing the instantaneous composite pressure.

\subsection{Phase-equivalent policy learning}
\label{sec:rl_training}

We use MAPPO to optimize the shared composition policy. Training samples collected from all intersections update the common actor $\pi_\psi$ and the centralized critic, while generalized advantage estimation (GAE) is used to construct the advantage estimates. The key modification arises from the structure of the Com-MP action: different composition coefficients can induce the same lower-level phase under the same traffic state. Treating such coefficients as behaviorally distinct in the PPO update would assign separate likelihoods to actions that produce the same immediate signal decision.

Let $\phi_{n,k}(\alpha)$ denote the phase ultimately selected by the lower-level Com-MP rule when coefficient $\alpha\in\mathcal A$ is applied under the current state and operational constraints. For a feasible phase $\phi\in\mathcal P_n$, define the state-dependent phase-equivalence group as $\mathcal A_{n,k}(\phi)\triangleq\{\alpha\in\mathcal A:\phi_{n,k}(\alpha)=\phi\}$. Coefficients in the same group are therefore equivalent with respect to the lower-level signal action at that decision epoch.

The actor probability of inducing phase $\phi$ is obtained by marginalizing the coefficient probabilities over the corresponding equivalence group:
\begin{equation}
    \pi_{\psi}^{\mathrm{ph}}
    \left(
        \phi
        \mid
        \bm o_{n,k}
    \right)
    \triangleq
    \sum_{\alpha\in\mathcal A_{n,k}(\phi)}
    \pi_{\psi}
    \left(
        \alpha
        \mid
        \bm o_{n,k}
    \right).
    \label{eq:grouped_phase_probability}
\end{equation}
For the sampled coefficient $\alpha_{n,k}$, let $\phi_{n,k}^{*}=\phi_{n,k}(\alpha_{n,k})$ be the resulting lower-level phase. The PPO likelihood ratio is then formed at the phase-equivalence level:
\begin{equation}
    \chi_{n,k}(\psi)
    =
    \frac{
        \pi_{\psi}^{\mathrm{ph}}
        \left(
            \phi_{n,k}^{*}
            \mid
            \bm o_{n,k}
        \right)
    }{
        \pi_{\psi_{\mathrm{old}}}^{\mathrm{ph}}
        \left(
            \phi_{n,k}^{*}
            \mid
            \bm o_{n,k}
        \right)
    },
    \label{eq:grouped_ppo_ratio}
\end{equation}
where $\psi_{\mathrm{old}}$ denotes the actor parameters used to collect the training sample. Using this grouped ratio, the clipped actor objective is
\begin{equation}
    L^{\mathrm{actor}}(\psi)
    =
    \mathbb E
    \left[
        \min
        \left(
            \chi_{n,k}(\psi)\widehat A_{n,k},
            \operatorname{clip}
            \left(
                \chi_{n,k}(\psi),
                1-\epsilon_{\mathrm{clip}},
                1+\epsilon_{\mathrm{clip}}
            \right)
            \widehat A_{n,k}
        \right)
    \right],
    \label{eq:grouped_ppo_objective}
\end{equation}
where $\widehat A_{n,k}$ is the estimated advantage and $\epsilon_{\mathrm{clip}}>0$ is the PPO clipping parameter. The grouping therefore changes only how the actor likelihood is evaluated; the underlying MAPPO optimization remains otherwise standard. It ensures that coefficients producing the same signal action receive a common policy-level learning signal while retaining their predefined interpretation as different composition weights \citep{pmlr-v101-asadi19a}.

If all candidate coefficients induce the same phase at a decision epoch, the phase probability in~\eqref{eq:grouped_phase_probability} is one independently of the coefficient distribution, so the transition contains no information for distinguishing among composition actions. Such transitions may still contribute to value-function learning but do not provide a discriminative actor signal. During decentralized execution, each intersection simply samples $\alpha_{n,k}$ from~\eqref{eq:local_actor} and passes it to the lower-level Com-MP controller; neither the centralized critic nor network-level observations are required online.

\section{Evaluation}
\label{sec:evaluation}

We evaluate Com-MP in microscopic traffic simulation using Simulation of Urban MObility (SUMO) \citep{lopez2018microscopic}. The experiments address four questions corresponding to the proposed learning architecture: whether context-dependent composition improves traffic performance, whether the learned composition exhibits interpretable dependence on local traffic and sensing conditions, whether the controller remains effective as sensing availability and quality change, and whether the learned local composition rule transfers to an unseen road network.

\subsection{Experimental setup}
\label{sec:experiment_setup}

The main experiment uses a real-world urban road network in Qinzhou, China, containing 45 signalized intersections, as shown in Fig.~\ref{FIG:road_network}(a). The longest link is over 1000~m, and the shortest is less than 200~m, producing substantial geometric and storage heterogeneity across intersections. Right-turn movements are not signal-controlled in the experiment.

\begin{figure}
    \centering
    \includegraphics[width=\textwidth]{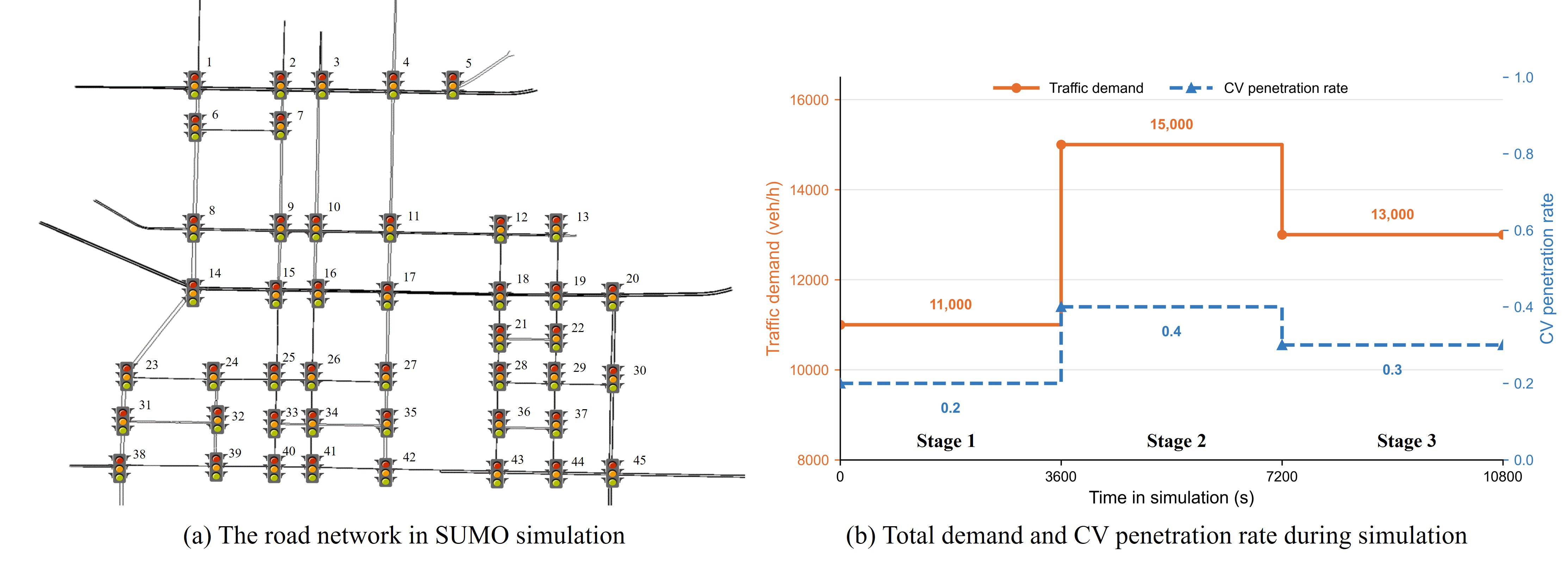}
    \caption{Experimental road network and time-varying traffic settings in Qinzhou, China.}
    \label{FIG:road_network}
\end{figure}

To create dynamically changing traffic and sensing conditions, the simulation lasts 10,800~s and is divided into three stages, as shown in Fig.~\ref{FIG:road_network}(b). The total traffic demand is 11,000~veh/h in Stage~1, 15,000~veh/h in Stage~2, and 13,000~veh/h in Stage~3, while the CV penetration rate changes from 0.2 to 0.4 and then to 0.3. Five random seeds are tested for each control method. The main experiment assumes 100\% detector coverage with no missed detections. Additional experiments investigate the effects of detector coverage, missed detection rates, and CV penetration in Section~\ref{sec:sensing_availability}.

We compare Com-MP with its two constituent controllers:
\begin{itemize}
    \item \textit{Q-MP} \citep{varaiya2013max}: the detector-supported queue-based MP controller defined in Section~\ref{sec:representative_controllers}, which constructs pressure from queue-related traffic states and emphasizes the dissipation of heavily loaded movements;
    \item \textit{CV-MP} \citep{tan2026cvmp}: the CV-supported travel-time-based MP controller defined in Section~\ref{sec:representative_controllers}, which constructs pressure from CV-derived travel-time states and emphasizes accumulated temporal cost; and
    \item \textit{Com-MP}: the proposed controller, which combines the Q-MP and CV-MP pressures using the intersection-wise coefficient $\alpha_{n,k}$ selected by the learned policy in Section~\ref{sec:rl_composition}.
\end{itemize}

All three MP controllers use a 15-s signal decision interval. The maximum red time is set as $T_{n,\phi}^{red,\max}=180~s$. A 3-s yellow interval is inserted whenever the selected phase changes.

For MAPPO training, the batch size is 512. The learning rate for both actor and critic follows a cosine schedule from $3\times10^{-4}$ to $3\times10^{-5}$. The value-loss coefficient is 0.5, while the entropy-loss coefficient decreases from 0.015 to 0.001 during training. The GAE factor is 0.97, the PPO clip ratio is 0.2, the maximum gradient norm is 0.5, and each PPO update uses five epochs. The reward weights in~\eqref{eq:local_reward} are $\omega_1=1.00$ and $\omega_2=0.10$. Both actor and critic use fully connected hidden layers with dimension 256 and are optimized with Adam \citep{kingma2014adam}. Each training episode lasts 3 hours. The detailed implementation is available at \url{https://github.com/bryan-zjc/Com-MP}.

The following measures of effectiveness are reported. \textit{Average delay} is the mean delay experienced by all generated vehicles, including delay accumulated inside the road network and waiting delay of vehicles temporarily unable to enter because of congestion. \textit{Average speed} is the mean vehicle travel speed within the network. \textit{Average stop count} is the mean number of stops experienced per vehicle. \textit{Throughput} is the number of vehicles completing their trips per unit time. 

\subsection{Experiment under time-varying demand and CV penetration rate}
\label{sec:dynamic_experiment}

We first examine whether the learned context-dependent composition improves network-level traffic performance under the time-varying demand and CV-penetration setting in Fig.~\ref{FIG:road_network}. The training reward trajectory is reported separately in Appendix~\ref{app:training_convergence}.

\begin{table}
    \centering
    \caption{Network-level performance of the constituent, fixed-composition, and adaptive control strategies in the Qinzhou network.}
    \label{tab:traffic_efficiency}
    {%
    \begin{tabular}{lcccc}
        \toprule
        Method & Average delay (s) & Average speed (km/h) & Average stop count & Throughput (veh/h) \\
        \midrule
        Q-MP   & 193.03 $\pm$ 5.30 & 28.37 $\pm$ 0.21 & 4.56 $\pm$ 0.14 & 12535.67 $\pm$ 18.19\\
        CV-MP  & 178.27 $\pm$ 4.60 & 28.78 $\pm$ 0.21 & \underline{4.36} $\pm$ 0.10 & 12557.53 $\pm$ 8.90\\
        Fixed weight ($\alpha=0.25$)  & 186.63 $\pm$ 5.05 & 28.65 $\pm$ 0.20 & 4.48 $\pm$ 0.12 & 12543.60 $\pm$ 16.56\\
        Fixed weight ($\alpha=0.50$)  & 178.58 $\pm$ 6.68 & 28.95 $\pm$ 0.26 & 4.38 $\pm$ 0.15 & 12552.33 $\pm$ 10.65\\
        Fixed weight ($\alpha=0.75$)  & \underline{175.15} $\pm$ 4.29 & \underline{29.06} $\pm$ 0.18 & \textbf{4.33} $\pm$ 0.09 & \underline{12561.47} $\pm$ 9.92\\
        Com-MP & \textbf{162.92} $\pm$ 3.77 & \textbf{29.77} $\pm$ 0.17 & 4.41 $\pm$ 0.09 & \textbf{12575.20} $\pm$ 6.93\\
        \bottomrule
    \end{tabular}
    }
\end{table}

Table~\ref{tab:traffic_efficiency} summarizes the network-level performance in the Qinzhou road network. Bold entries denote the best result, while underlined entries denote the second best across all compared methods. Focusing first on the constituent controllers, Com-MP reduces average delay by 15.6\% and 8.6\% relative to Q-MP and CV-MP, respectively. It also achieves the highest average speed and throughput. Although CV-MP yields a slightly lower average stop count, Com-MP provides the best overall network performance among Q-MP, CV-MP, and Com-MP. These results show that context-dependent composition can outperform either constituent pressure preference alone.

\begin{figure}
    \centering
    \includegraphics[width=0.8\textwidth]{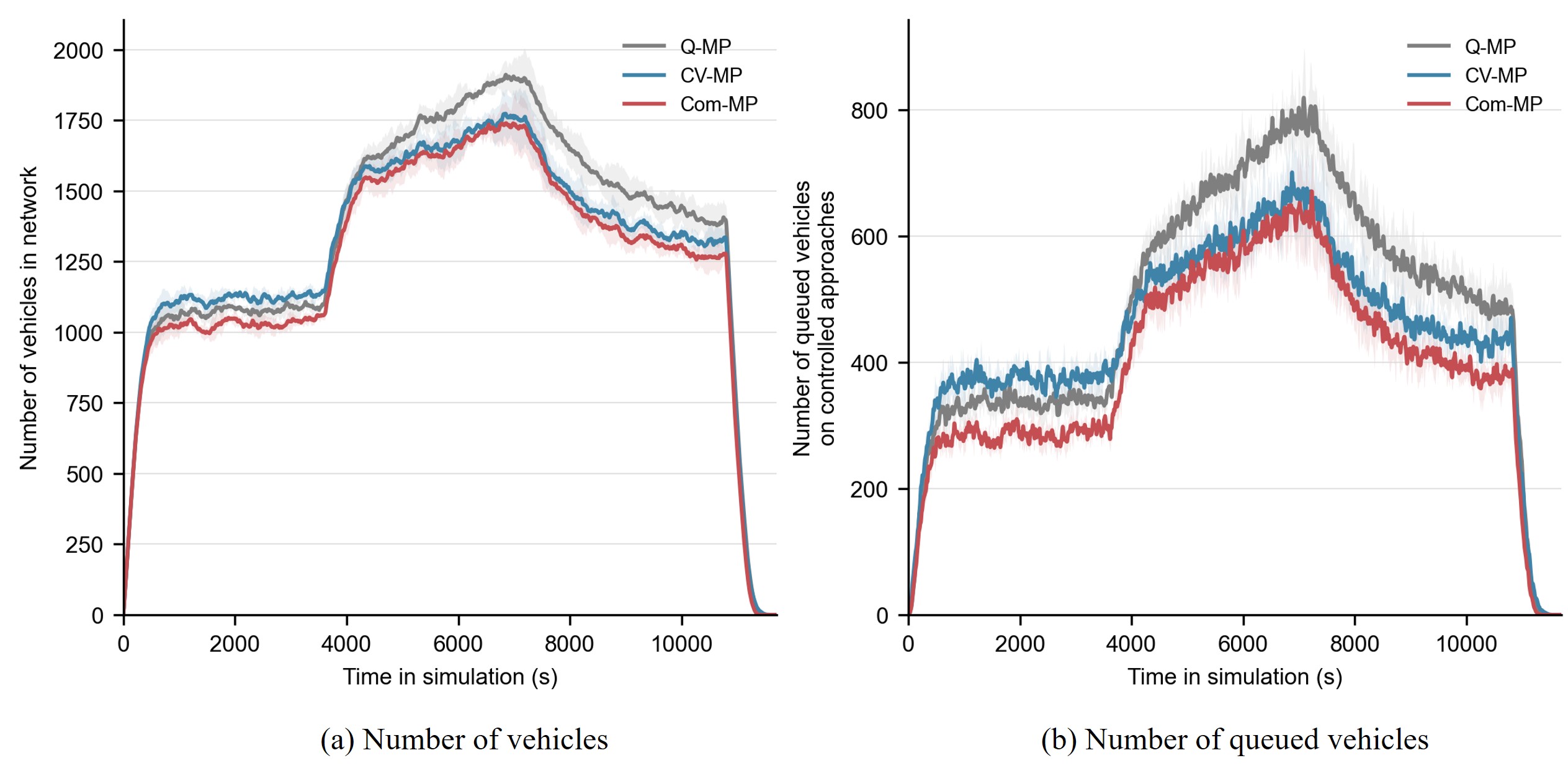}
    \caption{Time-series network performance under time-varying demand and CV penetration.}
    \label{FIG:network_performance}
\end{figure}

Fig.~\ref{FIG:network_performance} further illustrates the temporal evolution of network traffic conditions under time-varying demand and CV penetration. During Stage~1, when traffic demand and CV penetration are relatively low, CV-MP maintains slightly more vehicles and queued vehicles than Q-MP, whereas Com-MP achieves the lowest network accumulation and queue levels. During Stage~2, as both traffic demand and CV penetration increase, the numbers of vehicles and queued vehicles rise considerably under all three methods. Q-MP exhibits the greatest accumulation, while CV-MP maintains lower queue levels and Com-MP consistently achieves the best performance. During Stage~3, as traffic demand and CV penetration decrease, network accumulation and queue levels gradually decline under all three methods. Nevertheless, Q-MP continues to exhibit the highest accumulation and queue levels, followed by CV-MP, whereas Com-MP maintains the lowest levels throughout most of this stage. Overall, Com-MP maintains lower network accumulation and queue levels as the traffic and sensing contexts evolve, with particularly noticeable advantages during the higher-demand stage.

\subsection{Interpretable composition behavior}
\label{sec:heterogeneity_composition}

\begin{figure}
    \centering
    \includegraphics[width=0.8\textwidth]{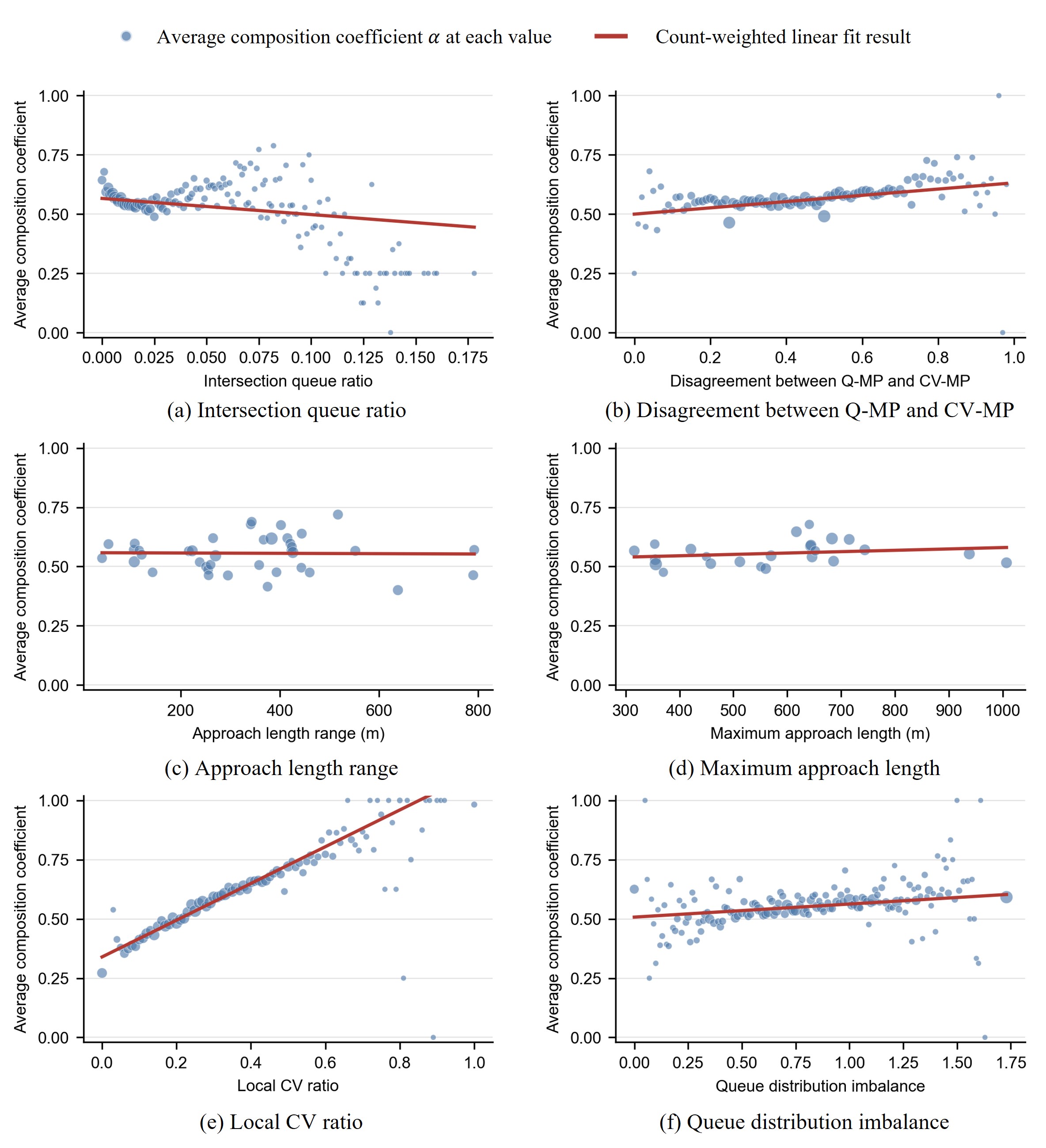}
    \caption{Mean selected composition coefficient under different factors.}
    \label{FIG:alpha_trend}
\end{figure}

The composition coefficient has an explicit control interpretation: in the Q--CV implementation, it directly represents the relative influence of CV-MP versus Q-MP. This makes it possible to examine the learned adaptation in terms of physically meaningful traffic and sensing factors rather than treating the policy action as an opaque phase command. Fig.~\ref{FIG:alpha_trend} relates the selected coefficient $\alpha$, used here as shorthand for the intersection-epoch coefficient $\alpha_{n,k}$, to local traffic conditions, sensing support, and intersection geometry. For each factor value, the mean selected $\alpha$ is reported, with scatter-point size proportional to the number of corresponding decisions and the fitted trend weighted by these counts. The results show that the learned composition responds most clearly to dynamic traffic and sensing conditions, while static geometric factors have a weaker association.

\paragraph{Traffic-state dependence.}
The intersection queue ratio measures queue accumulation relative to available storage. As shown in Fig.~\ref{FIG:alpha_trend}(a), the selected $\alpha$ decreases as the queue ratio increases, indicating a shift toward Q-MP under more heavily queued conditions. Fig.~\ref{FIG:alpha_trend}(f) further shows a mild increase in $\alpha$ as the queue distribution becomes more imbalanced across approaches. Together, these patterns indicate that the learned policy changes the constituent emphasis with the local congestion state rather than applying a fixed pressure preference.

\paragraph{Constituent disagreement.}
The disagreement in Fig.~\ref{FIG:alpha_trend}(b) is the mean absolute difference between the normalized Q-MP and CV-MP phase pressures over the candidate phases. A larger value therefore indicates less consistent phase preferences between the two constituents. The positive relationship between disagreement and $\alpha$ shows that constituent disagreement is associated with systematic changes in the learned composition, with greater disagreement corresponding to increased influence of the travel-time-based preference.

\paragraph{Sensing availability and geometry.}
The strongest sensing-related pattern appears in Fig.~\ref{FIG:alpha_trend}(e): the selected $\alpha$ increases almost monotonically with the local CV ratio, so richer CV observations lead to greater reliance on CV-MP, whereas sparse CV observations shift the composition toward Q-MP. In contrast, the approach-length range and maximum approach length in Fig.~\ref{FIG:alpha_trend}(c)--(d) show only weak relationships with $\alpha$. The learned composition therefore shows a stronger relationship with dynamic traffic and sensing context than with static intersection geometry.

\paragraph{Adaptive versus fixed composition.}
To separate the benefit of pressure composition itself from the additional benefit of context-dependent adaptation, we compare Com-MP with network-wide fixed coefficients drawn from the same action set used by the learned policy. The fixed strategies keep $\alpha$ unchanged across all intersections and decision epochs, whereas Com-MP selects $\alpha_{n,k}$ from the prevailing local traffic and sensing context.
The fixed-composition results in Table~\ref{tab:traffic_efficiency} further show that the effectiveness of combining the two constituent pressure formulations depends on the selected coefficient. Among the fixed strategies, $\alpha=0.75$ performs best overall. Adaptive Com-MP further reduces average delay from 175.15~s to 162.92~s, or approximately 7.0\%, while also increasing average speed and throughput. This comparison indicates that context-dependent adaptation provides an additional benefit beyond using a fixed composition.

\subsection{Intersection heterogeneity}
\label{sec:intersection_heterogeneity}
This section examines how the relative usefulness of the constituent MP controllers changes across local traffic contexts. We compare the intersection-level performance of Q-MP and CV-MP across traffic stages and evaluate whether Com-MP can provide further improvements beyond the locally better constituent.

\begin{figure}
    \centering
    \includegraphics[width=\textwidth]{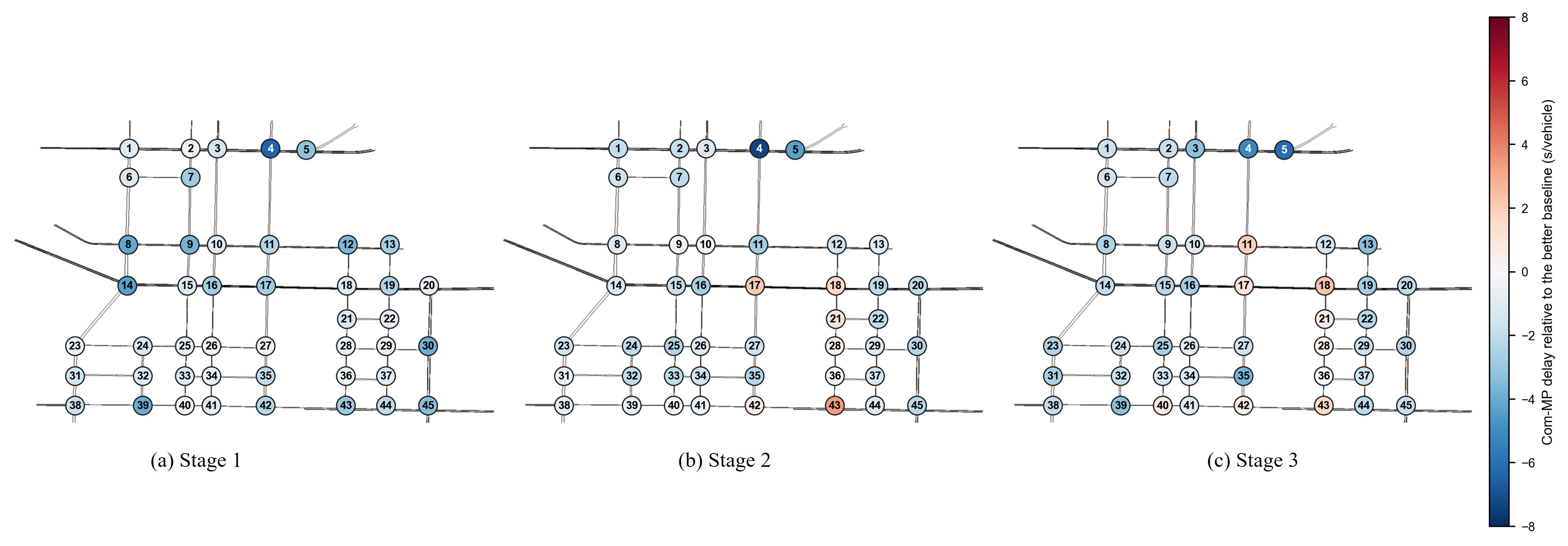}
    \caption{Spatial distribution of Com-MP delay differences relative to the better constituent controller across demand stages.}
    \label{FIG:perfomance_difference}
\end{figure}

Fig.~\ref{FIG:perfomance_difference} first shows, for each intersection and traffic stage, the delay difference between Com-MP and the better-performing constituent controller, where the better constituent is the one with the lower local delay between Q-MP and CV-MP. Negative values therefore indicate that Com-MP achieves lower delay than both constituent controllers. Across the three stages, Com-MP produces lower delay than the better constituent at most intersections, while the spatial pattern of the improvement changes with demand and CV penetration. These results indicate that the additional benefits of Com-MP vary across intersections and traffic stages.
Representative intersection-level examples illustrating these spatial and temporal changes are provided in Appendix~\ref{app:representative_intersections}.

\begin{figure}
    \centering
    \includegraphics[width=0.8\textwidth]{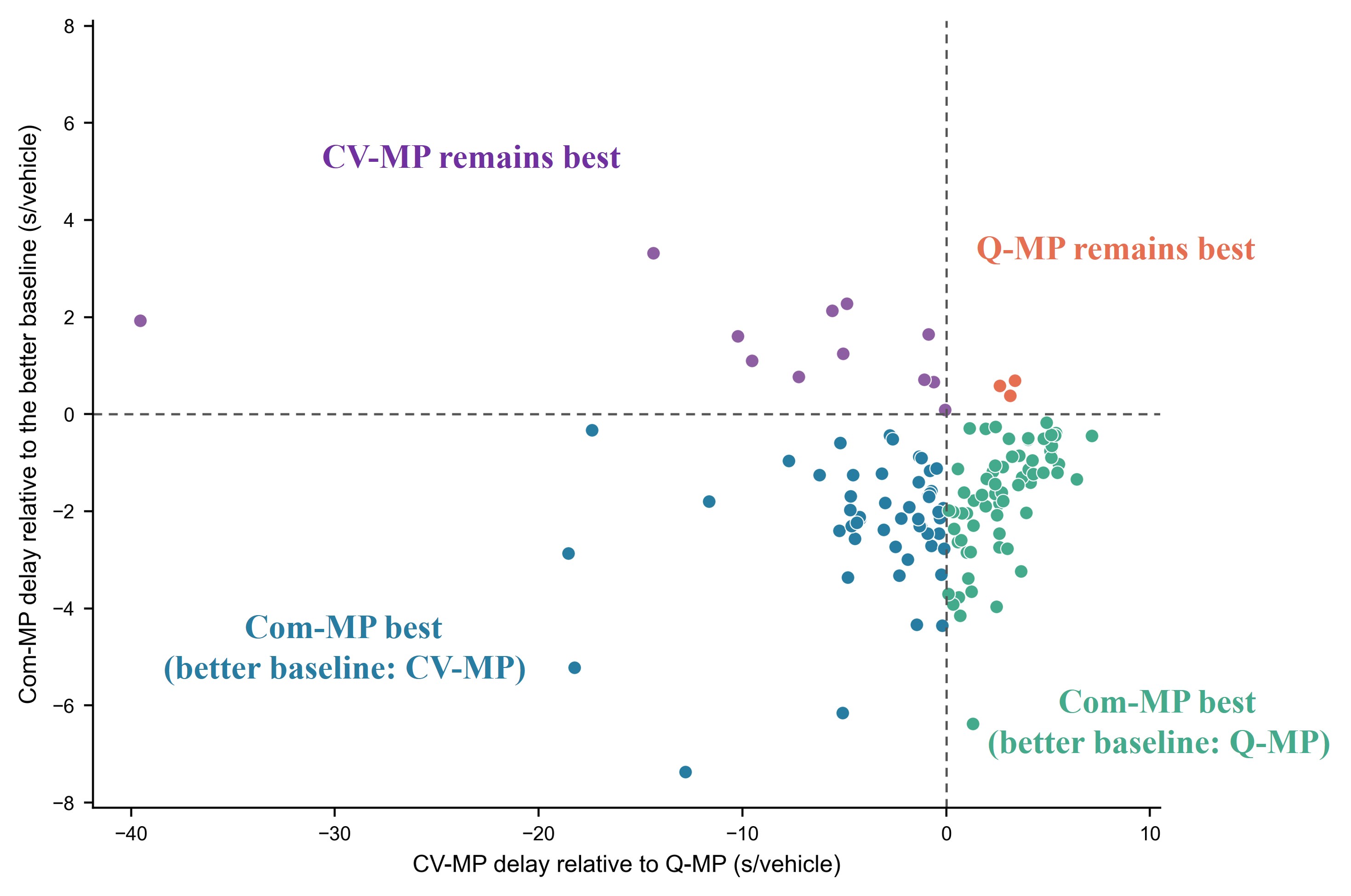}
    \caption{Local intersection delay comparison among Q-MP, CV-MP, and Com-MP.}
    \label{FIG:prefer_scatter}
\end{figure}

Fig.~\ref{FIG:prefer_scatter} further examines the relative performance of the three control methods across all 45 intersections and three traffic stages. The horizontal axis represents the intersection-level delay difference between CV-MP and Q-MP, where negative values indicate that CV-MP achieves lower delay, while positive values indicate that Q-MP performs better. The vertical axis represents the delay difference between Com-MP and the better-performing constituent controller, with negative values indicating that Com-MP outperforms both Q-MP and CV-MP. The observations are distributed on both sides of the vertical zero line, showing that the locally better constituent changes across intersections and traffic stages. More importantly, most observations fall below the horizontal zero line, showing that Com-MP further reduces delay relative to the locally better constituent in most cases. This pattern occurs whether Q-MP or CV-MP is locally preferable. These results provide direct spatial and temporal evidence that constituent usefulness changes with traffic context and that the learned composition can provide additional local benefits beyond either constituent alone.

\subsection{Robustness under heterogeneous and degraded sensing}
\label{sec:sensing_availability}
The preceding experiments show that the relative usefulness of Q-MP and CV-MP varies with local traffic conditions and that Com-MP adjusts their contributions accordingly. We next examine the empirical robustness of this adaptation as sensing availability and quality change. The two constituents rely on different information sources: Q-MP uses detector-supported queue information, whereas CV-MP uses vehicle-level trajectory information from CVs. We therefore vary detector coverage, detector missed detections, and CV penetration to evaluate whether Com-MP remains effective under heterogeneous and degraded sensing conditions.

\subsubsection{Different detector coverage rates}
\label{sec:detector_coverage}

Fixed detectors are not uniformly available in practical road networks. We therefore vary the proportion of signalized intersections equipped with detectors from 10\% to 90\%, while keeping the remaining experimental settings unchanged. At detector-equipped intersections, queue-related information is available for Q-MP. At intersections without detector information, Com-MP relies on the available CV-MP constituent according to the sensing rule in Section~\ref{sec:action}. CV-MP is included as a reference because it does not use fixed-detector information.

\begin{figure}
    \centering
    \includegraphics[width=0.8\textwidth]{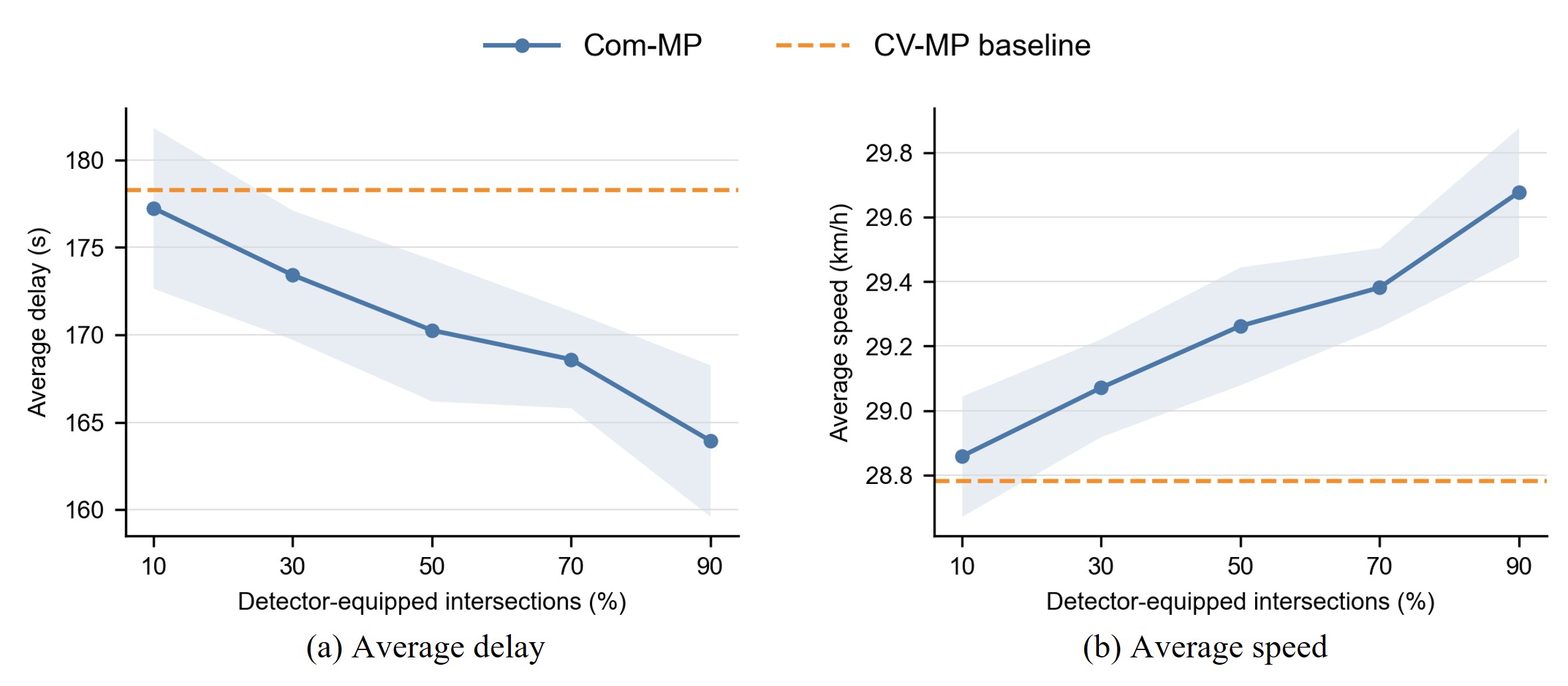}
    \caption{Network performance under different fixed-detector coverage rates.}
    \label{FIG:detector_coverage}
\end{figure}

As shown in Fig.~\ref{FIG:detector_coverage}, the performance of Com-MP improves consistently as detector coverage increases. The dashed CV-MP baseline represents the average performance of CV-MP across the five random seeds. Specifically, the average delay decreases from approximately 177.2~s at 10\% coverage to 163.9~s at 90\%, while the average speed increases from about 28.86 to 29.68~km/h. CV-MP remains unchanged because it relies only on CV-based traffic information. Notably, even when only 10\% of intersections are equipped with detectors, Com-MP achieves slightly lower average delay and higher average speed than CV-MP, indicating that even limited detector-supported queue information can complement the CV-based constituent. As detector coverage increases, more intersections can utilize both queue- and CV-based information, leading to further improvements in network performance. These results show that Com-MP remains effective under partial detector coverage while benefiting from greater availability of queue information.

\subsubsection{Different missed detection rates}
\label{sec:missed_detection}

In practice, traffic detectors may fail to detect all vehicles because of sensing and communication imperfections. To examine the robustness of Com-MP to such imperfect queue observations, we design missed detection conditions using two parameters: the missed detection lane ratio, which denotes the proportion of lanes affected by missed detections, and the missed detection rate, which denotes the proportion of vehicles that are not detected on an affected lane. Five random seeds are tested for each control method. When varying the vehicle missed detection rate, the missed detection lane ratio is fixed at 30\%; conversely, when varying the missed detection lane ratio, the vehicle missed detection rate is fixed at 30\%.

\begin{figure}
    \centering
    \includegraphics[width=0.8\textwidth]{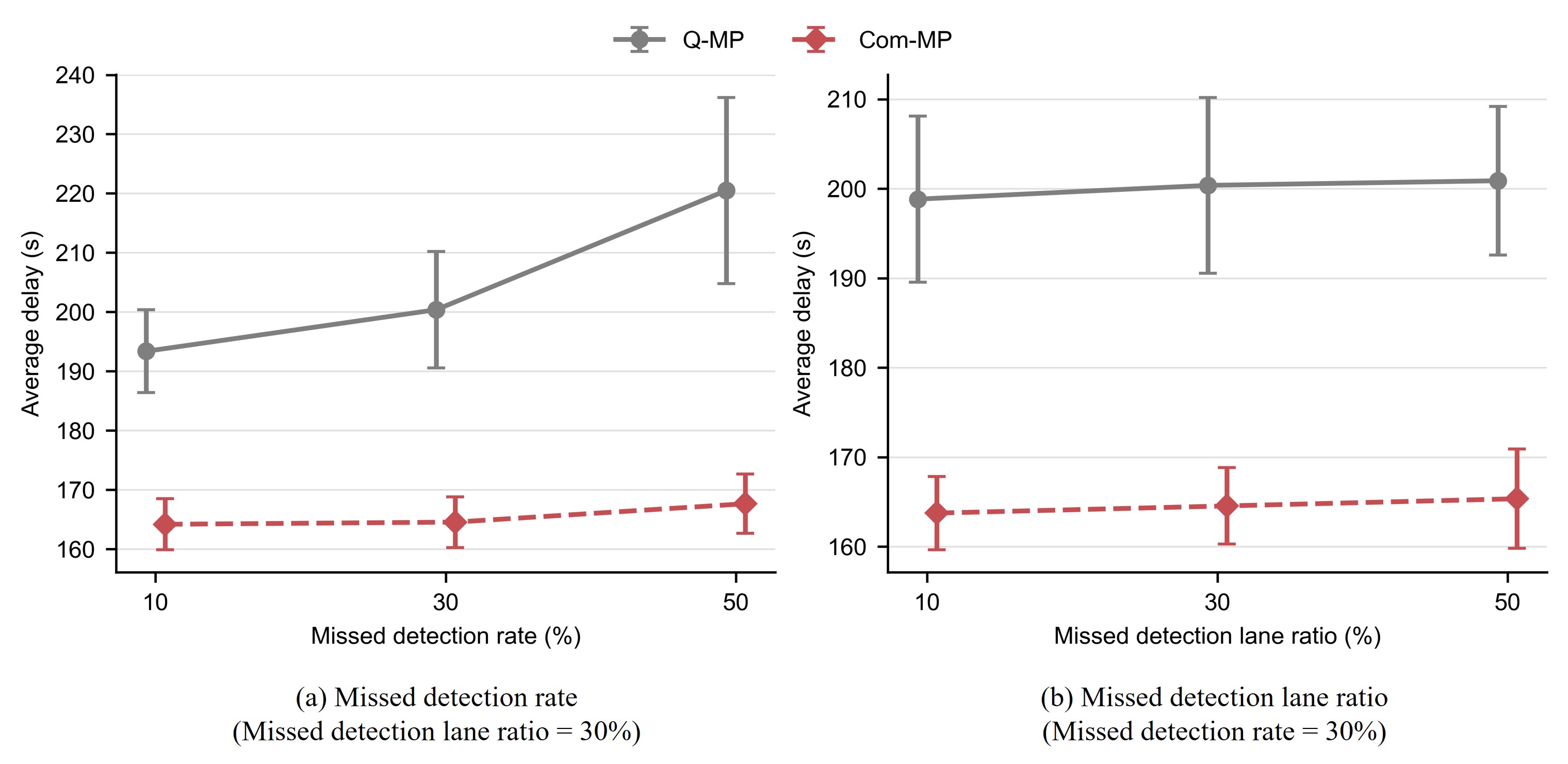}
    \caption{Average delay under different detector miss-detection conditions.}
    \label{FIG:missed_detection}
\end{figure}

Fig.~\ref{FIG:missed_detection}(a) shows that increasing the vehicle missed detection rate has a pronounced adverse effect on Q-MP. Its average delay increases from approximately 193.5~s at a 10\% missed detection rate to 220.5~s at 50\%. In contrast, Com-MP is much less sensitive, with average delay increasing only from approximately 164.2~s to 167.7~s over the same range. This shows that Com-MP is substantially less sensitive to degraded detector-supported queue measurements while retaining the complementary CV-based traffic information.

Fig.~\ref{FIG:missed_detection}(b) shows the sensitivity to the proportion of lanes affected by missed detection. When the missed detection lane ratio increases from 10\% to 50\%, the average delay of both methods increases only slightly, with Com-MP consistently outperforming Q-MP. Compared with the missed detection rate, the proportion of affected lanes has a much smaller influence on control performance within the tested ranges. This suggests that the severity of missed detection on affected lanes is more detrimental than the spatial extent of detector degradation under the considered experimental settings.

Overall, Com-MP shows greater robustness to detector miss-detection than Q-MP under both degradation dimensions. The results further show that the severity of missed detection on affected lanes has a stronger impact than the spatial proportion of affected lanes within the tested ranges. Together with the sensing-availability rule in Section~\ref{sec:action}, the results illustrate two complementary responses to sensing uncertainty: when one constituent state is unavailable, the controller reverts to the available MP constituent; when multiple constituents remain usable but one sensing source is degraded, adaptive composition can reduce dependence on that degraded preference. This structured combination reduces reliance on a single sensing source and maintains relatively stable performance under imperfect detector observations.

\subsubsection{Higher CV penetration rates}
\label{sec:higher_cv}

In this section, we increase the CV penetration rates in the three traffic stages from $(0.2,0.4,0.3)$ to $(0.4,0.7,0.5)$ while keeping all other settings unchanged to evaluate the sensitivity of different control methods to CV information availability.

\begin{figure}
    \centering
    \includegraphics[width=0.8\textwidth]{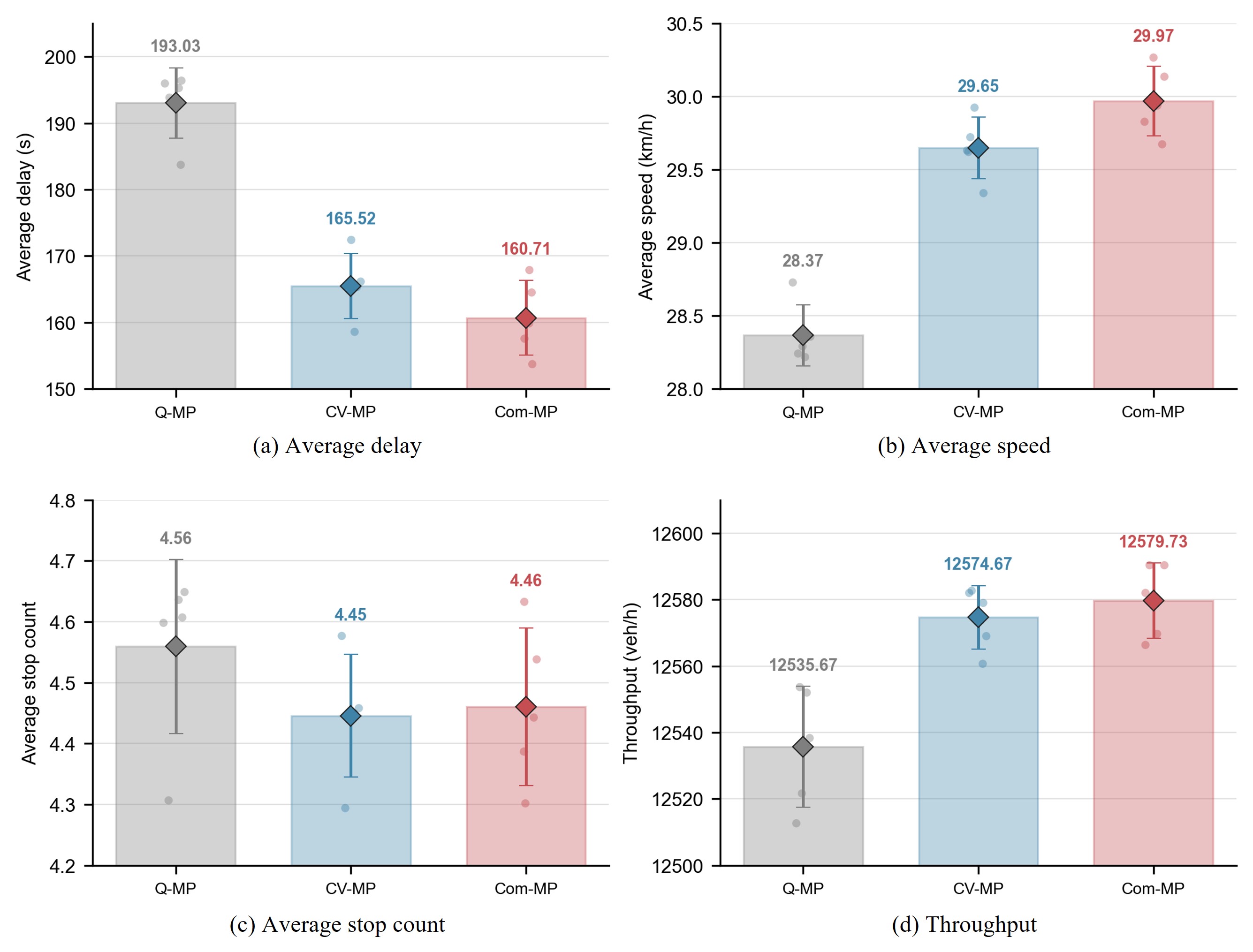}
    \caption {Traffic results under higher CV penetration rates.}
    \label{FIG:traffic_efficiency_higher_cv}
\end{figure}

As shown in Fig.~\ref{FIG:traffic_efficiency_higher_cv}, under the higher-CV setting, CV-MP improves markedly and outperforms Q-MP across all four performance measures. Its average delay decreases to 165.52~s, compared with 193.03~s for Q-MP, suggesting that richer CV observations improve the effectiveness of the travel-time-based pressure formulation. Com-MP further reduces the average delay to 160.71~s. It also achieves the highest average speed and throughput, at 29.97~km/h and 12,579.73~veh/h, respectively. Compared with Q-MP and CV-MP, Com-MP reduces average delay by approximately 16.7\% and 2.9\%, respectively. Although CV-MP achieves a slightly lower average stop count than Com-MP, the difference is marginal. These results indicate that Com-MP maintains its overall performance advantage under higher CV penetration.



Comparing the two CV-penetration settings further illustrates how the relative usefulness of the constituent controllers changes with sensing context. CV-MP benefits substantially from richer CV observations, and its average-delay gap relative to Com-MP decreases from 8.6\% in the original setting to 2.9\% under higher CV penetration. Com-MP nevertheless maintains the best overall traffic performance. Together with the composition patterns in Section~\ref{sec:heterogeneity_composition}, these results show that constituent usefulness changes with sensing availability and that context-dependent composition can accommodate this change.

\subsection{Cross-network transferability}
\label{sec:transferability}

To evaluate cross-network transfer, the actor trained on the Qinzhou network is deployed directly, without additional training, on an unseen real-world road network in Lianyungang City, Jiangsu Province, China, as shown in Fig.~\ref{FIG:lyg_network} (a). The network contains 28 signal-controlled intersections and includes shared lanes, heterogeneous phase configurations, and approaches consisting of multiple consecutive links. Vehicle arrivals follow Poisson processes. Over the 5,400~s simulation horizon, the expected network-wide demand increases from approximately 7,400~veh/h during the first 1,800~s to 9,700~veh/h between 1,800 and 3,600~s, before reaching a peak of 11,500~veh/h during the final 1,800~s, as shown in Fig.~\ref{FIG:lyg_network}(b). The CV penetration rate remains constant at 0.3 throughout the simulation. This progressively increasing demand creates increasingly challenging traffic conditions for evaluating the control methods. Each control method is evaluated using five matched random seeds. Q-MP and CV-MP are included as constituent-controller benchmarks. Additionally, a new Com-MP controller is trained directly on the Lianyungang network to provide a reference.

\begin{figure}
    \centering
    \includegraphics[width=\textwidth]{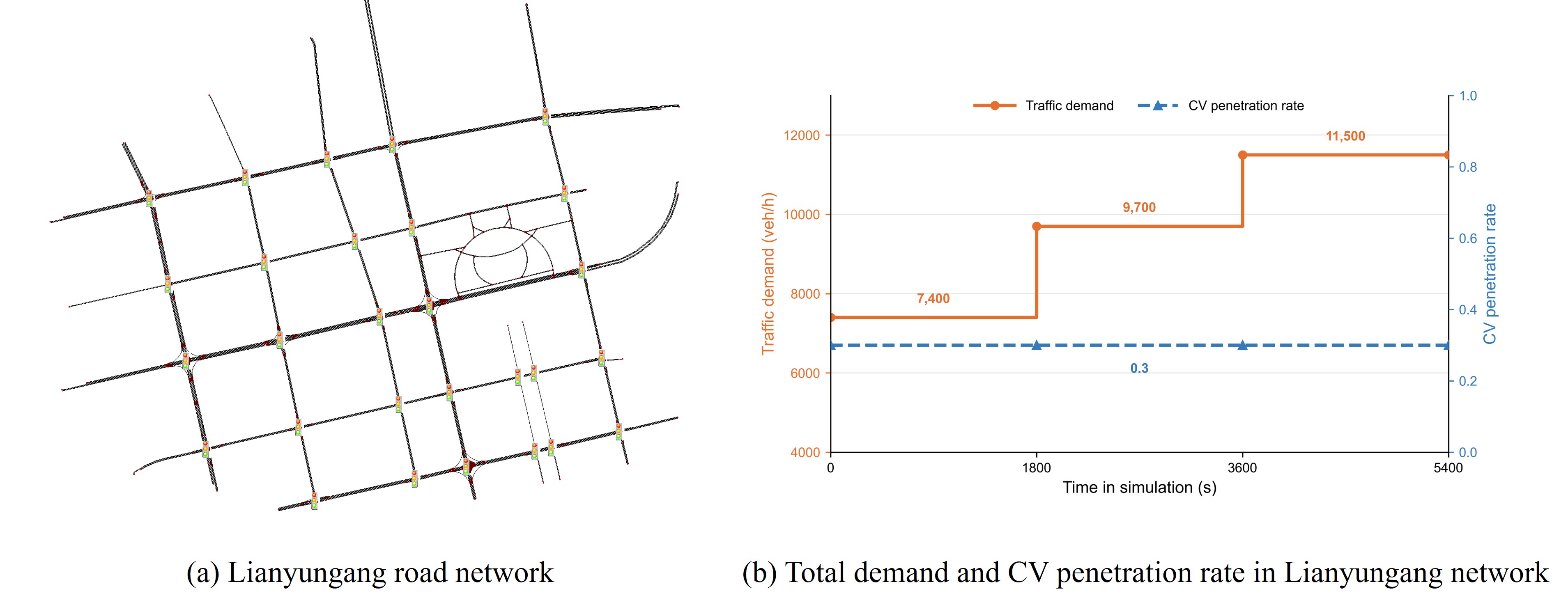}
    \caption{Unseen Lianyungang network used for the transfer experiment.}
    \label{FIG:lyg_network}
\end{figure}

Table~\ref{tab:traffic_efficiency_lyg} shows that the zero-shot transferred policy retains performance close to that of the policy trained directly on the Lianyungang network. The target-network-trained Com-MP achieves the best performance across all four measures, while the transferred Com-MP consistently ranks second. Compared with target-network training, zero-shot transfer increases average delay by only 2.4\%, reduces average speed by 0.5\%, and decreases throughput by approximately 0.2\%. Meanwhile, the transferred policy still outperforms both constituent controllers across all four measures, reducing average delay by 12.8\% and 2.8\% compared with Q-MP and CV-MP, respectively. These results indicate that the composition policy learned on the Qinzhou network can retain much of its control effectiveness when directly applied to a network with a different topology and traffic conditions.

\begin{table}
    \centering
    \caption{Results of different control methods in the unseen Lianyungang network.}
    \label{tab:traffic_efficiency_lyg}
    \resizebox{\textwidth}{!}{%
    \begin{tabular}{lcccc}
        \toprule
        Method & Average delay (s) & Average speed (km/h) & Average stop count & Throughput (veh/h)\\
        \midrule
        Q-MP & 191.90 $\pm$ 5.60 & 33.41 $\pm$ 0.15 & 4.99 $\pm$ 0.07 & 7616.67 $\pm$ 98.86\\
        CV-MP & 172.18 $\pm$ 2.63 & 34.18 $\pm$ 0.13 & 4.82 $\pm$ 0.04 & 7722.67 $\pm$ 85.50\\
        Com-MP (zero-shot transfer) & \underline{167.43} $\pm$ 2.77 & \underline{34.56} $\pm$ 0.04& \underline{4.80} $\pm$ 0.04 & \underline{7735.20} $\pm$ 82.44\\
        Com-MP (target-network trained) & \textbf{163.44} $\pm$ 3.40 & \textbf{34.74} $\pm$ 0.15 & \textbf{4.74} $\pm$ 0.09 & \textbf{7750.80} $\pm$ 90.11\\
        \bottomrule
    \end{tabular}
    }
\end{table}

The transferability is closely related to the architecture proposed in Sections~\ref{sec:rl_architecture} and~\ref{sec:rl_training}. The actor is shared across intersections, relies on local traffic and sensing observations, and outputs only the composition coefficient rather than topology-specific signal phases. The actual phase is still determined by the lower-level MP controller according to the local intersection structure. The learned policy therefore represents a context-dependent local rule for composing the two constituent MP preferences rather than a control policy tied to a specific network topology. The Lianyungang experiment provides empirical evidence of cross-network transferability, with the zero-shot policy retaining most of the performance of target-network retraining despite the change in network topology and traffic conditions.

\section{Conclusion and future work}
\label{sec:conclusion}

This study develops Com-MP as a theory-guided RL framework in which learning adapts the relative influence of heterogeneous MP control preferences while actual signal-phase selection remains governed by a structured decentralized MP mechanism. This restricted learning role preserves the analytical control structure rather than replacing it with directly learned phase actions. We establish that arbitrary admissible nonanticipative, intersection-specific, and time-varying compositions preserve strong queue stability for every demand in the strictly stabilizable region under the stated finite-storage traffic dynamics and operational signal constraints. The resulting stability certificate is independent of the particular learned composition policy, separating performance adaptation from the network-level stability property of the lower-level controller.

The experiments show that this separation does not prevent meaningful performance adaptation. The relative performance of Q-MP and CV-MP varies across intersections and traffic stages, while the learned composition changes systematically with local traffic and sensing context. At the network level, Com-MP reduces average delay by 15.6\% and 8.6\% relative to Q-MP and CV-MP, respectively, and by about 7.0\% relative to the best fixed composition. The explicit composition variable also provides interpretable evidence of the learned behavior: greater queue accumulation generally shifts the policy toward Q-MP, whereas richer CV availability increases the influence of CV-MP. These results show that the gain arises not only from combining complementary MP formulations but also from adapting their relative influence to prevailing conditions.

Com-MP further remains effective under heterogeneous and degraded sensing. Across variations in detector coverage, missed detections, and CV penetration, the controller maintains strong network performance by combining complementary sensing-dependent preferences and reverting to an available constituent when necessary. The learned local composition rule also exhibits cross-network transferability: when deployed without retraining on the unseen Lianyungang network, it achieves an average delay only 2.4\% higher than that of a policy trained directly on the target network while still outperforming both constituent controllers. Together, these results show that data-driven adaptation can be embedded in a structured controller while retaining a policy-independent stability guarantee. This provides a theory-guided pathway toward trustworthy learning-enabled traffic signal control.

Future work can extend the framework in several directions. First, the present implementation combines two representative constituents, Q-MP and CV-MP; incorporating a broader family of MP formulations with different state definitions and sensing requirements would further examine the generality of adaptive composition. Second, richer sensing and communication uncertainty, particularly time-varying data quality and communication delays, can be incorporated into constituent-state construction and composition learning. Third, the transferability of a shared composition policy should be evaluated across networks with different scales and topologies and under broader demand and sensing conditions to better characterize when context-dependent composition can be deployed without retraining.

\section*{Declaration of generative AI and AI-assisted technologies in the writing process}

During the preparation of this work, the authors used ChatGPT and Codex in order to improve the language and assist with programming. After using this tool/service, the authors reviewed and edited the content as needed and take full responsibility for the content of the publication.

\printcredits

\appendix

\section{Proofs for the Stability of Com-MP}
\label{app:stability_proof}

This appendix gives the detailed proofs and derivations supporting the propositions, lemmas, and stability theorem in Section~\ref{sec:stability_analysis}.

\subsection{Proof of Lemma~\ref{lem:strong_stability_criterion}}
\label{app:proof_strong_stability_criterion}

\begin{proof}
Applying the extended drift identity over $[0,T]$, taking expectations, and using $V(T)\ge0$ give
\begin{equation}
    \frac{1}{T}
    \int_0^T
    \mathbb E
    \left[
        \sum_{a\in\mathcal M_{\mathcal F}}z_a(t)
    \right]dt
    \le
    \frac{K}{\epsilon}
    +
    \frac{\mathbb E[V(0)]}{\epsilon T}.
    \label{eq:source_stability_criterion}
\end{equation}
The right-hand side remains finite as $T\rightarrow\infty$. Since every real movement has finite storage and therefore satisfies $z_a(t)\le z_a^{\max}<\infty$, the long-run average of the total network vehicle count is also finite. Hence Definition~\ref{def:strong_stability} holds.
\end{proof}

\subsection{Proof of Proposition~\ref{prop:normalized_composition_equivalence}}
\label{app:proof_normalized_composition}

Fix $n$ and $t_k$. If $S_{n,k}>0$, substituting~\eqref{eq:normalized_phase_pressure} into the composite phase pressure gives
\begin{align}
    \sum_h\alpha_{n,h,k}\widetilde P_{n,\phi}^{h}
    ={}&
    \sum_{h:\Gamma_{n,h,k}>0}
    \frac{\alpha_{n,h,k}}{\Gamma_{n,h,k}}
    P_{n,\phi}^{h}
    -
    \sum_{h:\Gamma_{n,h,k}>0}
    \frac{\alpha_{n,h,k}}{\Gamma_{n,h,k}}
    P_{n,\min}^{h}.
    \label{eq:app_normalized_expansion}
\end{align}
The second term is independent of $\phi$. Dividing the first term by the positive scalar $S_{n,k}$ therefore leaves its argmax unchanged and produces the coefficients in~\eqref{eq:effective_composition}. Hence~\eqref{eq:normalized_raw_equivalence} holds for any nonempty candidate subset $\mathcal P_n'\subseteq\mathcal P_n$.

If $S_{n,k}=0$, every constituent with $\alpha_{n,h,k}>0$ has $\Gamma_{n,h,k}=0$ and therefore assigns the same raw phase pressure to every phase. Both the normalized objective and the $\bm\theta_{n,k}=\bm\alpha_{n,k}$ raw objective are constant over $\phi$, so~\eqref{eq:normalized_raw_equivalence} again holds. In either case, the coefficients are nonnegative and sum to one, giving $\bm\theta_{n,k}\in\Delta_H$.

\subsection{Proof of Lemma~\ref{lem:composite_admissibility}}
\label{app:proof_composite_admissibility}

For a real movement $a$, the analytical coefficients are nonnegative and sum to one. Hence
\begin{align}
    0
    \le
    \bar w_a^{k}(t)
    &=
    \sum_h
    \theta_{\nu(a),h,k}w_a^h(t)
    \le
    \sum_h
    \theta_{\nu(a),h,k}W_a^h
    \le
    \bar W_a,
    \\
    0
    \le
    \bar G_a^{k}(t)
    &=
    \sum_h
    \theta_{\nu(a),h,k}G_a^h(t)
    \le
    \sum_h
    \theta_{\nu(a),h,k}G_a^{h,\max}
    \le
    \bar G_a.
\end{align}
The analytical coefficients are fixed within $I_k$. Applying~\eqref{eq:constituent_potential_drift},
\begin{align}
    \mathcal L_{\bm s}\bar G_a^{k}
    &=
    \sum_h
    \theta_{\nu(a),h,k}
    \mathcal L_{\bm s}G_a^h
    \nonumber\\
    &\le
    \sum_h
    \theta_{\nu(a),h,k}
    \left(
        w_a^h\mathcal L_{\bm s}z_a
        +
        K_a^h
    \right)
    \nonumber\\
    &\le
    \bar w_a^{k}\mathcal L_{\bm s}z_a
    +
    \bar K_a.
    \label{eq:app_composite_potential_drift}
\end{align}
All bounds are uniform over the induced analytical composition.

\subsection{Derivation of Lemma~\ref{lem:service_weighted_drift}}
\label{app:proof_service_drift}

For a fixed analytical composition during $I_k$, the real-movement part of~\eqref{eq:composite_lyapunov} satisfies
\begin{align}
    \mathcal L_{\bm s}
    \sum_{a\in\mathcal M_{\mathcal N}}
    \bar G_a^{k}
    \le
    \sum_{a\in\mathcal M_{\mathcal N}}
    \bar w_a^{k}
    \mathcal L_{\bm s}z_a
    +
    \sum_{a\in\mathcal M_{\mathcal N}}
    \bar K_a.
    \label{eq:app_real_potential_drift}
\end{align}

For a fictitious source movement $a$, the corresponding source Lyapunov term is $z_a^2/2$. Its extended drift can be written as
\begin{equation}
    \mathcal L_{\bm s}\frac12z_a^2
    =
    z_a\,\mathcal L_{\bm s}z_a
    +
    \delta_a^{\mathcal F},
    \qquad
    |\delta_a^{\mathcal F}|
    \le
    K_a^{\mathcal F},
    \label{eq:app_source_potential_drift}
\end{equation}
where $\delta_a^{\mathcal F}$ denotes the event-induced source-queue drift remainder and $K_a^{\mathcal F}$ is a finite uniform bound on its magnitude. This remainder contains the finite unit-jump correction and vanishes under an absolutely continuous fluid representation.

Summing the state-weighted count drifts over all movements and using the conditional count-drift identity~\eqref{eq:candidate_count_drift} gives
\begin{align}
    \sum_{a\in\mathcal M}
    \bar w_a^{k}\mathcal L_{\bm s}z_a
    &=
    (\bar{\bm w}^{k})^\top\bar{\bm\lambda}
    -
    (\bar{\bm w}^{k})^\top
    [\bm I-\bm R^\top(t)]
    \bar{\bm d}(\bm s,t)
    \nonumber\\
    &=
    (\bar{\bm w}^{k})^\top\bar{\bm\lambda}
    -
    \sum_{a\in\mathcal M}
    \bar p_a^{k}\bar d_a(\bm s,t).
    \label{eq:app_pressure_service_identity}
\end{align}
The extended drift of the real-movement quadratic term $\frac{\kappa}{2}\sum_{a\in\mathcal M_{\mathcal N}}z_a^2$ is uniformly bounded above because both real-movement counts and physical movement flows are bounded. Collecting this term, the constituent Lyapunov-term bounds, source jump corrections, and bounded signal-transition effects gives a finite $K_V$ such that
\[
    \mathcal L_{\bm s}V_k(t)
    \le
    (\bar{\bm w}^{k}(t))^\top\bar{\bm\lambda}
    -
    \Psi_k(\bm s,t)
    +
    K_V.
\]
This proves~\eqref{eq:service_weighted_drift}.

\subsection{Proof of Proposition~\ref{prop:bounded_error_mp}}
\label{app:proof_bounded_error_mp}

The proof compares the implemented decentralized phase decision with the single network-consistent analytical objective in three steps: intersection-wise composition, maximum-red restriction, and within-interval variation. For $a\in\mathcal M_n$, define the local proof-equivalent movement pressure
\[
    p_{n,a}^{\theta}(t)
    \triangleq
    \sum_h\theta_{n,h,k}p_a^h(t).
\]
Since $\nu(a)=n$, its upstream term is identical to that of~\eqref{eq:analytical_pressure}, and
\begin{equation}
    p_{n,a}^{\theta}(t)-\bar p_a^{k}(t)
    =
    \sum_{b\in\mathcal D(a)\cap\mathcal M_{\mathcal N}}
    r_b(t)
    \sum_h
    \left(
        \theta_{\nu(b),h,k}-\theta_{n,h,k}
    \right)
    w_b^h(t).
    \label{eq:app_pressure_discrepancy}
\end{equation}
For any real movement $c\in\mathcal M_{\mathcal N}$, define $W_c\triangleq\max_hW_c^h$, so that $W_a$ and $W_b$ below denote the corresponding constituent-state bounds for movements $a$ and $b$. Since the $\ell_1$ distance between two simplex vectors is at most two,
\begin{equation}
    \left|
        p_{n,a}^{\theta}(t)-\bar p_a^{k}(t)
    \right|
    \le
    2
    \sum_{b\in\mathcal D(a)\cap\mathcal M_{\mathcal N}}
    r_b(t)W_b
    \le
    2
    \sum_{b\in\mathcal M_{\mathcal N}}W_b
    \triangleq
    E_a.
    \label{eq:app_pressure_error_bound}
\end{equation}
Here, $E_a$ is a finite uniform bound on the local-to-network pressure discrepancy for movement $a$.

Define the network sum of local proof-equivalent objectives
\[
    \Phi_k^{L}(\bm s,t)
    \triangleq
    \sum_{n\in\mathcal N}
    \sum_{a\in\mathcal M_n}
    p_{n,a}^{\theta}(t)\bar d_a(\bm s,t).
\]
The superscript $L$ indicates that $\Phi_k^{L}$ is assembled from these local proof-equivalent objectives. Using~\eqref{eq:app_pressure_error_bound} and $0\le\bar d_a\le c_a$,
\begin{equation}
    \left|
        \Phi_k^{L}(\bm s,t)
        -
        \Psi_k(\bm s,t)
        +
        C_{\mathcal F}(t)
    \right|
    \le
    \sum_{a\in\mathcal M_{\mathcal N}}c_aE_a
    \triangleq
    B_\theta,
    \label{eq:app_objective_composition_error}
\end{equation}
where $B_\theta$ is the resulting finite objective-level composition-discrepancy bound and $C_{\mathcal F}(t)\triangleq\sum_{a\in\mathcal M_{\mathcal F}}\bar p_a^{k}(t)\bar d_a(t)$ is independent of the candidate real-intersection phases. Equivalently, $C_{\mathcal F}(t)$ cancels from every comparison between two feasible real-intersection signal states.

At $t_k$, Proposition~\ref{prop:normalized_composition_equivalence} implies that the implemented phase maximizes the corresponding local proof-equivalent objective over $\mathcal P_n^{\mathrm{adm}}(t_k)$. For any constituent,
\begin{equation}
    |p_a^h(t_k)|
    \le
    W_a
    +
    \sum_{b\in\mathcal D(a)\cap\mathcal M_{\mathcal N}}
    r_b(t_k)W_b
    \le
    W_a
    +
    \sum_{b\in\mathcal M_{\mathcal N}}W_b
    \triangleq
    \bar P_a.
    \label{eq:app_constituent_pressure_bound}
\end{equation}
Because $\bm\theta_{n,k}\in\Delta_H$,
$|p_{n,a}^{\theta}(t_k)|\le\bar P_a$, where $\bar P_a$ is the finite bound on the magnitude of the constituent pressure for movement $a$ defined in~\eqref{eq:app_constituent_pressure_bound}. Hence every local raw phase objective satisfies
\begin{equation}
    \left|
        \sum_{a\in\mathcal M_n}
        p_{n,a}^{\theta}(t_k)
        \bar d_a(\bm s_n^\phi,t_k)
    \right|
    \le
    \sum_{a\in\mathcal M_n}c_a\bar P_a
    \triangleq
    B_n^{\mathrm{loc}}.
    \label{eq:app_local_objective_bound}
\end{equation}
Here, $B_n^{\mathrm{loc}}$ is a finite bound on the magnitude of the local proof-equivalent phase objective at intersection $n$. Therefore restricting the maximization from $\mathcal P_n$ to the nonempty $\mathcal P_n^{\mathrm{adm}}(t_k)$ can reduce the local proof-equivalent objective by at most $B_{\mathrm{red},n}\triangleq2B_n^{\mathrm{loc}}$, where $B_{\mathrm{red},n}$ bounds the loss caused by the maximum-red phase restriction at intersection $n$. Summing over intersections gives the finite network-level bound $B_{\mathrm{red}}\triangleq\sum_nB_{\mathrm{red},n}<\infty$, and
\begin{equation}
    \Phi_k^{L}(\bm s_k^*,t_k)
    \ge
    \max_{\bm s\in\mathcal S}
    \Phi_k^{L}(\bm s,t_k)
    -
    B_{\mathrm{red}}.
    \label{eq:app_maxred_local_gap}
\end{equation}
Combining~\eqref{eq:app_objective_composition_error} and~\eqref{eq:app_maxred_local_gap},
\begin{equation}
    \Psi_k(\bm s_k^*,t_k)
    \ge
    \max_{\bm s\in\mathcal S}
    \Psi_k(\bm s,t_k)
    -
    2B_\theta-B_{\mathrm{red}}.
    \label{eq:app_epoch_approximate_mp}
\end{equation}

For the within-interval comparison, write the analytical objective as
\[
    \Psi_k(\bm s,t)
    =
    \Psi_k^{\mathcal N}(\bm s,t)
    +
    C_{\mathcal F}(t),
\]
where
\[
    \Psi_k^{\mathcal N}(\bm s,t)
    \triangleq
    \sum_{a\in\mathcal M_{\mathcal N}}
    \bar p_a^{k}(t)\bar d_a(\bm s,t)
\]
denotes the real-movement component of the analytical objective. The source term $C_{\mathcal F}(t)$ is independent of $\bm s$ and therefore cancels from the maximization. Since all constituent states on real movements are uniformly bounded and all candidate real-movement discharges are bounded, there exists a finite $B_\Delta$ such that
\begin{equation}
    \left|
        \Psi_k^{\mathcal N}(\bm s,t)
        -
        \Psi_k^{\mathcal N}(\bm s,t_k)
    \right|
    \le
    B_\Delta,
    \qquad
    \forall\bm s\in\mathcal S,\quad t\in I_k.
    \label{eq:app_interval_objective_variation}
\end{equation}
Applying this bound to the selected and maximizing signal states yields
\begin{equation}
    \Psi_k(\bm s_k^*,t)
    \ge
    \max_{\bm s\in\mathcal S}
    \Psi_k(\bm s,t)
    -
    2B_\theta-B_{\mathrm{red}}-2B_\Delta.
    \label{eq:app_interval_approximate_mp}
\end{equation}
Thus Proposition~\ref{prop:bounded_error_mp} holds with
\[
    B=2B_\theta+B_{\mathrm{red}}+2B_\Delta<\infty.
\]

\subsection{Proof of Lemma~\ref{lem:fixed_action_drift}}
\label{app:proof_fixed_action_drift}

Since $\bar{\bm\lambda}\in\bm\Lambda^\circ$, Definition~\ref{def:strict_stabilizable_region} provides an admissible reference policy $\sigma^{\mathrm{ref}}$, a matched Lyapunov functional $V^{\mathrm{ref}}$, a uniformly bounded residual $\xi^{\mathrm{ref}}$, and finite constants $K_{\mathrm{ref}}$, $K_0$, and $\epsilon>0$. By~\eqref{eq:reference_matched_drift},
\[
    (\bm w^{\mathrm{ref}}(t))^\top\bar{\bm\lambda}
    -
    \Psi^{\mathrm{ref}}(\bm s^{\mathrm{ref}}(t),t)
    =
    \mathcal L_{\sigma^{\mathrm{ref}}}V^{\mathrm{ref}}(t)
    -
    \xi^{\mathrm{ref}}(t).
\]
Combining this identity with the strict drift condition and $|\xi^{\mathrm{ref}}(t)|\le K_{\mathrm{ref}}$ gives
\begin{equation}
    (\bm w^{\mathrm{ref}}(t))^\top\bar{\bm\lambda}
    -
    \Psi^{\mathrm{ref}}(\bm s^{\mathrm{ref}}(t),t)
    \le
    K_0+K_{\mathrm{ref}}
    -
    \epsilon
    \sum_{a\in\mathcal M_{\mathcal F}}z_a(t).
    \label{eq:app_reference_objective_base}
\end{equation}
The reference and composite states use the same physical source queues, and $\bar{\bm\lambda}$ has zero entries on real movements. Hence
\[
    (\bm w^{\mathrm{ref}}(t))^\top\bar{\bm\lambda}
    =
    (\bar{\bm w}^{k}(t))^\top\bar{\bm\lambda}.
\]
For the same reference action, the difference between $\Psi^{\mathrm{ref}}$ and $\Psi_k$ contains only real-movement state terms. These states and candidate discharges are uniformly bounded, so there exists $K_{\mathrm{cmp}}<\infty$ such that
\begin{equation}
    \left|
        \Psi^{\mathrm{ref}}(\bm s^{\mathrm{ref}}(t),t)
        -
        \Psi_k(\bm s^{\mathrm{ref}}(t),t)
    \right|
    \le
    K_{\mathrm{cmp}}.
    \label{eq:app_reference_objective_difference}
\end{equation}
Therefore
\begin{equation}
    (\bar{\bm w}^{k}(t))^\top\bar{\bm\lambda}
    -
    \Psi_k(\bm s^{\mathrm{ref}}(t),t)
    \le
    K_0+K_{\mathrm{ref}}+K_{\mathrm{cmp}}
    -
    \epsilon
    \sum_{a\in\mathcal M_{\mathcal F}}z_a(t).
    \label{eq:app_reference_objective_bound}
\end{equation}
Proposition~\ref{prop:bounded_error_mp} gives
\[
    \Psi_k(\bm s_k^*,t)
    \ge
    \Psi_k(\bm s^{\mathrm{ref}}(t),t)-B.
\]
Combining this relation with~\eqref{eq:service_weighted_drift} gives
\[
    \mathcal L_{\bm s_k^*}V_k(t)
    \le
    K_0+K_{\mathrm{ref}}+K_{\mathrm{cmp}}+B+K_V
    -
    \epsilon
    \sum_{a\in\mathcal M_{\mathcal F}}z_a(t),
\]
which proves~\eqref{eq:fixed_action_drift}.

\subsection{Proof of Lemma~\ref{lem:bounded_composition_jump}}
\label{app:proof_composition_jump}

At a decision epoch, the turning-ratio estimate and composition action may both be updated, but the physical traffic state and the constituent Lyapunov terms are unchanged. The real-movement quadratic term and all source terms in~\eqref{eq:composite_lyapunov} are independent of the analytical coefficients. Hence
\begin{align}
    V_k(t_k^+)-V_{k-1}(t_k^-)
    =
    \sum_{a\in\mathcal M_{\mathcal N}}
    \sum_h
    \left(
        \theta_{\nu(a),h,k}
        -
        \theta_{\nu(a),h,k-1}
    \right)
    G_a^h(t_k).
    \label{eq:app_switch_jump}
\end{align}
Let $G_a^{\max}\triangleq\max_hG_a^{h,\max}$. Since the $\ell_1$ distance between two simplex vectors is at most two,
\begin{align}
    \left|
        V_k(t_k^+)-V_{k-1}(t_k^-)
    \right|
    &\le
    2
    \sum_{a\in\mathcal M_{\mathcal N}}
    G_a^{\max}
    \nonumber\\
    &\triangleq
    K_{\mathrm{sw}}
    <
    \infty.
    \label{eq:app_switch_bound}
\end{align}

\subsection{Completion of Theorem~\ref{thm:com_mp_stability}}
\label{app:completion_com_mp_stability}

For any integer $m\ge1$, let $t_m=m\Delta$ denote the end of the first $m$ complete decision intervals. Applying the extended drift identity to~\eqref{eq:fixed_action_drift} over each interval and adding the update jumps bounded by~\eqref{eq:app_switch_bound} gives
\begin{align}
    \mathbb E[V_m(t_m^+)]
    -
    \mathbb E[V_0(t_0^+)]
    \le{}&
    m(K\Delta+K_{\mathrm{sw}})
    -
    \epsilon
    \int_0^{t_m}
    \mathbb E
    \left[
        \sum_{a\in\mathcal M_{\mathcal F}}z_a(t)
    \right]dt.
    \label{eq:app_telescoping}
\end{align}
Because $V_m(t_m^+)\ge0$ and $t_m=m\Delta$,
\begin{equation}
    \frac{1}{t_m}
    \int_0^{t_m}
    \mathbb E
    \left[
        \sum_{a\in\mathcal M_{\mathcal F}}z_a(t)
    \right]dt
    \le
    \frac{K}{\epsilon}
    +
    \frac{K_{\mathrm{sw}}}{\epsilon\Delta}
    +
    \frac{\mathbb E[V_0(t_0^+)]}{\epsilon t_m}.
    \label{eq:app_source_average}
\end{equation}
Taking $m\to\infty$ proves a finite long-run average source queue. The same bound extends to arbitrary $T$ because the final partial interval has length less than $\Delta$. Real-movement counts are uniformly bounded by finite storage, completing the proof.

\subsection{Proof of Proposition~\ref{prop:qcv_admissibility}}
\label{app:bounded_cv_travel_time}

We first verify the Lyapunov-term drift relations of the two representative constituents.

For Q-MP, the state and matched Lyapunov term are
\[
    w_a^Q=\frac{z_a}{L_i},
    \qquad
    G_a^Q=\frac{z_a^2}{2L_i}.
\]
For any vehicle-transition event, write $\Delta_{\mathrm e}X=X^+-X^-$ for the event-induced jump of a quantity $X$, where $X^-$ and $X^+$ denote its pre-event and post-event values, respectively; this notation is distinct from the decision-interval duration $\Delta$. Under an absolutely continuous count evolution,
\[
    \frac{dG_a^Q}{dt}
    =
    w_a^Q\dot z_a.
\]
Let $N$ denote the pre-event vehicle count on movement $a$. Under a unit vehicle arrival from state $z_a=N$,
\[
    \Delta_{\mathrm e} G_a^Q
    =
    \frac{(N+1)^2-N^2}{2L_i}
    =
    w_a^Q
    +
    \frac{1}{2L_i},
\]
while under a unit departure,
\[
    \Delta_{\mathrm e} G_a^Q
    =
    \frac{(N-1)^2-N^2}{2L_i}
    =
    -w_a^Q
    +
    \frac{1}{2L_i}.
\]
Hence Q-MP satisfies~\eqref{eq:constituent_potential_drift} with a finite remainder constant.

For CV-MP, use
\[
    w_a^{CV}
    =
    \sum_{v\in\mathcal J_a^{CV}}
    \frac{\tau_v}{\bar\tau_a},
    \qquad
    G_a^{CV}=z_aw_a^{CV}.
\]
Between movement-transition events, $z_a$ and the connected-vehicle set are fixed. Using $|\mathcal J_a^{CV}|$ for the number of CVs currently contributing to $w_a^{CV}$,
\begin{equation}
    \dot G_a^{CV}
    =
    z_a\dot w_a^{CV}
    =
    \frac{z_a|\mathcal J_a^{CV}|}{\bar\tau_a}
    \le
    \frac{(z_a^{\max})^2}{\bar\tau_a}.
    \label{eq:app_cv_between_event}
\end{equation}
Now consider a vehicle departure from movement $a$. Let the pre-event state be $(z_a,w_a^{CV})=(N,W)$. If the departing vehicle is a CV, let $\vartheta_v=\tau_v/\bar\tau_a$, and otherwise set $\vartheta_v=0$. Then
\begin{align}
    \Delta_{\mathrm e} G_a^{CV}
    &=
    (N-1)(W-\vartheta_v)-NW
    \nonumber\\
    &=
    -W-(N-1)\vartheta_v
    \le
    -W
    =
    w_a^{CV}\Delta_{\mathrm e} z_a.
    \label{eq:app_cv_departure}
\end{align}
For a vehicle entering movement $a$, the elapsed-time clock on the new movement starts from zero. If the pre-event state is $(N,W)$,
\begin{equation}
    \Delta_{\mathrm e} G_a^{CV}
    =
    (N+1)W-NW
    =
    W
    =
    w_a^{CV}\Delta_{\mathrm e} z_a.
    \label{eq:app_cv_arrival}
\end{equation}
Together with~\eqref{eq:app_cv_between_event}, these event relations establish~\eqref{eq:constituent_potential_drift} for CV-MP with a finite upper remainder constant determined by finite storage and $\bar\tau_a$.

We next establish the uniform real-movement bounds used by the adaptive composition. Consider a CV $v$ entering real movement $a$. The non-control running component of its link travel time is uniformly bounded by Assumption~\ref{ass:recoverable_cv}. By finite storage and FIFO, at most $z_a^{\max}$ vehicles can be ahead of $v$. The maximum-red rule guarantees a serving activation within a uniformly bounded time whenever $a$ remains occupied, so $m_a$ serving activations occur within a finite horizon determined by $m_a$, $T_{\nu(a),\phi(a)}^{red,\max}$, and the decision interval $\Delta$. By Assumption~\ref{ass:recoverable_cv}, those $m_a$ activations discharge at least $\underline d_a$ vehicles whenever $a$ remains occupied.

Define $N_a=\left\lceil z_a^{\max}/\underline d_a\right\rceil+1$. After at most $N_a$ blocks of $m_a$ serving-phase activations, the cumulative discharge exceeds the maximum vehicle mass that can remain ahead of $v$, so $v$ must cross the stop line under FIFO. Hence its control-delay component is bounded by a finite constant depending on $N_a$, $m_a$, $T_{\nu(a),\phi(a)}^{red,\max}$, and $\Delta$, with bounded phase-transition time absorbed into the same constant. Adding the bounded running component gives a finite uniform link-travel-time bound $\tau_a^{\max}$.

It follows that
\[
    w_a^{CV}(t)
    \le
    z_a^{\max}
    \frac{\tau_a^{\max}}{\bar\tau_a}
    =
    W_a^{CV}
    <\infty,
\]
and
\[
    G_a^{CV}(t)
    =
    z_a(t)w_a^{CV}(t)
    \le
    z_a^{\max}W_a^{CV}
    <\infty.
\]
These bounds provide the finite constants required for the local-to-network composition bound in Proposition~\ref{prop:bounded_error_mp} and the analytical-composition update bound in Lemma~\ref{lem:bounded_composition_jump}. Together with the event-level drift relations above, the Q-MP and CV-MP states satisfy Assumption~\ref{ass:constituent_admissibility}.

\begin{figure}
    \centering
    \includegraphics[width=0.5\textwidth]{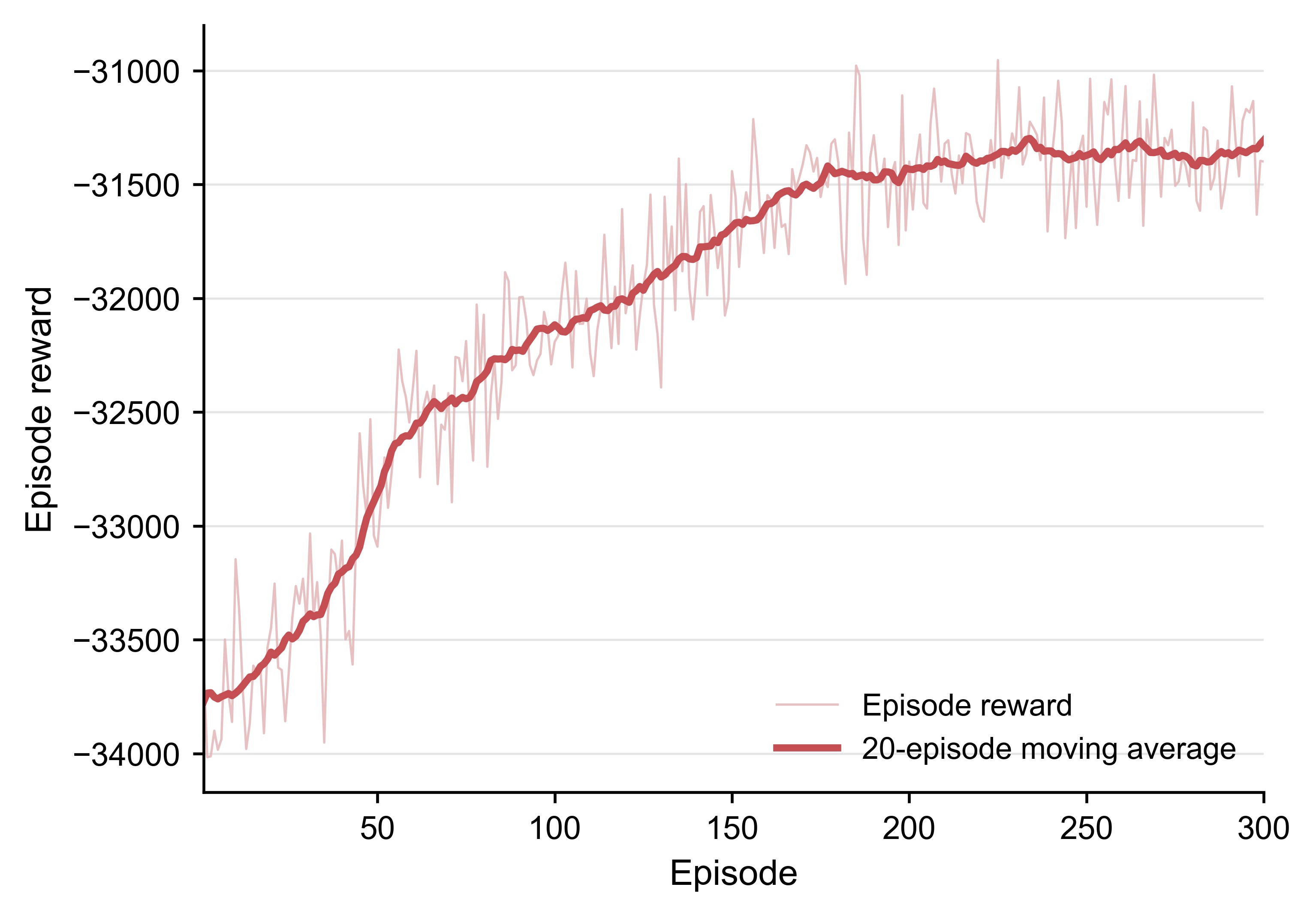}
    \caption{Training reward of Com-MP.}
    \label{FIG:train_reward}
\end{figure}

\begin{figure}
    \centering
    \includegraphics[width=0.9\textwidth]{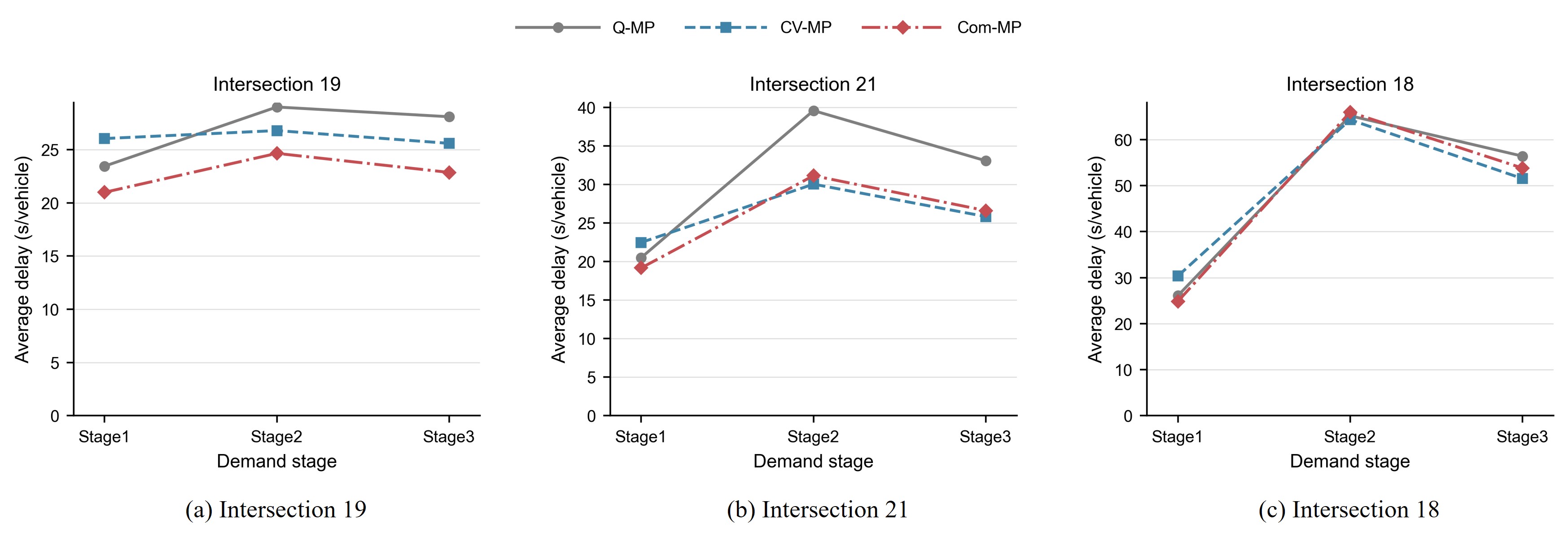}
    \caption{Delay performance at representative intersections across the three traffic stages.}
    \label{FIG:three_intersections}
\end{figure}

\section{Training convergence of Com-MP}
\label{app:training_convergence}

Figure~\ref{FIG:train_reward} reports the episode reward and its 20-episode moving average during training. The reward increases rapidly in the early stage and then improves more gradually before approaching a stable level, indicating convergence of the learned composition policy under the adopted training configuration.

\section{Representative intersection-level performance patterns}
\label{app:representative_intersections}

To complement the intersection-level heterogeneity analysis in Section~\ref{sec:intersection_heterogeneity}, Fig.~\ref{FIG:three_intersections} illustrates three representative local performance patterns across the three traffic stages. At Intersection~19, Q-MP performs better than CV-MP in Stage~1, whereas CV-MP performs better in Stages~2 and~3; Com-MP achieves the lowest delay in all three stages. At Intersection~21, the better constituent also changes across stages, while Com-MP is best in Stage~1 and remains close to CV-MP thereafter. At Intersection~18, Com-MP performs best in Stage~1, whereas CV-MP is slightly better in the later stages. These examples further illustrate that the locally preferred constituent is not fixed across locations or traffic conditions and that the benefit of adaptive composition is correspondingly heterogeneous.

\bibliographystyle{cas-model2-names}
\bibliography{citation}

\end{document}